\newtheorem{theorem}{Theorem}[section]
\newtheorem{lemma}[theorem]{Lemma}
\newtheorem{proposition}[theorem]{Proposition}
\newtheorem{corollary}[theorem]{Corollary}
\newtheorem*{theorem*}{Theorem}
\theoremstyle{remark}
\newtheorem{remark}[theorem]{Remark}
\newtheorem{definition}[theorem]{Definition}
\newtheorem{example}[theorem]{Example}
\newcommand{\N}{\mathbb{N}}
\newcommand{\R}{\mathbb{R}}
\newcommand{\C}{\mathbb{C}}
\newcommand{\vp}{\varphi}
\newcommand{\norm}[1]{\left\lVert #1 \right\rVert}
\DeclareMathOperator{\Span}{span}
\title{A universal Representation for quantum commuting correlations}
\author{Roy Araiza}
\author{Travis Russell}
\author{Mark Tomforde}
\address{Department of Mathematics \& IQUIST, University of Illinois, Urbana-Champaign, IL, 61801 \\ USA}
\email{raraiza@illinois.edu}
\address{Department of Mathematics, Dartmouth College, Hanover, NH, 03755 \\ USA}
\email{travis.b.russell@dartmouth.edu}
\address{Department of Mathematics \\ University of Colorado \\ Colorado Springs, CO, 80918 \\USA}
\email{mtomford@uccs.edu}
\subjclass[2020]{81P40, 46L07}
\thanks{The first author was an Andrews Fellow supported by the Department of Mathematics, Purdue University.  The third author was supported by a grant from the Simons Foundation (\#527708 to Mark Tomforde)}
\begin{document}
\maketitle

\begin{abstract}
We explicitly construct an Archimedean order unit space whose state space is affinely isomorphic to the set of quantum commuting correlations. Our construction only requires fundamental techniques from the theory of order unit spaces and operator systems. Our main results are achieved by  characterizing when a finite set of positive contractions in an Archimedean order unit space can be realized as a set of projections on a Hilbert space.
\end{abstract}

\section{Introduction}

The past decade has witnessed a tremendous surge of interest in the theory of quantum correlations~---~probability distributions arising from independent measurements of entangled quantum systems. About ten years ago, several authors (e.g., \cite{junge2011connes}, \cite{fritz2012tsirelson}, and \cite{ozawa2013connes}) uncovered deep connections between the nearly fifty-year-old Connes' embedding problem \cite{connes1976classification} in operator algebras and the weak Tsirelson problem \cite{cirel1980quantum} concerning quantum correlations. In subsequent years, the literature on quantum correlations expanded rapidly as mathematicians, physicists, and computer scientists worked in tandem to address the Tsirelson's problems. This flurry of activity yielded several important results, including the non-closure of the set of finite-dimensional quantum correlations \cite{slofstra2019set} and the equality of the complexity classes $\operatorname{MIP}^*$ and $\operatorname{RE}$ established in \cite{ji2020mip}. The latter of these two results implies a negative solution to Connes' embedding problem.

Much of the literature on correlation sets focuses on distinguishing two classes of correlations: the class of finite-dimensional quantum correlations, denoted $C_q(n,k)$, and the class of quantum commuting correlations, denoted $C_{qc}(n,k)$, where $n$ and $k$ are parameters that denote the number of experiments and the number of outcomes, respectively, in a measurement scenario. While these sets are known to be convex and satisfy $C_q(n,k) \subseteq C_{qc}(n,k)$, a detailed description of their geometry has only been obtained in certain restrictive scenarios (e.g., \cite{Goh_et_al_2018}). Indeed, a recent preprint \cite{Fu_Miller_Slofstra_2021_preprint} shows that the problem of determining whether or not a given probability distribution belongs to the set of quantum correlations is undecidable. Consequently, new descriptions of the quantum correlation sets beyond their original definitions are valuable. One example of such a description is found in \cite{Navascu_s_2008}, which characterized the set $C_{qc}(n,k)$ as the set of probability distributions that can be certified by a certain infinite hierarchy of semidefinite programs. Another example can be found in \cite{lupini2020perfect}, where the set $C_{qc}(n,k)$ and the closure of the set $C_q(n,k)$ are separately identified with the state spaces of certain tensor products of finite-dimensional operator systems. These operator systems arise as subsystems of non-amenable universal group C*-algebras.

In this paper, we provide a new description for the set of quantum commuting correlations. Specifically, we explicitly construct a finite-dimensional ordered vector space whose state space is affinely isomorphic to the set $C_{qc}(n,k)$ of all quantum commuting correlations (see Theorem \ref{thm: qc correlation characterization}). This is achieved using techniques from the theory of operator systems. Moreover, our construction proceeds without reference to either universal C*-algebras or to the hierarchy of operator system tensor products from \cite{kavruk2011tensor}. Instead, we build upon previous work of the first two authors \cite{araiza2020abstract} which abstractly characterized projections as elements of operator systems. Our construction is achieved by first characterizing when a unital ordered vector space with a finite set of positive contractions $\{p_1, p_2, \dots, p_n\}$ can be realized as a subspace of the bounded operators on a Hilbert space $H$ such that each vector in $\{p_1, p_2, \dots, p_n\}$ is a projection on $H$ (see Theorem \ref{thm: Multi projections AOU characterization}). We then construct, for each $n$ and $k$, an Archimedean order unit space $(\mathcal{V}_{ns}, D_{ns},e_{ns})$ whose state space is affinely isomorphic to the set of nonsignalling correlations $C_{ns}(n,k)$ (see Theorem \ref{Characterize non-signalling correlations}). Using our results on projections in ordered vector spaces, we complete the positive cone $D_{ns}$ of $\mathcal{V}_{ns}$ to a positive cone $D_{qc}$, which arises from projection-valued measures on a Hilbert space, yielding the following theorem:

\begin{theorem} \label{thm: summary main theorem}
Let $n,k \in \mathbb{N}$ and let $\mathcal{V}_{ns}$ denote the corresponding universal nonsignalling vector space with generators $Q(a,b|x,y)$ where $a,b \in \{1,\dotsc,k\}$, and $x,y \in \{1,\dotsc,n\}$. If $p = \{p(a,b|x,y)\}$ is a correlation, then $p \in C_{qc}(n,k)$ (respectively, $p \in C_{ns}(n,k)$) if and only if there exists a state $\phi:(\mathcal{V}_{ns}, D_{qc}, e_{ns}) \to \mathbb{C}$ (respectively, a state $\phi:(\mathcal{V}_{ns}, D_{ns}, e_{ns}) \to \mathbb{C}$) such that $p(a,b|x,y) = \phi(Q(a,b|x,y))$ for each $a,b \in \{1,\dotsc,k\}$ and $x,y \in \{1,\dotsc,n\}$.
\end{theorem}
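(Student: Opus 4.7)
The statement has two parallel pieces, and the non-signalling half is essentially immediate from Theorem \ref{Characterize non-signalling correlations}: a correlation $p$ lies in $C_{ns}(n,k)$ iff the assignment $Q(a,b|x,y) \mapsto p(a,b|x,y)$ extends to a state of $(\mathcal{V}_{ns}, D_{ns}, e_{ns})$. So I would dispatch this direction in a sentence and spend the real work on the $C_{qc}(n,k)$ equivalence. The entire argument hinges on the construction of the cone $D_{qc}$, which by the paper's outline is obtained by completing $D_{ns}$ using Theorem \ref{thm: Multi projections AOU characterization} applied to a family of generators that must behave like two \emph{commuting} projection-valued measures (one labeled by Alice's outputs $a \in \{1,\dots,k\}$ over inputs $x$, one labeled by Bob's $b$ over $y$), with the product $E_{a,x}F_{b,y}$ represented by $Q(a,b|x,y)$.

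For the forward direction ($p \in C_{qc}$ implies a state exists), I would start from the standard definition: there are a Hilbert space $H$, a unit vector $\xi \in H$, and PVMs $\{E_{a,x}\}_{a=1}^k$ for $x=1,\dots,n$ and $\{F_{b,y}\}_{b=1}^k$ for $y=1,\dots,n$ acting on $H$ with $[E_{a,x}, F_{b,y}] = 0$ and $p(a,b|x,y) = \langle E_{a,x}F_{b,y}\xi,\xi\rangle$. I would define a linear functional $\phi$ on $\mathcal{V}_{ns}$ by $\phi(Q(a,b|x,y)) = \langle E_{a,x}F_{b,y}\xi, \xi\rangle$ and $\phi(e_{ns}) = 1$. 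Because the $E_{a,x}$ commute with the $F_{b,y}$ and each family is a PVM, the products $E_{a,x}F_{b,y}$ are themselves positive contractions on $H$, and the non-signalling relations of $\mathcal{V}_{ns}$ are automatically respected. The key point is that $D_{qc}$ is (by construction) exactly the set of elements whose image under every such commuting-PVM representation is positive, so $\phi$ is positive on $D_{qc}$ and hence a state.

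For the converse, I would take a state $\phi:(\mathcal{V}_{ns}, D_{qc}, e_{ns}) \to \mathbb{C}$ and aim to realize it by a commuting pair of PVMs. The engine here is Theorem \ref{thm: Multi projections AOU characterization}: because $D_{qc}$ was defined using the projection realization criterion applied to the two families of generators (Alice's and Bob's), the theorem produces a unital positive map from $(\mathcal{V}_{ns}, D_{qc}, e_{ns})$ into $B(H)$ for some $H$ that sends each Alice-generator and Bob-generator to an honest projection on $H$, with $Q(a,b|x,y)$ sent to a product of two such projections. The cone's definition should also force the two families to commute, so the image is a pair of commuting PVMs. Applying the vector state on $H$ induced by the GNS-type construction from $\phi$ (or, equivalently, using that $\phi$ factors through this representation) then gives a vector $\xi$ with $p(a,b|x,y) = \langle E_{a,x}F_{b,y}\xi,\xi\rangle$, placing $p$ in $C_{qc}(n,k)$.

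The chief obstacle, and the reason the argument is nontrivial, is the commutativity requirement $[E_{a,x},F_{b,y}] = 0$. Theorem \ref{thm: Multi projections AOU characterization} only guarantees that a specified finite set can be simultaneously represented as projections; getting two such sets that commute requires that $D_{qc}$ be defined so the projection-realization criterion is imposed not just on each family separately but on their joint behavior, encoding that $Q(a,b|x,y)$ simultaneously realizes as $E_{a,x}F_{b,y}$ and as $F_{b,y}E_{a,x}$. I expect this commutativity to be enforced by requiring the generators to satisfy additional order-theoretic identities (e.g., sum relations over $a$ or $b$ that mimic $E_{a,x} = \sum_b E_{a,x}F_{b,y}$) in the definition of $D_{qc}$, and the nontrivial verification is that the representation produced by Theorem \ref{thm: Multi projections AOU characterization} honors these identities strongly enough to yield operator-level commutation, not merely the weak, state-level non-signalling relations already present in $\mathcal{V}_{ns}$.
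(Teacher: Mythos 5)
Your handling of the nonsignalling half and of the forward direction for $C_{qc}$ is essentially the paper's argument: the positivity of your functional on $D_{qc}$ is exactly Proposition \ref{prop: Universal property of D_qc}. One caveat: that positivity is not true ``by construction.'' The cone $D_{qc}$ is defined in Definition \ref{defn: qc cones} as an inductive limit of compression cones, not as the set of elements that are positive in every commuting-PVM representation, so the fact that every such representation is positive on $D_{qc}$ is a theorem (resting on Proposition \ref{prop: multiple projection characterization concrete case}), not a tautology.

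The genuine gap is in the converse, precisely at the step you yourself flag as the crux. You describe $D_{qc}$ as obtained by applying the projection-realization criterion to ``two families of generators (Alice's and Bob's),'' i.e.\ to the marginal operators $E(a|x)$ and $F(b|y)$, and you propose to extract the commutation $[E(a|x),F(b|y)]=0$ by building ``additional order-theoretic identities'' into the cone. Neither matches the construction, and you never actually supply the commutativity argument. In Definition \ref{defn: qc cones} one takes $D_{qc} = \mathcal{D}_{ns}(p_1,\dots,p_N)_1^\infty$ where $\{p_1,\dots,p_N\}$, $N=n^2k^2$, enumerates the \emph{joint} generators $Q_{ns}(a,b|x,y)$ themselves; no relations beyond the nonsignalling ones already present in $\mathcal{V}_{ns}$ are imposed. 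Theorem \ref{thm: Multi projections AOU characterization} (via Corollary \ref{cor: Embedding of v_ns into B(H)}) then gives a unital order embedding sending every $Q_{ns}(a,b|x,y)$ to an honest projection on a Hilbert space, i.e.\ $(\mathcal{V}_{ns},\mathcal{D}_{qc},e_{ns})$ is a quantum commuting operator system in the sense of Definition \ref{defn: qc operator system}, and the conclusion $p\in C_{qc}(n,k)$ is then delegated to Theorem \ref{thm: NS and QC operator systems} (Theorem 6.3 of \cite{araiza2020abstract}). The commutativity you are worried about is a \emph{consequence} of that setup, not an extra hypothesis: since $\sum_{a,b}Q(a,b|x,y)=e$ and each summand is a projection, the $k^2$ projections $Q(a,b|x,y)$ (for fixed $x,y$) are mutually orthogonal; hence $E(a|x)=\sum_c Q(a,c|x,y)$ and $F(b|y)=\sum_d Q(d,b|x,y)$ are projections whose product, expanded term by term, collapses to $Q(a,b|x,y)$ in either order, giving $E(a|x)F(b|y)=Q(a,b|x,y)=F(b|y)E(a|x)$. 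Without this argument (or the citation that contains it), your converse stops short of producing the commuting PVMs and hence of placing $p$ in $C_{qc}(n,k)$.
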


The main difficulty in producing these results lies in characterizing sets of projections as elements of Archimedean order unit spaces (Theorem \ref{thm: Multi projections AOU characterization}). The results of \cite{araiza2020abstract} characterize projections as elements of operator systems by finding conditions on an element $p$ of an operator system $\mathcal{V}$ which guarantee the existence of a unital complete order embedding $\pi: \mathcal{V} \to B(H)$ for which $\pi(p)$ is a projection on $H$. The authors then show that every element $p$ satisfying these conditions is a projection in the C*-envelope of $\mathcal{V}$. Since there is no analogue of the C*-envelope for an Archimedean order unit space, it becomes necessary to find conditions on a finite set of elements $\{p_1, p_2, \dots, p_n\}$ in an operator system $\mathcal{V}$ which guarantee the existence of a unital complete order embedding $\pi: \mathcal{V} \to B(H)$ for which $\pi(p_i)$ is a projection on $H$ for every $i \in \{1,2,\dots,n\}$ (see Theorem \ref{thm: Multi projection operator system characterization}). Moreover, to pass this result from the setting of operator systems to the setting of Archimedean order unit spaces requires the use of inductive limits of matrix orderings (see Definition \ref{defn: union of matrix cones}). This provides enough tools to prove Theorem \ref{thm: Multi projections AOU characterization}, from which our main results follow.

We conclude this introduction with a comparison of our results to those of \cite{lupini2020perfect}. The existence of an AOU space whose state space is affinely isomorphic to the set of quantum commuting correlations follows already from the early work of Kadison in \cite{kadison1951representation}, which detailed the duality between convex sets and AOU spaces. In \cite{lupini2020perfect}, this AOU space is identified with the first matrix level of the operator system $\mathcal{S}(n,k) \otimes_c \mathcal{S}(n,k)$, where $\mathcal{S}(n,k)$ is the operator system spanned by the generators of the $n$-fold free product $G_{n,k}=\mathbb{Z}_k * \dots * \mathbb{Z}_k$ in the corresponding universal free group C*-algebra $C^*(G_{n,k})$, and $\otimes_c$ is the commuting operator system tensor product developed in \cite{kavruk2011tensor}. Our work implies that this AOU space can also be identified with the space $(\mathcal{V}_{ns}, D_{qc}, e_{ns})$ described in Theorem \ref{thm: summary main theorem}. The principle difference between our result and the results of \cite{lupini2020perfect} is that we provide an explicit construction for the cone $D_{qc}$ from elementary order-theoretic ingredients, whereas the operator system $\mathcal{S}(n,k) \otimes_c \mathcal{S}(n,k)$ is constructed using elements of the non-ameanable universal group C*-algebra $C^*(G_{n,k})$ as well as the commuting tensor product, and taking advantage of their universal properties. Group C*-algebras for non-ameanable groups are not easily constructed or well-understood from elementary ingredients. The same can be said of the commuting tensor product of operator systems, even when the operator systems are finite dimensional. Our constructive approach provides a different perspective which may shed more light not only on the set of quantum commuting correlations, and also on the structure of the operator system $S_{n,k}$ and the commuting tensor product $\mathcal{S}(n,k) \otimes_c \mathcal{S}(n,k)$. For a more detailed discussion, we refer the reader to Section \ref{sec: qc correlations as states in AOU spaces}.

The organization of the paper is as follows: Section~\ref{sec: preliminaries} covers preliminary material from the theory of operator systems and Archimedean order unit spaces. Section~\ref{sec: single projections in AOU spaces} discusses Archimedean order unit spaces containing a single non-trivial projection. Section~\ref{sec: multiple projections in AOU spaces} establishes results concerning finite sets of projections in Archimedean order unit spaces. We conclude in Section~\ref{sec: qc correlations as states in AOU spaces} with applications of our results to the theory of quantum correlation sets.

\section{Preliminaries}\label{sec: preliminaries} 

Let $\mathbb{N}, \mathbb{R}$, and $\mathbb{C}$ denote the sets of natural, real and complex numbers, respectively. Given $n \in \mathbb{N}$, we let $[n] := \{1,2, \dots, n\}$. Throughout the paper we will be working in a variety of vector spaces, all of which may be assumed to be vector spaces over the field of complex numbers unless specified otherwise. Given a vector space $\mathcal{V}$, we let $M_n(\mathcal{V})$ denote the vector space of $n \times n$ matrices with entries in $\mathcal{V}$. In particular, $M_n(\mathbb{C})$ denotes the algebra of $n \times n$ matrices over $\mathbb{C}$. By a \textit{$*$-vector space}, we mean a vector space $\mathcal{V}$ equipped with a conjugate linear involution $*: \mathcal{V} \to \mathcal{V}$. The \textit{hermitian} elements of $\mathcal{V}$ are those elements $x$ satisfying $x = x^*$, and we let $\mathcal{V}_h$ denote the real vector space of hermitian elements. If $A = (a_{ij}) \in M_n(\mathcal{V})$, then $A^*$ will denote the conjugate transpose of $A$; i.e., the matrix whose $(i,j)$ entry is $a_{ji}^*$. For each $n \in \mathbb{N}$, we let $I_n \in M_n(\mathbb{C})$ denote the identity matrix and $J_n \in M_n(\mathbb{C})$ denote the matrix with every entry equal to 1. If $A \in M_n(\mathbb{C})$ and $B \in M_m(\mathbb{C})$, we let $A \oplus B \in M_{n+m}(\mathbb{C})$ denote the direct sum; i.e.,
\[ A \oplus B := \begin{pmatrix} A & 0 \\ 0 & B \end{pmatrix}. \]
\noindent We often use the canonical shuffle map $\vp: M_n(\mathbb{C}) \otimes M_m(\mathbb{C}) \to M_m(\mathbb{C}) \otimes M_n(\mathbb{C})$, which acts on elementary tensors via $\vp: A \otimes B \to B \otimes A$. This mapping is a $*$-isomorphism of the C*-algebra $M_{nm}(\mathbb{C})$ and extends to the corresponding shuffle maps on $M_n(\mathcal{V}) \otimes M_m(\mathbb{C})$ and $M_n(\mathbb{C}) \otimes M_m(\mathcal{V})$ via the identifications $M_n(\mathcal{V}) \otimes M_m(\mathbb{C}) \cong M_{nm}(\mathbb{C}) \otimes \mathcal{V} \cong M_n(\mathbb{C}) \otimes M_m(\mathcal{V})$.  

An \textit{ordered $*$-vector space} is a pair $(\mathcal V, C)$ where $\mathcal V$ is a $*$-vector space and $ C \subseteq \mathcal V_h$ is a positive cone; i.e., $C + C \subseteq C$ and $\R^+C \subseteq C$. A cone $C$ is \textit{proper} if $C \cap -C = \{0\}$, and when this occurs we call $(\mathcal V, C)$ a \textit{proper ordered $*$-vector space}. Given any ordered $*$-vector space $(\mathcal V, C)$ we may define $\mathcal{J} := \Span C \cap -C$ and form the proper ordered $*$-vector space $\mathcal{V} / \mathcal{J}$ with the proper ordering $C + \mathcal{J}$. For any proper ordered $*$-vector space $(\mathcal V, C)$, the positive cone $C$ induces a partial ordering on $\mathcal V_h$ by declaring $x \leq y$ if and only if $y - x \in C.$ An element $e \in \mathcal V_h$ is called an \textit{order unit} if for all $v \in \mathcal V_h$ there exists $t >0$ such that $te - v \in C.$ An \textit{Archimedean order unit} is an order unti $e$ that also satisfies the \textit{Archimedean property}: if $v \in \mathcal V$ and if $\epsilon e + v \in \mathcal C$ for all $\epsilon >0$, then $v \in \mathcal C$. A short argument shows that an Archimedean order unit must be an element of $C$. These properties imply that the positive cone majorizes the hermitian elements of $\mathcal V$ and that the positive cone $C$ is Archimedean closed. 
\begin{definition}
An \textit{Archimedean order unit (AOU)} space is a triple $(\mathcal V, C, e)$, where $(\mathcal V, C)$ is a proper ordered $*$-vector space and $e$ is an Archimedean order unit. 
\end{definition}  
\noindent If $e$ is an Archimedean order unit for $(\mathcal{V},C)$ but $C$ fails to be proper, we may form $\mathcal J:= \Span C \cap -C$ and then $(\mathcal{V} / \mathcal{J}, C + \mathcal{J}, e + \mathcal{J})$ is an AOU space (see Proposition~\ref{prop: nonproper operator system}). A linear map $\vp: (\mathcal V, C, e) \to (\mathcal W, D, f)$ between AOU spaces will be called \textit{positive} if $\vp(C) \subseteq D$, and \textit{unital} if $\vp(e) = f$. If $\vp$ is a (unital) linear isomorphism such that $\vp$ and $\vp^{-1}$ are both positive then we call $\vp$ a \textit{(unital) order isomorphism}. Given an AOU space $(\mathcal V, C, e)$ there is a canonical norm  $\| \cdot \|: \mathcal V_h \to [0,\infty)$ defined on the hermitian elements and given by $\norm{v} := \inf \{ t>0: te \pm v \in \mathcal C\}$.  We call this norm the \textit{order norm} associated with $e$. 

A \textit{function system} is defined to be a self-adjoint unital subspace of $C(K)$ for some compact Hausdorff space $K$. Due to the results stemming from \cite{kadison1951representation, paulsen2009vector}, every AOU space $\mathcal V$ may be identified as a concrete function subsystem of the continuous functions on the state space of $\mathcal V.$ In particular, every AOU space may be identified with the continuous affine functions on its state space, and conversely if $K$ is a compact convex subset of a locally convex space, and if $A(K)$ denotes the space of continuous affine functions on $K$, then $K$ is affinely isomorphic to the state space of $A(K)$.

Let $\mathcal V$ be a $*$-vector space. We define a \textit{matrix ordering} to be a sequence $\mathcal C:= \{\mathcal C_n\}_{n \in \N}$, where $\mathcal C_n$ is a cone in $M_n(\mathcal V)_h$ for each $n \in \N$ and, for every $\alpha \in M_{mn}(\mathbb{C})$, the inclusion $\alpha \mathcal{C}_n \alpha^* \subseteq \mathcal{C}_m$ is valid. We call $\mathcal C$ \textit{proper} if each  $\mathcal{C}_n$ is a proper cone. The pair $(\mathcal V, \mathcal C)$ is a \textit{(proper) matrix ordered $*$-vector space} if $\mathcal V$ is a $*$-vector space and $\mathcal C$ is a (proper) matrix ordering. The matrix ordering $\mathcal C$ defines a partial ordering on $M_n(\mathcal V)_h$ for each $n \in \N$ by declaring $x \leq y$ if and only if $y -x \in \mathcal C_n.$ An element $e \in \mathcal V_h$ will be called a \textit{matrix order unit} if $I_n \otimes e$ is an order unit for the ordered $*$-vector space $(M_n(\mathcal{V}), \mathcal{C}_n)$ for each $n \in \N$. It is a fact that $e \in \mathcal C_1$ is an order unit if and only if $e$ is a matrix order unit (see, e.g., \cite[Proposition 2.4]{araiza2020abstract}). If $I_n \otimes e$ is an Archimedean order unit for $(M_n(\mathcal{V}), \mathcal{C}_n)$ for each $n \in \mathbb{N}$, then we call $e$ an \textit{Archimedean matrix order unit}. This latter property ensures that each cone $\mathcal C_n$ is Archimedean closed.

\begin{definition}
An \textit{operator system} is a triple $(\mathcal V, \mathcal C, e)$, where $\mathcal V$ is a $*$-vector space, $\mathcal C$ is a proper matrix ordering, and $e$ is an Archimedean matrix order unit. 
\end{definition}
\noindent When no confusion arises we will simply denote an operator system by $\mathcal V$. If $u: \mathcal V \to \mathcal W$ is a linear map between operator systems, we define the \textit{$n$\textsuperscript{th}-amplification} of $u$ to be the map $u_n:= I_n \otimes u: M_n(\mathcal V) \to M_n(\mathcal W)$ defined by $\sum_{ij} e_ie_j^* \otimes v_{ij} \mapsto e_ie_j^* \otimes u(v_{ij})$, where $\{e_i\}_{i=1}^n \subseteq \C^n$ denotes the canonical column basis vectors. Letting $\mathcal C$ and $\mathcal D$ be the respective proper matrix orderings on $\mathcal V$ and $\mathcal W$, we call the map $u$ \textit{completely positive} if $u_n(\mathcal C_n) \subseteq \mathcal D_n$ for each $n \in \N.$ If $u: \mathcal V \to \mathcal W$ is a  (unital) linear isomorphism such that both $u$ and $u^{-1}$ are completely positive then we say that $u$ is a (unital) complete order isomorphism. When $u$ is a (unital) complete order isomorphism onto its range, we will sometimes call $u$ a \textit{(unital) complete order embedding}. We will identify two operator systems if there exists a unital complete order isomorphism between them.  A \textit{concrete operator system} is defined to be a unital self-adjoint subspace of $B(H)$. A fundamental result from \cite[Theorem 4.4]{choi1977injectivity} states that for any operator system $(\mathcal V, \mathcal C, e)$ then there exists a Hilbert space $H$ and a concrete operator system $\widetilde{\mathcal V} \subseteq B(H)$ such that $\mathcal V$ is unital completely order isomorphic to $\widetilde{\mathcal V}.$ Similar to the case for AOU spaces, it follows that operator systems may be identified with what are known as continuous matrix affine functions acting on the matrix state space of that operator system and the converse noncommutative analogue also holds. For the converse, one may identify objects known as compact matrix convex sets with the matrix state space of the continuous matrix affine functions acting on that matrix convex set. This duality is known as Webster-Winkler duality and was first investigated in \cite{webster1999krein}.

Throughout the manuscript we will be dealing with matrix orderings that are not a priori proper.  For such a matrix ordering, we may always consider a natural quotient of the matrix ordered space such that the quotient is necessarily a proper matrix ordered $*$-vector space. In particular, given any matrix ordered $*$-vector space $(\mathcal V, \mathcal C)$ we consider the quotient $\mathcal V/ \mathcal J,$ where $\mathcal J:= \Span \mathcal C_1 \cap -\mathcal C_1.$ With the natural involution defined on cosets as $(v + \mathcal J)^* := v^* + \mathcal J$ and letting $\mathcal C + \mathcal J:= \{\mathcal C_n + M_n(\mathcal J)\}_{n \in \N}$, it necessarily follows that $(\mathcal V/ \mathcal J, \mathcal C + \mathcal J)$ is a proper matrix ordered $*$-vector space. Given any matrix ordered $*$-vector space $(\mathcal V, \mathcal C)$ with Archimedean matrix order unit $e$, it also follows that $e + \mathcal J$ is an Archimedean matrix order unit for the proper matrix ordered $*$-vector space $(\mathcal V/ \mathcal J, \mathcal C + \mathcal J).$ In particular: 
\begin{proposition}[{\cite[Proposition 4.4]{araiza2020abstract}}] \label{prop: nonproper operator system}
Given any matrix ordered $*$-vector space $(\mathcal V, \mathcal C)$ with Archimedean matrix order unit $e$, the triple $(\mathcal V/ \mathcal J, \mathcal C + \mathcal J, e + \mathcal J)$ is an operator system.
\end{proposition}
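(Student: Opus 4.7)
The plan is to recognize $M_n(\mathcal{J})$ as the null space $\mathcal{C}_n \cap -\mathcal{C}_n$ of the order semi-norm at every matrix level, and then apply the scalar AOU Archimedeanization level-by-level. I would organize the argument in three steps: verify the $*$-structure descends to $\mathcal{V}/\mathcal{J}$, establish the entry-wise equality $M_n(\mathcal{J})_h = \mathcal{C}_n \cap -\mathcal{C}_n$, and combine with the scalar AOU quotient construction to conclude.

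The $*$-structure is routine: $\mathcal{C}_1 \cap -\mathcal{C}_1 \subseteq \mathcal{V}_h$ is $*$-stable, so $\mathcal{J}$ and each $M_n(\mathcal{J})$ are self-adjoint subspaces and the quotient involution is well-defined. Moreover, the Archimedean property of $I_n\otimes e$ immediately gives that $\mathcal{C}_n \cap -\mathcal{C}_n$ is exactly the set where the order semi-norm on $M_n(\mathcal{V})_h$ vanishes, so once the entry-wise identification is proved, properness of $\mathcal{C}_n + M_n(\mathcal{J})$ in the quotient follows formally.

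The entry-wise equality has two directions. For $\mathcal{C}_n \cap -\mathcal{C}_n \subseteq M_n(\mathcal{J})$, compressions by the scalar vectors $e_i$, $e_i \pm e_j$ and $e_i \pm i e_j$ (using the compatibility of the matrix ordering with scalar conjugation, implicit in \cite[Proposition~2.4]{araiza2020abstract}) show that each entry of an element of $\mathcal{C}_n \cap -\mathcal{C}_n$ lies in $\mathcal{C}_1 \cap -\mathcal{C}_1 \subseteq \mathcal{J}$. The reverse inclusion $M_n(\mathcal{J})_h \subseteq \mathcal{C}_n \cap -\mathcal{C}_n$ is the main obstacle. Writing a hermitian element of $M_n(\mathcal{J})$ as a real-linear combination of elementary patterns $E_{ii}\otimes b$, $(E_{ij}+E_{ji})\otimes b$ and $i(E_{ij}-E_{ji})\otimes c$ with $b,c\in \mathcal{C}_1\cap -\mathcal{C}_1$, it suffices to verify the equality for each such pattern. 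The identity
\[
\begin{pmatrix} te & b \\ b & te \end{pmatrix} = \tfrac{1}{2}(te+b)\begin{pmatrix}1 & 1\\1 & 1\end{pmatrix} + \tfrac{1}{2}(te-b)\begin{pmatrix}1 & -1\\-1 & 1\end{pmatrix}
\]
exhibits the left side as a sum of two conjugations of the positive scalars $te\pm b \in \mathcal{C}_1$ by the rank-one positive matrices $(1,\pm 1)^T(1,\pm 1)$, hence it lies in $\mathcal{C}_2$ for every $t>0$. Letting $t\downarrow 0$ and applying Archimedean closedness of $\mathcal{C}_2$ yields $(E_{12}+E_{21})\otimes b \in \mathcal{C}_2\cap -\mathcal{C}_2$, and the analogous factorisation through $(1,\pm i)^T$ handles the skew-hermitian patterns involving $c$; diagonal patterns $E_{ii}\otimes b$ and embeddings into higher $n$ are immediate from the same compatibility.

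With the entry-wise equality in place, applying the standard scalar AOU quotient construction to $(M_n(\mathcal{V}),\mathcal{C}_n,I_n\otimes e)$ for each $n$ makes $\mathcal{C}_n + M_n(\mathcal{J})$ a proper Archimedean closed cone on $M_n(\mathcal{V})/M_n(\mathcal{J}) \cong M_n(\mathcal{V}/\mathcal{J})$ for which $I_n \otimes (e+\mathcal{J})$ is an Archimedean order unit. Assembling these matrix levels shows that $(\mathcal{V}/\mathcal{J}, \mathcal{C}+\mathcal{J}, e+\mathcal{J})$ is an operator system, as claimed.
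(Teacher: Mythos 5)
The paper does not actually prove this proposition; it is quoted verbatim from \cite[Proposition 4.4]{araiza2020abstract}, so there is no in-paper argument to compare yours against. On its own merits, your proof is correct and complete, and it isolates exactly the right key point: the identity $M_n(\mathcal J)_h = \mathcal C_n \cap -\mathcal C_n$, which guarantees that quotienting by the level-one kernel $\mathcal J$ simultaneously performs the order-seminorm quotient at every matrix level (and, since $M_n(\mathcal J)_h \subseteq \mathcal C_n$, makes the Archimedean property of the quotient cones inherit directly from that of $\mathcal C_n$). Your rank-one factorization handling the reverse inclusion is the genuinely nontrivial step and it checks out; note that one does not even need the $te$ perturbation, since $b,-b \in \mathcal C_1$ already give $(1,\pm 1)^*b(1,\pm 1) \in \mathcal C_2$ and $(1,\pm 1)^*(-b)(1,\pm 1) \in \mathcal C_2$, whose appropriate sums produce $\pm(E_{12}+E_{21})\otimes b \in \mathcal C_2$ directly. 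Two small points of precision: in the forward inclusion the compressions show that the \emph{real and imaginary parts} of each entry lie in $\mathcal C_1 \cap -\mathcal C_1$ (modulo diagonal corrections), hence that each entry lies in $\mathcal J = \Span(\mathcal C_1\cap-\mathcal C_1)$ --- an off-diagonal entry need not itself lie in $\mathcal C_1\cap-\mathcal C_1$ since it need not be hermitian; and in the final assembly you should say a word about why $\{\mathcal C_n + M_n(\mathcal J)\}_n$ is still a matrix ordering on the quotient (closure under direct sums and under conjugation by scalar matrices follows at once from $\alpha^*\mathcal C_n\alpha \subseteq \mathcal C_m$ and $\alpha^*M_n(\mathcal J)\alpha \subseteq M_m(\mathcal J)$). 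Neither point affects the validity of the argument.
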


Given an operator system $\mathcal V$, it was shown in \cite{hamana1979injective} that there exists a C*-algebra $\mathcal A$ and a unital complete order embedding $j: \mathcal V \to \mathcal A$ such that $\mathcal A= C^*(j(\mathcal V))$ and $\mathcal A$ satisfies the following universal property: given a pair $(\mathcal D, i)$ where $i: \mathcal V \to \mathcal D$ is a unital complete order embedding and $\mathcal D$ is a (unital) C*-algebra generated by $i(\mathcal V)$, then there exists a unique surjective $*$-homomorphism $\sigma: \mathcal D \to \mathcal A$ such that $\sigma \circ i = j.$ Thus we have the following commutative diagram: 

\[
\begin{tikzcd}
\mathcal D \arrow[dashrightarrow, rd, "\sigma" black] \\
\mathcal V \arrow[u, "i"] \arrow[r, "j" black] & \mathcal A
\end{tikzcd}
\]

\noindent The C*-algebra $\mathcal A$ is called the \textit{C*-envelope} of $\mathcal V$ and we will often denote it by $C_e^*(\mathcal V).$ Given an operator system $\mathcal V$, a pair $(\mathcal D, i)$ consisting of a C*-algebra $\mathcal D$ and a unital complete order embedding $i: \mathcal V \to \mathcal D$ is called a \textit{C*-cover} of $\mathcal V$ when $\mathcal D = C^*(i(\mathcal V)).$ Thus the C*-envelope of an operator system may be viewed as the minimal C*-cover of that operator system.

In \cite{paulsen2011operator}, it was shown that given any AOU space $\mathcal V$ there exists a maximal operator system structure on $\mathcal V$. Given an AOU space $(\mathcal V, C, e)$, for each $n \in \mathbb{N}$ we define
\[
D_n^{max} := \{ a^*va: a \in M_{m,n} \text{ and } v = \oplus_{i=1}^m v_i \text{ for some } m \in \mathbb{N} \text{ and for some } v_1, \ldots, v_m \in C \}.
\]
Then $D^{max} := \{ D_n^{max} \}_{n=1}^\infty$ is a proper matrix ordering on $\mathcal V$ with the property that if $\mathcal P$ is any other matrix ordering on $\mathcal V$ with $\mathcal P_1 = C$, then $D_n^{max} \subseteq \mathcal P_n$ for each $n \in \N.$  In general, the matrix order unit $e$ may fail to be Archimedean for $D^{max}$. Thus, for each $n \in \mathbb{N}$ we define
\[
C_n^{max}:=\{ v \in M_n(\mathcal V)_h: \forall \epsilon >0,\,\, v+ \epsilon(I_n \otimes e) \in D_n^{max}\}
\]
and form the proper matrix ordering $C^{max} := \{ C_n^{max} \}_{n=1}^\infty$. (The process of going from $D_n^{max}$ to $C_n^{max}$ is called \textit{Archimedeanization}; see \cite{paulsen2009vector}.)  The triple $(\mathcal V, C^{max},e)$ is an operator system called the \textit{maximal operator system structure} on $\mathcal V$.  For ease of notation, we shall denote the triple as $\mathcal V_{max}$ when no confusion will arise.

\section{Abstract projections in operator systems and AOU spaces}\label{sec: single projections in AOU spaces}

Given an operator system or an AOU space $\mathcal{V}$ with order unit $e$, and a positive contraction $p \in \mathcal{V}$, we let $p^\perp$ denote the positive contraction $e-p$.

\begin{definition}\label{Defn: abstract compression operator system from one contraction}
Suppose that $(\mathcal{V}, \mathcal{C}, e)$ is an operator system. Given a positive contraction $p \in \mathcal{V}$, we let $\mathcal{C}(p)$ denote the (generally non-proper) matrix ordering on $M_2(\mathcal{V})$ defined as follows: \[
\mathcal C(p)_n:= \{ x \in M_{2n}(\mathcal V)_h: \forall \epsilon>0 \,\,\exists t>0 \,\,\text{such that} \,\, x + \epsilon I_n \otimes (p \oplus p^\perp) + tI_n \otimes (p^\perp \oplus p) \in \mathcal C_{2n}\}.
\]
Let $\mathcal{J}_p := \Span C(p)_1 \cap -C(p)_1$, and define $\pi_p: \mathcal V \to M_2(\mathcal V)/\mathcal J_p$ by \[\pi_p(x)= x \otimes J_2 + \mathcal{J}_p. 
\] We consider $M_2(\mathcal V)/ \mathcal J_p$  with operator system structure $
(M_2(\mathcal{V})/ \mathcal J_p,\mathcal{C}(p) + \mathcal{J}_p, I_2 \otimes e + \mathcal{J}_p).$

\end{definition}

We remark that for each $n \in \mathbb{N}$ we identify  $M_n( M_2(\mathcal{V}) / \mathcal{J}_p)$ with $M_{2n}(\mathcal{V}) / M_n(\mathcal{J}_p)$. Moreover, for each $x \in M_n(\mathcal{V})$, the $n$\textsuperscript{th}-amplification $(\pi_p)_n: M_n(\mathcal{V}) \to M_{2n}(\mathcal{V}) / M_n(\mathcal{J}_p)$ satisfies $(\pi_p)_n(x) = x \otimes J_2 + M_n(\mathcal J_p)$.

\begin{proposition} \label{prop: pi_p is always ucp}
Let $(\mathcal{V}, \mathcal{C}, e)$ be an operator system, and suppose that $p \in \mathcal{V}$ is a positive contraction. Then the map $\pi_p: \mathcal{V} \to M_2(\mathcal{V}) / \mathcal{J}_p$ from Definition \ref{Defn: abstract compression operator system from one contraction} is a unital completely positive map.
\end{proposition}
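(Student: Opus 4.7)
The plan is to verify the two defining properties of a unital completely positive map separately, after reducing each to an explicit membership condition in $\mathcal{C}(p)_n$. From the remark preceding the proposition, $(\pi_p)_n(x) = x \otimes J_2 + M_n(\mathcal{J}_p)$, so complete positivity of $\pi_p$ amounts to showing $x \otimes J_2 \in \mathcal{C}(p)_n$ whenever $x \in \mathcal{C}_n$, while unitality of $\pi_p$ amounts to showing $e \otimes J_2 - I_2 \otimes e = \begin{pmatrix} 0 & e \\ e & 0 \end{pmatrix} \in \mathcal{J}_p$, equivalently that both $\pm \begin{pmatrix} 0 & e \\ e & 0 \end{pmatrix}$ belong to $\mathcal{C}(p)_1$.

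For complete positivity, I would first observe that $x \otimes J_2$ already sits in $\mathcal{C}_{2n}$. Letting $A = \begin{pmatrix} I_n & I_n \\ 0 & 0 \end{pmatrix} \in M_{2n}(\mathbb{C})$, a direct computation yields $A^*(x \oplus 0_n)A = x \otimes J_2$, so since $x \oplus 0_n \in \mathcal{C}_{2n}$ whenever $x \in \mathcal{C}_n$, scalar congruence keeps $x \otimes J_2$ in $\mathcal{C}_{2n}$. Similarly, $p \oplus p^\perp$ and $p^\perp \oplus p$ lie in $\mathcal{C}_2$ as direct sums of positive elements of $\mathcal{C}_1$, so tensoring with $I_n$ produces elements of $\mathcal{C}_{2n}$. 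Summing three positive elements then gives
\[
x \otimes J_2 + \epsilon \, I_n \otimes (p \oplus p^\perp) + t \, I_n \otimes (p^\perp \oplus p) \in \mathcal{C}_{2n}
\]
for every $\epsilon, t > 0$, which immediately yields $x \otimes J_2 \in \mathcal{C}(p)_n$.

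For unitality, given $\epsilon > 0$ one must produce a single $t > 0$ with
\[
M_\pm := \begin{pmatrix} \epsilon p + t p^\perp & \pm e \\ \pm e & \epsilon p^\perp + t p \end{pmatrix} \in \mathcal{C}_2.
\]
I would invoke \cite[Theorem 4.4]{choi1977injectivity} to embed $\mathcal{V}$ as a concrete operator system in some $B(H)$, so that membership in $\mathcal{C}_2$ coincides with operator-positivity in $M_2(B(H))$. There the diagonal entries are bounded below by $\min(\epsilon, t) \cdot e$ and hence are invertible positive operators, so the Schur complement test reduces $M_\pm \geq 0$ to the single inequality $\epsilon p^\perp + tp \geq (\epsilon p + tp^\perp)^{-1}$. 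Because $p$ and $p^\perp$ commute, a direct functional-calculus computation gives
\[
(\epsilon p + t p^\perp)(\epsilon p^\perp + t p) = \epsilon t \, e + (\epsilon - t)^2 \, p p^\perp \geq \epsilon t \, e,
\]
so taking $t = 1/\epsilon$ (or any $t \geq 1/\epsilon$) forces the product to dominate $e$, which is exactly the required Schur inequality. The main obstacle is this unitality step: the asymptotic $t \gtrsim 1/\epsilon$ needed to absorb the off-diagonal $\pm e$ is invisible from the abstract definition of $\mathcal{C}(p)_n$, and its justification rests on passing to a concrete realization together with the Schur complement computation. Complete positivity, by contrast, follows essentially formally from the axioms of matrix-ordered $*$-vector spaces.
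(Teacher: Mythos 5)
Your proof is correct. The complete positivity half is exactly the paper's argument (the paper asserts $x \otimes J_2 \in \mathcal{C}_{2n}$ and then adds the two perturbation terms with $t = \epsilon$; your conjugation $A^*(x \oplus 0_n)A = x \otimes J_2$ just makes the first assertion explicit). The difference is in unitality, which the paper dispatches entirely by citing \cite[Lemma 5.6]{araiza2020abstract}. You correctly reduce unitality to showing $\pm\left(\begin{smallmatrix} 0 & e \\ e & 0\end{smallmatrix}\right) \in \mathcal{C}(p)_1$ and then prove it by passing to a concrete realization in $B(H)$ and running a Schur complement test; the computation $(\epsilon p + t p^\perp)(\epsilon p^\perp + t p) = \epsilon t\, e + (\epsilon - t)^2 pp^\perp \geq \epsilon t\, e$ is right (the two factors commute, so the reduction of $C \geq A^{-1}$ to $CA \geq e$ is legitimate), and $t = 1/\epsilon$ works. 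For comparison, the route this paper takes in the analogous multi-projection statement (Lemma \ref{lem: unital embedding into quotient}) stays entirely abstract: it exhibits the perturbed matrix as $\left(\begin{smallmatrix} \epsilon & \pm 1 \\ \pm 1 & *\end{smallmatrix}\right) \otimes p + \left(\begin{smallmatrix} * & 0 \\ 0 & \epsilon\end{smallmatrix}\right) \otimes p^\perp$ with positive semidefinite scalar matrices, which lands in $\mathcal{C}_2$ by the matrix-ordering axioms alone, with $t = 1 + 1/\epsilon$. Your concrete-representation argument buys a shorter verification at the cost of invoking the Choi--Effros representation theorem; the algebraic decomposition buys independence from any representation and generalizes verbatim to the cones $\mathcal{C}(p_1,\dots,p_N)$ where the bookkeeping of the concrete approach would be heavier. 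Both are valid, and your asymptotic $t \sim 1/\epsilon$ agrees with the paper's.
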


\begin{proof}
Unitality follows from \cite[Lemma 5.6]{araiza2020abstract}. It remains to show complete positivity. Let $x \in \mathcal{C}_n$. Then $x \otimes J_2 \in \mathcal{C}_{2n}$. It follows that for every $\epsilon > 0$ we have
\[ x \otimes J_2 + \epsilon I_n \otimes (p \oplus p^{\perp}) + \epsilon I_n \otimes (p^{\perp} \oplus p) \in \mathcal C_{2n}. \]
Therefore $x \otimes J_2 \in \mathcal{C}(p)_n$. It follows that
\[ (\pi_p)_n(x) = x \otimes J_2 + M_n(\mathcal{J}_p) \in \mathcal{C}(p)_n + M_n(\mathcal{J}_p). \]
So $\pi_p$ is completely positive.
\end{proof}

While Proposition~\ref{prop: pi_p is always ucp} implies that $\pi_p$ is always completely positive, it is not necessarily an injective map in general, much less a complete order embedding. The importance of $\pi_p$ is illustrated by the following theorem.

\begin{theorem}[{\cite[Definition 5.4, Theorem 5.7, Theorem 5.8]{araiza2020abstract}}] \label{thm: main thm AR1.0}
Let $(\mathcal{V}, \mathcal{C}, e)$ be an operator system and suppose that $p \in \mathcal{V}$ is a positive contraction. Then the following statements are equivalent:
\begin{enumerate}
    \item The map $\pi_p: \mathcal V \to M_2(\mathcal V)/ \mathcal{J}_p$ is a complete order embedding. 
    \item There exists a Hilbert space $H$ and a unital complete order embedding $\pi: \mathcal{V} \to B(H)$ such that $\pi(p)$ is a projection in $B(H)$.
    \item The element $p$ is a projection in $C_e^*(\mathcal{V})$.
\end{enumerate}
\end{theorem}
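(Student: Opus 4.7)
\emph{Strategy.} I would prove the equivalence via the chain $(3) \Rightarrow (2) \Rightarrow (3)$, which is an easy application of the universal property of the C*-envelope, together with a separate argument for $(1) \Leftrightarrow (3)$. The nontrivial content lies in $(1) \Rightarrow (3)$.

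\emph{The implications $(2) \Leftrightarrow (3)$.} For $(3) \Rightarrow (2)$, apply the Gelfand--Naimark theorem to obtain a faithful representation $C_e^*(\mathcal{V}) \hookrightarrow B(H)$ and compose it with the canonical embedding $\mathcal{V} \hookrightarrow C_e^*(\mathcal{V})$; the image of $p$ is a projection because $p$ is one in $C_e^*(\mathcal{V})$. For $(2) \Rightarrow (3)$, given $\pi : \mathcal{V} \to B(H)$ with $\pi(p)$ a projection, the C*-algebra $\mathcal{B} := C^*(\pi(\mathcal{V}))$ is a C*-cover of $\mathcal{V}$, so the universal property of $C_e^*(\mathcal{V})$ produces a surjective $*$-homomorphism $\sigma : \mathcal{B} \to C_e^*(\mathcal{V})$ satisfying $\sigma(\pi(x)) = x$ for all $x \in \mathcal{V}$. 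Since $*$-homomorphisms preserve projections, $p = \sigma(\pi(p))$ is a projection in $C_e^*(\mathcal{V})$.

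\emph{The implication $(3) \Rightarrow (1)$: compression argument.} Faithfully represent $\mathcal{V} \subseteq C_e^*(\mathcal{V}) \subseteq B(H)$, so that $p$ is a genuine projection on $H$. Define the isometry $V : H \to H \oplus H$ by $Vh = (ph, p^{\perp} h)$. Using $p^2 = p$ one computes $V^* V = I_H$, $VV^* = p \oplus p^{\perp}$, and $V^*(x \otimes J_2) V = x$ for every $x \in \mathcal{V}$. Consequently the map $\Phi(x) := (p \oplus p^{\perp})(x \otimes J_2)(p \oplus p^{\perp})$ is a unital complete order embedding of $\mathcal{V}$ into $B(H \oplus H)$. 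A Schur-complement argument then identifies $\mathcal{C}(p)_n$ with the pullback of the positive cone of $M_n(B(H \oplus H))$ under the compression $y \mapsto (I_n \otimes (p \oplus p^{\perp}))\, y\, (I_n \otimes (p \oplus p^{\perp}))$: letting $t \to \infty$ in the defining inequality eliminates the $(p^{\perp} \oplus p)$-block via Schur complements, while the $\epsilon$-term enforces Archimedean closure. Hence $\mathcal{J}_p$ is precisely the kernel of the compression, and $\pi_p$ agrees with $\Phi$ under the induced faithful representation of $M_2(\mathcal{V})/\mathcal{J}_p$.

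\emph{The main obstacle: $(1) \Rightarrow (3)$.} Without any hypothesis on $p$, one checks that $p^{\perp} \oplus p \in \mathcal{C}(p)_1 \cap -\mathcal{C}(p)_1$ (choosing $t > 1$ witnesses membership of the negative), so $p^{\perp} \oplus p \in \mathcal{J}_p$ and $p \oplus p^{\perp}$ becomes the unit of $M_2(\mathcal{V})/\mathcal{J}_p$. The plan is to pass to $\mathcal{A} := C_e^*(M_2(\mathcal{V})/\mathcal{J}_p)$, where the relation $p \oplus p^{\perp} = I_{\mathcal{A}}$ combined with the $M_2$-structure inherited from the quotient should yield $p^2 = p$ for $p$ viewed in an appropriate C*-subalgebra of $\mathcal{A}$. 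The principal difficulty is that the C*-envelope is not functorial for operator system morphisms, so the projection identity in $\mathcal{A}$ cannot be transported back to $C_e^*(\mathcal{V})$ by any purely formal argument; instead one must construct an explicit C*-cover of $\mathcal{V}$ inside $\mathcal{A}$ containing $p$ as a projection and then appeal to the minimality of $C_e^*(\mathcal{V})$ to conclude that $p$ is a projection there as well. Making this last step rigorous is the delicate part and is essentially the content of the proofs in \cite{araiza2020abstract}.
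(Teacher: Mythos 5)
This theorem is quoted from \cite{araiza2020abstract} (Definition 5.4, Theorems 5.7 and 5.8) and is not reproved in the present paper, so your attempt has to be judged against the argument in that reference. Your treatment of $(2)\Leftrightarrow(3)$ is correct and is the standard application of the universal property of the C*-envelope. Your sketch of $(3)\Rightarrow(1)$ is also essentially the right argument: with $\mathcal{V}\subseteq C_e^*(\mathcal{V})\subseteq B(H)$ and $p$ a genuine projection, the isometry $Vh=(ph,p^{\perp}h)$ satisfies $V^*(x\otimes J_2)V=x$, and the Schur-complement computation (choosing $t \geq \|c\| + \|b\|^2/\epsilon$ for the off-diagonal block $b$ and lower corner $c$) identifies $\mathcal{C}(p)_n$ with $\bigl\{y\in M_{2n}(\mathcal{V})_h:\bigl(I_n\otimes(p\oplus p^{\perp})\bigr)\,y\,\bigl(I_n\otimes(p\oplus p^{\perp})\bigr)\geq 0\bigr\}$; this is exactly the mechanism behind Lemma~\ref{lem: projection characterization from first paper}, and it does yield that $\pi_p$ is a complete order embedding. (A minor imprecision: $\Phi(e)=p\oplus p^{\perp}$, so $\Phi$ is unital only into the corner $(p\oplus p^{\perp})B(H\oplus H)(p\oplus p^{\perp})\cong B(H)$, which is what the identification with $\pi_p$ actually requires.)

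The genuine gap is $(1)\Rightarrow(3)$ (equivalently $(1)\Rightarrow(2)$), which you do not prove. You state a plan --- pass to $C_e^*\bigl(M_2(\mathcal{V})/\mathcal{J}_p\bigr)$ and extract $p^2=p$ there --- you correctly diagnose why it stalls (the C*-envelope is not functorial, so an identity verified in the envelope of the quotient does not transport back to $C_e^*(\mathcal{V})$), and you then defer to the literature. But this is the substantive direction of the theorem: it is what makes the purely order-theoretic condition $(1)$ a usable \emph{definition} of abstract projection, and it is precisely the content of \cite[Theorems 5.7--5.8]{araiza2020abstract}. Your preliminary observations ($p^{\perp}\oplus p\in\mathcal{J}_p$, so $p\oplus p^{\perp}+\mathcal{J}_p$ is the quotient unit and $\pi_p(p)=(p\oplus 0)+\mathcal{J}_p$, $\pi_p(p^{\perp})=(0\oplus p^{\perp})+\mathcal{J}_p$ are complementary ``corners'' summing to the unit) are correct and are the right starting point, but by themselves they only exhibit two positive contractions summing to the identity in a concrete representation of the quotient, which does not force either to be idempotent. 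Some additional argument is needed to convert the order data encoded in $\mathcal{C}(p)$ into the spectral statement $\operatorname{sp}(p)\subseteq\{0,1\}$ inside a C*-cover, after which your own $(2)\Rightarrow(3)$ finishes. As written, the three conditions are not shown to be equivalent, so the proof is incomplete.
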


\begin{definition}\label{Defn: abstract projection}
We say that a positive contraction $p$ in an operator system $(\mathcal{V}, \mathcal{C}, e)$ is an \textit{abstract projection} if $p$ satisfies one (and hence all) of the conditions in Theorem~\ref{thm: main thm AR1.0}. 
\end{definition}

\begin{remark}
Our notion of an abstract projection above differs slightly from the one in \cite{araiza2020abstract} in that we require that $p$ be a positive contraction, but do not require $\|p\| \in \{0,1\}$. However the requirements that $p$ is a positive contraction and that $\pi_p$ is a complete order embedding are sufficient. Indeed, it can be shown that if $0 < \|p\| < 1$ then $\pi_p(p) = 0$ and therefore $\pi_p$ is not an order embedding. 
\end{remark}

Notice that Condition~(3) of Theorem~\ref{thm: main thm AR1.0} implies that whenever $p$ and $q$ are abstract projections in an operator system $\mathcal{V}$, there always exists a Hilbert space $H$ and a unital complete order embedding $\pi: \mathcal{V} \to B(H)$ such that both $\pi(p)$ and $\pi(q)$ are projections on $H$. Indeed, this can be achieved by applying the Gelfand-Naimark Theorem to the C*-algebra $C_e^*(\mathcal{V})$.

We now turn our attention to projections in AOU spaces. We would like to characterize when a positive contraction $p$ in an AOU space $(\mathcal{V}, C, e)$ is an ``abstract projection'' in the sense that there exists a Hilbert space $H$ and a unital order embedding $\pi: \mathcal{V} \to B(H)$ such that $\pi(p)$ is a projection in $B(H)$. By Theorem~\ref{thm: main thm AR1.0} it suffices to characterize when there exists a matrix ordering $\mathcal{C}$ on $\mathcal{V}$ such that $(\mathcal{V}, \mathcal{C}, e)$ is an operator system, $\mathcal{C}_1 = C$, and $p$ is an abstract projection in $(\mathcal{V}, \mathcal{C}, e)$.

\begin{lemma} \label{lem: compression cones equal AOU case}
Let $(\mathcal{V}, C, e)$ be an AOU space. Suppose that $\mathcal{C}$ is a matrix ordering on $\mathcal{V}$ such that $\mathcal{C}_1 = C$ and $(\mathcal{V}, \mathcal{C}, e)$ is an operator system, and suppose that $p \in \mathcal{V}$ is a positive contraction. Define a matrix ordering $\mathcal{T}$ on $\mathcal{V}$ by 
\[ \mathcal{T}_n := \{ x \in M_n(\mathcal{V}) : x \otimes J_2 \in \mathcal{C}(p)_n \}. \] Then $\mathcal{T}(p) = \mathcal{C}(p)$.
\end{lemma}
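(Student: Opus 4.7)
The plan is to prove the two inclusions $\mathcal{C}(p)_n \subseteq \mathcal{T}(p)_n$ and $\mathcal{T}(p)_n \subseteq \mathcal{C}(p)_n$ separately. The first inclusion reduces to verifying $\mathcal{C}_{2n} \subseteq \mathcal{T}_{2n}$: writing $J_2 = vv^*$ with $v = (1,1)^T \in \mathbb{C}^2$, for any $z \in \mathcal{C}_{2n}$ we have $z \otimes J_2 = (I_{2n} \otimes v)\, z\, (I_{2n} \otimes v)^* \in \mathcal{C}_{4n} \subseteq \mathcal{C}(p)_{2n}$, so $z \in \mathcal{T}_{2n}$. The inclusion $\mathcal{C}_{2n} \subseteq \mathcal{T}_{2n}$ then immediately gives $\mathcal{C}(p)_n \subseteq \mathcal{T}(p)_n$ upon comparing the defining formulas.

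The reverse inclusion hinges on the stronger intermediate claim $\mathcal{T}_{2n} \subseteq \mathcal{C}(p)_n$. Fix $y \in \mathcal{T}_{2n}$, so $y \otimes J_2 \in \mathcal{C}(p)_{2n}$, meaning that for every $\delta > 0$ there exists $s > 0$ with
\[
y \otimes J_2 + \delta I_{2n} \otimes (p \oplus p^\perp) + s I_{2n} \otimes (p^\perp \oplus p) \in \mathcal{C}_{4n}.
\]
Identify $\mathbb{C}^{2n} = \mathbb{C}^n \otimes \mathbb{C}^2$ and $\mathbb{C}^{4n} = \mathbb{C}^n \otimes \mathbb{C}^2 \otimes \mathbb{C}^2$, and define the scalar isometry $V \in M_{4n,2n}(\mathbb{C})$ by $V(e_i \otimes e_k) = e_i \otimes e_k \otimes e_k$. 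A direct entrywise check shows
\[
V^*(y \otimes J_2)V = y \qquad \text{and} \qquad V^*(I_{2n} \otimes q)V = I_n \otimes q \quad \text{for } q \in \{p \oplus p^\perp,\; p^\perp \oplus p\}.
\]
Since compression by a scalar matrix preserves the matrix ordering of any operator system, applying $V^*(\cdot)V$ to the positive element above produces $y + \delta I_n \otimes (p \oplus p^\perp) + s I_n \otimes (p^\perp \oplus p) \in \mathcal{C}_{2n}$, which is exactly $y \in \mathcal{C}(p)_n$.

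With $\mathcal{T}_{2n} \subseteq \mathcal{C}(p)_n$ in hand, I deduce $\mathcal{T}(p)_n \subseteq \mathcal{C}(p)_n$ by absorption. Given $x \in \mathcal{T}(p)_n$ and $\epsilon > 0$, the defining property of $\mathcal{T}(p)_n$ applied at the parameter $\epsilon/2$ produces $t_0 > 0$ such that $y := x + (\epsilon/2) I_n \otimes (p \oplus p^\perp) + t_0 I_n \otimes (p^\perp \oplus p) \in \mathcal{T}_{2n} \subseteq \mathcal{C}(p)_n$, and then the defining property of $\mathcal{C}(p)_n$ applied to $y$ at the parameter $\epsilon/2$ produces $s > 0$ with $x + \epsilon I_n \otimes (p \oplus p^\perp) + (t_0 + s) I_n \otimes (p^\perp \oplus p) \in \mathcal{C}_{2n}$, witnessing $x \in \mathcal{C}(p)_n$.

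The main obstacle is finding the compression $V$. The obvious candidates, such as $I_{2n} \otimes w$ for $w \in \mathbb{C}^2$, collapse $I_{2n} \otimes (p \oplus p^\perp)$ into a scalar multiple of $I_{2n} \otimes (|w_1|^2 p + |w_2|^2 p^\perp)$, which does not reproduce the alternating block structure of $I_n \otimes (p \oplus p^\perp)$. The right idea is instead to correlate the two $\mathbb{C}^2$ factors of $\mathbb{C}^{4n}$---the one from $I_{2n} = I_n \otimes I_2$ and the one from the $M_2(\mathcal{V})$ slot where $(p \oplus p^\perp)$ lives---along their diagonal, which is precisely what the isometry $V$ accomplishes; this correlation is exactly what allows $I_n \otimes (p \oplus p^\perp)$ to be recovered as a scalar compression of $I_{2n} \otimes (p \oplus p^\perp)$.
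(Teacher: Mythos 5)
Your proof is correct and follows essentially the same route as the paper's: the forward inclusion via $\mathcal{C}_{2n} \subseteq \mathcal{T}_{2n}$, and the reverse via an $\epsilon/2$ splitting combined with a scalar compression that collapses $y \otimes J_2$ back to $y$ while converting $I_{2n} \otimes (p \oplus p^{\perp})$ into $I_n \otimes (p \oplus p^{\perp})$. Your single diagonal isometry $V$ is precisely the composite of the paper's canonical shuffle, the permutation $W$, and the upper-left corner compression, so the two arguments coincide up to packaging.
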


\begin{proof}
First, we claim that $\mathcal{T}_n = (\pi_p)_n^{-1}(\mathcal{C}(p)_n + M_n(\mathcal{J}_p))_h$, and hence $\mathcal T$ is a matrix ordering on $\mathcal V$. The inclusion $\mathcal{T}_n \subseteq (\pi_p)_n^{-1}(\mathcal{C}(p)_n + M_n(\mathcal{J}_p))$ is clear by the definition of $\pi_p$. On the other hand, suppose that $x \in (\pi_p)_n^{-1}(\mathcal{C}(p)_n + M_n(\mathcal{J}_p))_h$. Then there exists $y \in M_n(\mathcal{J}_p)$ such that $x \otimes J_2 + y \in \mathcal{C}(p)_n$. Since $x=x^*$ and the elements of $\mathcal{C}(p)_n$ are self-adjoint, $y=y^*$. By \cite[Lemma 4.1 and Lemma 4.2]{araiza2020abstract}, we have $M_n(\mathcal{J}_p)_h = \mathcal{C}(p)_n \cap -\mathcal{C}(p)_n$. Therefore $(x \otimes J_2 + y) + (-y) \in \mathcal{C}(p)_n$. So $(\pi_p)_n^{-1}(\mathcal{C}(p)_n + M_n(\mathcal{J}_p))_h \subseteq \mathcal{T}_n$ and the claim justified.

By Proposition~\ref{prop: pi_p is always ucp}, we know that $\pi_p:\mathcal{V} \to M_2(\mathcal V)/ \mathcal{J}_p$ is completely positive. Therefore whenever $x \in \mathcal{C}_n$, we have $(\pi_p)_n(x) \in \mathcal{C}(p)_n + M_n(\mathcal{J}_p)$. It follows that $\mathcal{C}_n \subseteq \mathcal T_n$. Now suppose that $x \in \mathcal{C}(p)_n$. Then for every $\epsilon > 0$ there exists a $t > 0$ such that \[ x + \epsilon I_n \otimes  (p \oplus p^{\perp}) + tI_n \otimes (p^{\perp} \oplus p) \in \mathcal{C}_{2n}. \] But since $\mathcal{C}_{2n} \subseteq \mathcal T_{2n}$ for all $n$ we conclude that $x \in \mathcal{T}(p)_n$. Hence $\mathcal{C}(p)_n \subseteq \mathcal{T}(p)_n$. 

On the other hand, suppose that $x \in \mathcal{T}(p)_n$. Then for every $\epsilon > 0$ there exists $t > 0$ such that
\[ (\pi_p)_{2n}(x + \epsilon I_n \otimes (p \oplus p^{\perp}) + t I_n \otimes (p^{\perp} \oplus p)) \in \mathcal{C}(p)_{2n} + M_{2n}(\mathcal{J}_p). \] 
Hence, for every $\epsilon > 0$, there exist $t, r > 0$ such that
\[ \left[ x + \frac{\epsilon}{2} I_n \otimes (p \oplus p^{\perp}) + t I_n \otimes (p^{\perp} \oplus p) \right] \otimes J_2 + \frac{\epsilon}{2} I_{2n} \otimes (p \oplus p^{\perp}) + r I_{2n} \otimes  (p^{\perp} \oplus p) \in \mathcal{C}_{4n}. \]
Applying the canonical shuffle $\vp: M_{2n} \otimes M_2 \to M_2 \otimes M_{2n}$ to the expression above we obtain
\[ J_2 \otimes \left[ x + \frac{\epsilon}{2} I_n \otimes (p \oplus p^{\perp}) + t I_n \otimes (p^{\perp} \oplus p) \right] + \frac{\epsilon}{2} (p \oplus p^{\perp}) \otimes I_{2n}  + r (p^{\perp} \oplus p) \otimes I_{2n} \]
or, in matrix form,
\[ \begin{pmatrix} x & x \\ x & x \end{pmatrix} + \frac{\epsilon}{2} \begin{pmatrix} I_n \otimes (p \oplus p^{\perp}) & I_n \otimes (p \oplus p^{\perp}) \\ I_n \otimes (p \oplus p^{\perp}) & I_n \otimes(p \oplus p^{\perp}) \end{pmatrix} + t \begin{pmatrix} I_n \otimes (p^{\perp} \oplus p) & I_n \otimes (p^{\perp} \oplus p) \\ I_n \otimes (p^{\perp} \oplus p) & I_n \otimes (p^{\perp} \oplus p) \end{pmatrix} \] \[ \quad + \frac{\epsilon}{2} \begin{pmatrix} I_{2n} \otimes p & 0 \\ 0 & I_{2n} \otimes p^{\perp} \end{pmatrix} + r \begin{pmatrix} I_{2n} \otimes p^{\perp} & 0 \\ 0 & I_{2n} \otimes p \end{pmatrix}. \]
Let $W$ be the $4n \times 4n$ scalar permutation matrix which exchanges the $j$\textsuperscript{th} and $(j+2n)$\textsuperscript{th} columns for $j=2,4, \dots, 2n$. Then conjugating the final expression above by $W$ yields
\[ \begin{pmatrix} x & x \\ x & x \end{pmatrix} + \frac{\epsilon}{2} \begin{pmatrix} I_n \otimes (p \oplus p^{\perp}) & I_n \otimes (p \oplus p^{\perp}) \\ I_n \otimes (p \oplus p^{\perp}) & I_n \otimes(p \oplus p^{\perp}) \end{pmatrix} + t \begin{pmatrix} I_n \otimes (p^{\perp} \oplus p) & I_n \otimes (p^{\perp} \oplus p) \\ I_n \otimes (p^{\perp} \oplus p) & I_n \otimes (p^{\perp} \oplus p) \end{pmatrix} \] \[ \quad + \frac{\epsilon}{2} \begin{pmatrix} I_{n} \otimes (p \oplus p^{\perp}) & 0 \\ 0 & I_{n} \otimes (p^{\perp} \oplus p) \end{pmatrix} + r \begin{pmatrix} I_{n} \otimes (p^{\perp} \oplus p) & 0 \\ 0 & I_{n}   \otimes (p \oplus p^{\perp}) \end{pmatrix}. \]
Compressing to the upper left corner of this expression and using the compatibility of $\mathcal{C}$, we see that for every $\epsilon > 0$ there exists $t,r > 0$ such that 
\[ x + \epsilon I_n \otimes (p \oplus p^{\perp}) + (t+r)I_n \otimes (p^{\perp} \oplus p) \in \mathcal{C}_{2n}. \]
Thus $x \in \mathcal{C}(p)_n$. Therefore $\mathcal{T}(p)_n \subseteq \mathcal{C}(p)_n$. We conclude that for every $n \in \N$, $\mathcal{T}(p)_n = \mathcal{C}(p)_n$.
\end{proof}

\begin{theorem} \label{thm: characterize projections in AOU space}
Let $(\mathcal{V},C,e)$ be an AOU space, and suppose that $p \in \mathcal{V}$ is a positive contraction. Then the following statements are equivalent:
\begin{enumerate}
    \item There exists a matrix ordering $\mathcal{C}$ with $\mathcal{C}_1=C$ such that $p$ is an abstract projection in the operator system $(\mathcal{V},\mathcal{C},e)$.
    \item The map $\pi_p: (\mathcal{V}, C) \to (M_2(\mathcal{V}) / \mathcal{J}_{p}, C^{max}(p)_1 + \mathcal{J}_p)$ is an order embedding.
\end{enumerate}
\end{theorem}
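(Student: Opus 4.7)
My plan is to prove the two implications separately. The key fact I use throughout is the universal property of $C^{max}$ (the smallest matrix ordering on $\mathcal{V}$ extending $C$): any matrix ordering $\mathcal{P}$ on $\mathcal{V}$ for which $(\mathcal{V},\mathcal{P},e)$ is an operator system and $\mathcal{P}_1 = C$ must satisfy $C^{max}_n \subseteq \mathcal{P}_n$ for every $n$, from which $C^{max}(p)_n \subseteq \mathcal{P}(p)_n$ and $\mathcal{J}_p \subseteq \mathcal{J}_p^{\mathcal{P}}$ (I use superscripts on $\pi_p$ and $\mathcal{J}_p$ to indicate which matrix ordering defines the compression cone).

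For $(1)\Rightarrow(2)$, suppose $\mathcal{C}$ is as in (1). Theorem \ref{thm: main thm AR1.0} gives that $\pi_p^{\mathcal{C}}$ is a complete order embedding from $(\mathcal{V},\mathcal{C},e)$ into $(M_2(\mathcal{V})/\mathcal{J}_p^{\mathcal{C}}, \mathcal{C}(p)+\mathcal{J}_p^{\mathcal{C}}, I_2\otimes e + \mathcal{J}_p^{\mathcal{C}})$. For the nontrivial direction of (2): if $v \in \mathcal{V}_h$ satisfies $\pi_p(v) \in C^{max}(p)_1 + \mathcal{J}_p$, then $v \otimes J_2 = c + j$ for some $c \in C^{max}(p)_1 \subseteq \mathcal{C}(p)_1$ and $j \in \mathcal{J}_p \subseteq \mathcal{J}_p^{\mathcal{C}}$, so $\pi_p^{\mathcal{C}}(v) \in \mathcal{C}(p)_1 + \mathcal{J}_p^{\mathcal{C}}$, whence $v \in \mathcal{C}_1 = C$ by the embedding property. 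The other direction is immediate: $v \in C$ gives $v \otimes J_2 \in C^{max}_2 \subseteq C^{max}(p)_1$.

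For $(2)\Rightarrow(1)$, I take $\mathcal{C}$ to be the matrix ordering $\mathcal{T}$ supplied by Lemma \ref{lem: compression cones equal AOU case} applied with $\mathcal{C}=C^{max}$; namely $\mathcal{T}_n := \{x \in M_n(\mathcal{V})_h : x \otimes J_2 \in C^{max}(p)_n\}$. That lemma already gives $\mathcal{T}(p) = C^{max}(p)$ and hence $\mathcal{J}_p^{\mathcal{T}} = \mathcal{J}_p$. I verify that $(\mathcal{V},\mathcal{T},e)$ is an operator system with $\mathcal{T}_1 = C$: the equality is hypothesis (2) combined with $v \otimes J_2 \in C^{max}_2 \subseteq C^{max}(p)_1$ for $v \in C$; compatibility of $\mathcal{T}$ follows from $(a^* x a)\otimes J_2 = (a\otimes I_2)^*(x\otimes J_2)(a\otimes I_2)$ together with compatibility of $C^{max}(p)$; the matrix order unit property is inherited from $C^{max}$ since $y \otimes J_2 \in C^{max}_{2n} \subseteq C^{max}(p)_n$ for $y \in C^{max}_n$; the Archimedean property passes through using $e \otimes J_2 \leq 2(I_2 \otimes e)$ in $C^{max}(p)_1$; and propriety of $\mathcal{T}$ at every level follows from properness of $\mathcal{T}_1 = C$ together with injectivity of $\pi_p$ on $\mathcal{V}$ (an easy consequence of (2) applied separately to hermitian and antihermitian parts).

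The main obstacle is the final step: showing $p$ is an abstract projection in $(\mathcal{V},\mathcal{T},e)$. By Theorem \ref{thm: main thm AR1.0}, it suffices to show that $\pi_p^{\mathcal{T}} : (\mathcal{V},\mathcal{T}) \to (M_2(\mathcal{V})/\mathcal{J}_p, C^{max}(p)+\mathcal{J}_p)$ is a complete order embedding. Complete positivity is built into the definition of $\mathcal{T}$. For the reverse direction, given $x \in M_n(\mathcal{V})_h$ with $x \otimes J_2 = c + j$ for some $c \in C^{max}(p)_n$ and some hermitian $j \in M_n(\mathcal{J}_p)$, I must show $x \otimes J_2 \in C^{max}(p)_n$. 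I plan to reduce this to the claim $M_n(\mathcal{J}_p)_h \subseteq C^{max}(p)_n$, which then makes $x \otimes J_2 = c + j$ lie in $C^{max}(p)_n$ as a sum of two positive elements. To establish the claim I write a hermitian $j \in M_n(\mathcal{J}_p)$ as a sum of terms of the form $y \otimes H$ with $y \in \mathcal{J}_{p,h} = C^{max}(p)_1 \cap -C^{max}(p)_1$ and $H$ a hermitian scalar $n \times n$ matrix, decompose each $H$ as a difference $aa^* - bb^*$ of rank-one positives, and use the compatibility identity $(a\otimes I_2)\, y\, (a^*\otimes I_2) = y \otimes aa^*$ together with $\pm y \in C^{max}(p)_1$ to place each summand in $C^{max}(p)_n \cap -C^{max}(p)_n$.
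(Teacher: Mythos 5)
Your proof is correct and follows essentially the same route as the paper: $(1)\Rightarrow(2)$ via the minimality of $C^{max}$ among matrix orderings with first level $C$, and $(2)\Rightarrow(1)$ by pulling back $C^{max}(p)$ along $\pi_p$ and invoking Lemma~\ref{lem: compression cones equal AOU case} with $\mathcal{C}=C^{max}$ to identify the compression cones. The only difference is one of detail rather than strategy: you explicitly verify facts the paper leaves implicit, namely that $(\mathcal{V},\mathcal{T},e)$ is genuinely an operator system and that $M_n(\mathcal{J}_p)_h \subseteq C^{max}(p)_n$ (which the paper uses tacitly when it identifies $\mathcal{T}_n$ with $(\pi_p)_n^{-1}(\mathcal{C}(p)_n + M_n(\mathcal{J}_p))$), and both verifications are sound.
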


\begin{proof}

Suppose that $(1)$ holds and that $\pi_p(x) \in C^{max}(p)_1 + \mathcal{J}_p$. Then for every $\epsilon > 0$ there exists a $t > 0$ such that \[ \begin{pmatrix} x & x \\ x & x \end{pmatrix} + \epsilon (p \oplus p^{\perp}) + t (p^{\perp} \oplus p) \in C_2^{max} \subseteq \mathcal{C}_2. \] It follows that $\pi_p(x) \in \mathcal{C}(p)_1 + \mathcal{J}_p$, and hence $x \in \mathcal{C}_1 = C$. On the other hand, if $x \in C$ then $\pi_p(x) \in C^{max}(p)_1 + \mathcal{J}_p$ by Proposition \ref{prop: pi_p is always ucp}. Hence (2) holds.

Next suppose that $(2)$ holds. Define a matrix ordering $\mathcal{C}$ on $\mathcal V$ by 
\[ \mathcal{C}_n := (\pi_p)_n^{-1}(C^{max}(p)_n + M_n(\mathcal{J}_p)). \]
Since $\pi_p$ is an order embedding, $\mathcal{C}_1 = C$. It remains to show that $\pi_p$ is a complete order embedding on $(\mathcal{V}, \mathcal{C})$. By Lemma \ref{lem: compression cones equal AOU case} we see that $\mathcal{C}(p)_n = C^{max}(p)_n$ for each $n$. Thus, if $(\pi_p)_{n}(x) \in \mathcal{C}(p)_n + M_n(\mathcal{J}_p)$, then $x \in \mathcal{C}_n$, since $\mathcal{C}(p)_n + M_n(\mathcal{J}_p) = C^{max}(p)_n+ M_n(\mathcal{J}_p)$ for each $n$. Therefore $\pi_p$ is a complete order embedding.
\end{proof}

Theorem~\ref{thm: characterize projections in AOU space} has an important limitation distinguishing it from Theorem~\ref{thm: main thm AR1.0}. Suppose that $p$ and $q$ are elements of an AOU space $(\mathcal{V}, C, e)$ both of which satisfy Condition~(2) of Theorem~\ref{thm: characterize projections in AOU space}. Then there exist matrix orderings $\mathcal{C}$ and $\mathcal{T}$ such that $(\mathcal{V}, \mathcal{C}, e)$ and $(\mathcal{V}, \mathcal{T}, e)$ are operator systems with $\mathcal{C}_1 = \mathcal{T}_1 = C$, $p$ is an abstract projection in $(\mathcal{V}, \mathcal{C}, e)$, and $q$ is an abstract projection in $(\mathcal{V}, \mathcal{T}, e)$. However it is not necessary that $\mathcal{C} = \mathcal{T}$ in general, nor is it clear how one could construct a new matrix ordering $\mathcal{R}$ with $\mathcal{R}_1 = C$ for which $p$ and $q$ are both abstract projections in $(\mathcal{V}, \mathcal{R}, e)$. Addressing this difficulty will be the subject of the next section.

\section{Multiple projections in AOU spaces}\label{sec: multiple projections in AOU spaces}

In Section~\ref{sec: single projections in AOU spaces} we described when a single contraction in an AOU space is a projection.  In this section we shall characterize when the elements of a finite set of contractions in an AOU space are projections.  To do this we first consider finite sets of abstract projections in an operator system.

\begin{definition} \label{defn: multiple projection compression cones}
Let $(\mathcal{V},\mathcal{C},e)$ be an operator system, and assume $p_1, p_2, \dots, p_N \in \mathcal{V}$ are positive contractions. For each $k=1,2,\dots,N$ define matrices $P_k^N, Q_k^N \in M_{2^N}(\mathcal{V})$ by
\[ P_k^N := I_{2^{k-1}} \otimes (p_k \oplus p_k^{\perp}) \otimes J_{2^{N-k}} \]
and
\[ Q_k^N := I_{2^{k-1}} \otimes (p_k^{\perp} \oplus p_k) \otimes J_{2^{N-k}} \]
Similarly, we define matrices $\widehat{P}_k^N, \widehat{Q}_k^N \in M_{2^N}(\mathcal{V})$ by
\[ \widehat{P}_k^N := J_{2^{k-1}} \otimes (p_k \oplus p_k^{\perp}) \otimes J_{2^{N-k}} \]
and
\[ \widehat{Q}_k^N := J_{2^{k-1}} \otimes (p_k^{\perp} \oplus p_k) \otimes J_{2^{N-k}}. \]
For each $n \in \mathbb{N}$, we define
\[ \mathcal{C}(p_1, p_2, \dots, p_N)_n \]
to be the set of $x \in M_{n2^N}(\mathcal{V})$ such that $x=x^*$ and for every $\epsilon_1, \epsilon_2, \dots, \epsilon_N > 0$ there exist $t_1, t_2, \dots, t_N > 0$ such that
\begin{equation} x + \sum_{k=1}^N \epsilon_k I_n \otimes P_k^N + \sum_{k=1}^N t_k I_n \otimes Q_k^N \in \mathcal{C}_{n2^N}, \label{eq: cones dependent on multiple contractions}
\end{equation} with the property that if one replaces $\epsilon_i$ with  $\epsilon_i^\prime < \epsilon_i$ then there exists $t_i^\prime > t_i$ such that Equation~\eqref{eq: cones dependent on multiple contractions} still holds. 
We define $\widehat{\mathcal{C}}(p_1, p_2, \dots, p_N)_n$ similarly with $\widehat{P}_k^N$ and $\widehat{Q}_k^N$ playing the role of $P_k^N$ and $Q_k^N$. Note that for each $n \in \N$ the sets $\mathcal C(p_1,\dotsc,p_N)_n$ and $\widehat{C}(p_1,\dotsc,p_N)_n$ are nonempty since if $x \in M_{n2^N}(\mathcal V)^+$ one may choose $t_i = \epsilon_i$ for $i=1,\dotsc,N$. Let $\mathcal{C}(p_1, p_2, \dots, p_N) := \{\mathcal{C}(p_1, p_2, \dots, p_N)_n\}_{n=1}^\infty$, and $\widehat{\mathcal{C}}(p_1, p_2, \dots, p_N):=\{\widehat{\mathcal{C}}(p_1, p_2, \dots, p_N)_n\}_{n=1}^\infty $.
\end{definition}

In the following we make use of the sequences of cones $\mathcal{C}(p_1, \dots, p_N)$ and $\widehat{\mathcal{C}}(p_1, \dots, p_N)$. In the case $N=1$ these sequences coincide and are both equal to the non-proper matrix ordering $\mathcal C(p)$, which was used to characterize single projections in \cite{araiza2020abstract} (cf. Section~\ref{sec: single projections in AOU spaces}). The cones $\widehat{\mathcal{C}}(p_1, \dots, p_N)$ are useful because they are symmetric with respect to the order of the operators $p_1, \dots, p_N$ in the sense that a rearrangement of these operators can be realized by simply applying a canonical shuffle to the corresponding operators $\widehat{P}_i^N$. However we will focus primarily on the cones $\mathcal{C}(p_1, p_2, \dots, p_N)$.

\begin{lemma} \label{lem: p-hats contained in p's}
If $N,n \in \N$, then $\widehat{\mathcal{C}}(p_1, p_2, \dots, p_N)_n \subseteq \mathcal{C}(p_1, p_2, \dots, p_N)_n$. 
\end{lemma}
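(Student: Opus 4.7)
The key observation is the scalar inequality $J_m \leq m\, I_m$ in $M_m(\mathbb{C})^+$ (since $J_m$ has spectrum $\{m,0,\dots,0\}$), so $2^{k-1} I_{2^{k-1}} - J_{2^{k-1}} \in M_{2^{k-1}}(\mathbb{C})^+$ for every $k$. Because tensoring a positive scalar matrix with a positive element of an operator system preserves positivity (via the identity $(C^*C) \otimes B = (C \otimes I)^*(I \otimes B)(C \otimes I)$ and the compatibility of $\mathcal{C}$ with conjugation by scalar matrices), I would conclude that
\[ 2^{k-1} P_k^N - \widehat{P}_k^N = (2^{k-1} I_{2^{k-1}} - J_{2^{k-1}}) \otimes (p_k \oplus p_k^{\perp}) \otimes J_{2^{N-k}} \in \mathcal{C}_{2^N}, \]
and analogously $2^{k-1} Q_k^N - \widehat{Q}_k^N \in \mathcal{C}_{2^N}$. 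Tensoring by $I_n$ on the left promotes these inequalities to level $n 2^N$.

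The plan then is as follows. Fix $x \in \widehat{\mathcal{C}}(p_1, \dots, p_N)_n$ and arbitrary $\epsilon_1, \dots, \epsilon_N > 0$. Set the rescaled parameters $\epsilon_k' := \epsilon_k / 2^{k-1}$ and invoke the defining property of $\widehat{\mathcal{C}}$ to obtain $t_1', \dots, t_N' > 0$ for which
\[ x + \sum_{k=1}^N \epsilon_k' I_n \otimes \widehat{P}_k^N + \sum_{k=1}^N t_k' I_n \otimes \widehat{Q}_k^N \in \mathcal{C}_{n 2^N}. \]
To this I add the positive correction terms $\epsilon_k'(2^{k-1} I_n \otimes P_k^N - I_n \otimes \widehat{P}_k^N)$ and $t_k'(2^{k-1} I_n \otimes Q_k^N - I_n \otimes \widehat{Q}_k^N)$, each of which lies in $\mathcal{C}_{n 2^N}$ by the previous paragraph. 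The hat terms cancel, yielding
\[ x + \sum_{k=1}^N \epsilon_k I_n \otimes P_k^N + \sum_{k=1}^N 2^{k-1} t_k' I_n \otimes Q_k^N \in \mathcal{C}_{n 2^N}, \]
and setting $t_k := 2^{k-1} t_k'$ exhibits $x$ in the defining expression of $\mathcal{C}(p_1, \dots, p_N)_n$ for the prescribed $\epsilon_k$'s.

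The last thing to check is the monotone-dependence clause in the definition of $\mathcal{C}(p_1, \dots, p_N)_n$: if $\epsilon_i$ is replaced by some $\epsilon_i^* < \epsilon_i$ with the other $\epsilon_j$ held fixed, we need $t_i^* > t_i$ keeping the sum in $\mathcal{C}_{n 2^N}$. But this is immediate by transporting through the same rescaling: the diminished parameter $(\epsilon_i^*)' = \epsilon_i^*/2^{i-1} < \epsilon_i'$ produces, via the monotonicity built into $\widehat{\mathcal{C}}$, some $(t_i^*)' > t_i'$ under which the hatted inequality still holds, and then $t_i^* := 2^{i-1} (t_i^*)' > t_i$ carries the witness back to the unhatted side. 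I do not anticipate a real obstacle here: the entire argument is a bookkeeping exploitation of $J_m \leq m I_m$ to convert the outer ``$J$'' factors in $\widehat{P}_k^N, \widehat{Q}_k^N$ into the outer ``$I$'' factors of $P_k^N, Q_k^N$, at the cost of absorbing a factor of $2^{k-1}$ into the scalar parameters.
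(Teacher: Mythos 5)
Your proof is correct and follows essentially the same route as the paper's: both arguments rest on the inequality $J_{2^{k-1}} \leq 2^{k-1} I_{2^{k-1}}$ to dominate $\widehat{P}_k^N$ by $2^{k-1}P_k^N$ (and likewise for the $Q$'s), then rescale $\epsilon_k \mapsto \epsilon_k/2^{k-1}$ and absorb the factor $2^{k-1}$ into the $t_k$'s. Your explicit verification of the monotone-dependence clause in the definition of $\mathcal{C}(p_1,\dots,p_N)_n$ is a welcome extra step that the paper leaves implicit.
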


\begin{proof}
First observe that for each $k \in \N$ we have
\begin{eqnarray}
 I_n \otimes \hat{P}_k^N & = & I_n \otimes J_{2^{k-1}} \otimes (p_k \oplus p_k^{\perp}) \otimes J_{2^{N-k}} \nonumber \\
& \leq & 2^{k-1} I_n \otimes I_{2^{k-1}} \otimes (p_k \oplus p_k^{\perp}) \otimes J_{2^{N-k}} \nonumber \\
& = & 2^{k-1} I_n \otimes P_k^N \nonumber
\end{eqnarray}
and similarly $I_n \otimes \hat{Q}_k^N \leq 2^{k-1} I_n \otimes Q_k^N$. Indeed, these inequalities hold because $J_2 \leq 2I_2$.

Next suppose $x \in \widehat{\mathcal{C}}(p_1, p_2, \dots, p_N)_n$. Then for every $\epsilon_1, \dots, \epsilon_N > 0$ there exist $t_1, \dots, t_N > 0$ such that
\[ x + \sum_{k=1}^{N} \frac{\epsilon_k}{2^{k-1}} I_n \otimes \widehat{P}_k^N + \sum_{k=1}^{N} t_k I_n \otimes \widehat{Q}_k^N \geq 0. \]
Consequently 
\[ x + \sum_{k=1}^{N} \epsilon_k I_n \otimes P_k^N + \sum_{k=1}^{N} 2^{k-1} t_k I_n \otimes Q_k^N \geq 0. \]
We conclude that $x \in \mathcal{C}(p_1, p_2, \dots, p_N)_n$.
\end{proof}

We will make use of the following result from \cite{araiza2020abstract}.

\begin{lemma}[{\cite[Theorem 5.3]{araiza2020abstract}}] \label{lem: projection characterization from first paper}
Let $\mathcal{V} \subseteq B(H)$ be an operator system. Suppose that $p \in \mathcal{V}$ is a projection on $H$. Then for each $x \in \mathcal{V}$ we have that $x \geq 0$ if and only if for every $\epsilon > 0$ there exists $t > 0$ such that $J_2 \otimes x + \epsilon (p \oplus p^{\perp}) + t(p^{\perp} \oplus p) \geq 0$.
\end{lemma}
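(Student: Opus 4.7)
The plan is to handle the two directions separately, with the forward implication being essentially immediate and the reverse implication relying on a clever choice of test vector that kills the $t$-term.

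For the forward direction, suppose $x \geq 0$ in $B(H)$. I would observe that $J_2 = (1,1)^{T}(1,1)$ is rank one and positive on $\mathbb{C}^2$, so $J_2 \otimes x \geq 0$ whenever $x \geq 0$ (directly: $\langle (J_2 \otimes x)(\xi,\eta), (\xi,\eta)\rangle = \langle x(\xi+\eta), \xi+\eta\rangle$). Since $p$ and $p^\perp$ are themselves projections, both $p\oplus p^\perp$ and $p^\perp \oplus p$ are positive in $B(H\oplus H)$. Hence any nonnegative linear combination is positive for all $\epsilon, t > 0$.

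For the reverse direction, I would write the hypothesized positive operator as a $2\times 2$ block matrix on $H\oplus H$,
\[ \begin{pmatrix} x + \epsilon p + t p^\perp & x \\ x & x + \epsilon p^\perp + tp \end{pmatrix} \geq 0, \]
and test its quadratic form against the vector $(p\zeta, p^\perp\zeta)$, where $\zeta\in H$ is arbitrary. The key idea, and the whole point of the choice, is that $p^\perp(p\zeta) = 0$ and $p(p^\perp\zeta) = 0$, so every term multiplied by $t$ vanishes, regardless of how large $t$ had to be chosen. A direct expansion using $p\zeta + p^\perp\zeta = \zeta$ and the orthogonality of $pH$ and $p^\perp H$ then collapses the quadratic form to
\[ \langle x\zeta, \zeta\rangle + \epsilon\bigl(\|p\zeta\|^2 + \|p^\perp\zeta\|^2\bigr) \;=\; \langle x\zeta, \zeta\rangle + \epsilon\|\zeta\|^2. \]

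Since the hypothesis guarantees this quantity is nonnegative for every $\epsilon > 0$, letting $\epsilon\to 0$ yields $\langle x\zeta, \zeta\rangle \geq 0$ for every $\zeta\in H$, so $x\geq 0$ as required. There is no real obstacle in this argument; the only piece of cleverness is noticing that $(p\zeta, p^\perp\zeta)$ is an eigen-like decomposition tailored to annihilate $p^\perp\oplus p$, which is precisely what is needed because the dependence of $t$ on $\epsilon$ is uncontrolled.
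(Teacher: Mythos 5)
Your proof is correct. Note that the paper does not actually prove this lemma; it is imported verbatim from \cite[Theorem 5.3]{araiza2020abstract}, so there is no in-paper argument to compare against. Your reverse direction is the standard compression argument for that result: testing against $(p\zeta, p^{\perp}\zeta)$ is exactly compressing by the isometry $V:\zeta \mapsto (p\zeta, p^{\perp}\zeta)$, which sends $J_2\otimes x$ to $x$, sends $p\oplus p^{\perp}$ to $I$, and annihilates $p^{\perp}\oplus p$ (this is where $p^2=p$ is used), after which the Archimedean limit $\epsilon\to 0$ finishes the proof. All steps check out, including the implicit point that $\langle x\zeta,\zeta\rangle\geq 0$ for all $\zeta$ in a complex Hilbert space already forces $x=x^*$.
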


\begin{proposition} \label{prop: multiple projection characterization concrete case}
Let $N \in \N$ and $\mathcal{V} \subseteq B(H)$ be an operator system and suppose that $p_1, p_2, \dots p_N \in \mathcal{V}$ are all projections on $H$. For $x \in M_n(\mathcal V)$ the following are equivalent:
\begin{enumerate}
    \item $x \in \mathcal C_n.$
    \item $x \otimes J_{2^N} \in \mathcal{C}(p_1, p_2, \dots, p_N)_n$.
    \item $x \otimes J_{2^N} \in \widehat{\mathcal{C}}(p_1, p_2, \dots, p_N)_n$.
\end{enumerate}
\end{proposition}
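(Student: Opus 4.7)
The implications $(1) \Rightarrow (2)$ and $(1) \Rightarrow (3)$ are immediate: when $x \in \mathcal{C}_n$, the elements $x \otimes J_{2^N}$, $I_n \otimes P_k^N$, $I_n \otimes \widehat{P}_k^N$, $I_n \otimes Q_k^N$, and $I_n \otimes \widehat{Q}_k^N$ all lie in $\mathcal{C}_{n2^N}$, so the defining conditions hold trivially for any positive choice of the $t_k$'s (and the monotonicity in $\epsilon_i$ is free, since increasing $t_i$ only adds more positive content). The implication $(3) \Rightarrow (2)$ is exactly Lemma \ref{lem: p-hats contained in p's}. The substantive step is therefore $(2) \Rightarrow (1)$, which I will prove via a single row-matrix compression that generalizes the $N=1$ argument underlying Lemma \ref{lem: projection characterization from first paper}.

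The main device is the row matrix $V \in M_{1, 2^N}(B(H))$ with entries indexed by $i = (i_1, \dots, i_N) \in \{0,1\}^N$ and defined by $V_i := p_1^{i_1} p_2^{i_2} \cdots p_N^{i_N}$, under the convention $p_j^0 := p_j$ and $p_j^1 := p_j^\perp$. Using $p_j + p_j^\perp = I_H$ inductively, both $\sum_i V_i = I_H$ and $V V^* = I_H$ hold (the latter by telescoping over $i_N, i_{N-1}, \dots, i_1$, alternately invoking $(p_j^{i_j})^2 = p_j^{i_j}$ and $p_j + p_j^\perp = I_H$). The plan is then to compress the defining inequality of $\mathcal{C}(p_1, \dots, p_N)_n$ by $U := I_n \otimes V \in M_{n, n2^N}(B(H))$.

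Three compression identities drive the argument: first, $U(x \otimes J_{2^N}) U^* = x$, immediate from $\sum_i V_i = I_H$; second, $U(I_n \otimes P_k^N) U^* = I_n \otimes I_H$, via a telescoping computation that first gives $(V P_k^N)_\ell = p_1^{\ell_1} \cdots p_k^{\ell_k}$ and then collapses the sum once more; and third, $U(I_n \otimes Q_k^N) U^* = 0$, because the analogous computation for $Q_k^N$ produces the factor $p_k^{\ell_k} p_k^{1-\ell_k} = 0$ in every entry. Combining these, compression of
\[ x \otimes J_{2^N} + \sum_{k=1}^N \epsilon_k I_n \otimes P_k^N + \sum_{k=1}^N t_k I_n \otimes Q_k^N \in \mathcal{C}_{n2^N} \]
by $U$ yields $x + \bigl(\sum_{k=1}^N \epsilon_k\bigr)(I_n \otimes e) \in \mathcal{C}_n$, with the $t_k$ terms annihilated. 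Since this holds for every positive $\epsilon_1, \dots, \epsilon_N$, the Archimedean property of $I_n \otimes e$ in $(M_n(\mathcal V), \mathcal{C}_n)$ yields $x \in \mathcal{C}_n$.

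The main obstacle I anticipate is bookkeeping rather than any new conceptual input. In computing $U(I_n \otimes P_k^N) U^*$, the non-commutativity of the $p_j$'s forces a specific order of summation: one sums $\ell_{k+1}, \dots, \ell_N$ from the outside of the product inward (collapsing each telescoping factor to $I_H$), then absorbs $(p_k^{\ell_k})^2 = p_k^{\ell_k}$ in the middle, and finally sums $\ell_k, \ell_{k-1}, \dots, \ell_1$ from the middle outward, alternating the identities $(p_j^{\ell_j})^2 = p_j^{\ell_j}$ and $p_j + p_j^\perp = I_H$. Tracking the tensor-product identifications between $M_n$, $M_{2^N}$, and $B(H)$ is also slightly delicate, but once the row matrix $V$ is in hand the whole argument drops out uniformly in $N$.
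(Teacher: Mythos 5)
Your proof is correct, but for the substantive implication $(2) \Rightarrow (1)$ it takes a genuinely different route from the paper's. The paper argues by induction on $N$: it peels off the last tensor factor with a canonical shuffle, invokes the single-projection result (Lemma~\ref{lem: projection characterization from first paper}) to eliminate the $\epsilon_N, t_N$ terms, and then applies the inductive hypothesis. You instead compress once by $U = I_n \otimes V$, where $V$ is the row of ordered products $p_1^{i_1} \cdots p_N^{i_N}$; the three identities $U(x \otimes J_{2^N})U^* = x$, $U(I_n \otimes P_k^N)U^* = I_n \otimes e$, and $U(I_n \otimes Q_k^N)U^* = 0$ all check out, and the summation order you describe (collapse the free indices $\ell_{k+1}, \dots, \ell_N$ first, then telescope inward using $(p_j^{\ell_j})^2 = p_j^{\ell_j}$ and $p_j + p_j^\perp = I_H$) is exactly what the non-commutativity forces; the $Q$-terms die because $p_k^{\ell_k} p_k^{1-\ell_k} = 0$. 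What your approach buys is a self-contained, uniform-in-$N$ argument that reproves the $N=1$ base case rather than citing it, and that makes the mechanism transparent: the unknown $t_k$ terms are annihilated outright instead of being absorbed one level at a time. What the paper's induction buys is brevity and the avoidance of all multi-index bookkeeping, at the price of leaning on the external Lemma~\ref{lem: projection characterization from first paper}. Two minor points: the identity $VV^* = I_H$ is a pleasant sanity check but is never actually used; and your final step implicitly uses that the compressed operator $x + \bigl(\sum_k \epsilon_k\bigr) I_n \otimes e$ still lies in $M_n(\mathcal{V})$ (it does, since $e = I_H \in \mathcal{V}$), after which positivity in $B(H^n)$ gives membership in $\mathcal{C}_n$ and the Archimedean property finishes the job as you say.
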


\begin{proof}
To see that $(1) \implies (3) \implies (2)$ notice that if $x \in \mathcal C_n$, then $x \otimes J_{2^N} \in \widehat{\mathcal C}(p_1,\dotsc,p_N)_n$ since $\mathcal C_n \otimes J_{2^N} \subseteq \widehat{\mathcal C}(p_1,\dotsc,p_N)_n$, which by Lemma~\ref{lem: p-hats contained in p's} implies $x \otimes J_{2^N} \in {\mathcal C}(p_1,\dotsc,p_N)_n.$

We will prove $(2) \implies (1)$ by induction on $N$. The case $N=1$ appears in Lemma \ref{lem: projection characterization from first paper}. Suppose that the statement holds for $N-1$, and consider the statement for $N$.

Suppose that $x \otimes J_{2^N} \in \mathcal{C}(p_1, p_2, \dots, p_N)_n$. Then for every $\epsilon_1, \epsilon_2, \dots, \epsilon_N >0$ there exist $t_1,$ $ t_2,\dotsc, t_N >0 $ such that
\[ x \otimes J_{2^N} + \sum_{k=1}^{N-1} \epsilon_k I_n \otimes P_k^N + \epsilon_N I_{n2^{N-1}} \otimes (p_N \oplus p_N^{\perp}) + \sum_{k=1}^{N-1} t_k I_n \otimes Q_k^N + t_N I_{n2^{N-1}} \otimes (p_N^{\perp} \oplus p_N)  \in \mathcal{C}_{n2^N}. \]
Applying the canonical shuffle map $\vp: M_n \otimes M_{2^{N-1}} \otimes M_2 \to M_2 \otimes M_n \otimes M_{2^{N-1}}$, we obtain
\[ J_2 \otimes \left[ x \otimes J_{2^{N-1}} + \sum_{k=1}^{N-1} \epsilon_k I_n \otimes P_k^{N-1} + \sum_{k=1}^{N-1} t_k I_n \otimes Q_k^{N-1} \right] + \epsilon_N (p_N \oplus p_N^{\perp}) \otimes I_{n2^{N-1}} \]\[+ t_N (p_N^{\perp} \oplus p_N) \otimes I_{n2^{N-1}} \in \mathcal{C}_{n2^N}. \]
Since $p_N \otimes I_{n2^{N-1}}$ is a projection in $B(H^{n2^{N-1}})$, we conclude  
\[x \otimes J_{2^{N-1}} + \sum_{k=1}^{N-1} \epsilon_k I_n \otimes P_k^{N-1} + \sum_{k=1}^{N-1} t_k I_n \otimes Q_k^{N-1} \in \mathcal{C}_{n2^{N-1}}. \]
By the inductive hypothesis, $x \in \mathcal{C}_n$, and thus $(1)$ holds.
\end{proof}

Our goal is to use Proposition \ref{prop: multiple projection characterization concrete case} to abstractly characterize finite sets of projections in an operator system and ultimately to abstractly characterize finite sets of projections in an AOU space. We begin by showing that we can endow a quotient of $M_{2^N}(\mathcal{V})$ with a useful operator system structure.

\begin{lemma}\label{lem: cone multiple contractions is a matrix ordering}
Let $(\mathcal{V}, \mathcal{C}, e)$ be an operator system and suppose $p_1, p_2, \dots, p_N \in \mathcal{V}$ are positive contractions. Then $\mathcal{C}(p_1, p_2, \dots, p_N)$ is a (non-proper) matrix ordering on the vector space $M_{2^N}(\mathcal{V})$.
\end{lemma}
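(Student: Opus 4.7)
The plan is to verify, at each level $n \in \mathbb{N}$, that $\mathcal{C}(p_1,\ldots,p_N)_n$ is a cone in $M_{n2^N}(\mathcal{V})_h$: closed under addition and under non-negative scalar multiplication, while also preserving the monotonicity clause built into Definition~\ref{defn: multiple projection compression cones}. Hermiticity is built into the definition. The main technical issue is bookkeeping: every membership claim carries both an existentially quantified witness tuple $(t_1,\ldots,t_N)$ and a monotonicity condition on how those witnesses must respond to shrinking any $\epsilon_i$, and I need to show that arithmetic operations propagate both pieces, using only that the ambient $\mathcal{C}_{n2^N}$ is itself a cone.

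For positive scalar multiplication by $\lambda>0$, given $x \in \mathcal{C}(p_1,\ldots,p_N)_n$ and parameters $(\epsilon_k)$, I apply the definition to $x$ with the rescaled parameters $(\epsilon_k/\lambda)$ to obtain witnesses $(t_k)$; scaling the membership $x + \sum(\epsilon_k/\lambda) I_n\otimes P_k^N + \sum t_k I_n\otimes Q_k^N \in \mathcal{C}_{n2^N}$ by $\lambda$ shows that $(\lambda t_k)$ witnesses $\lambda x$, and the monotonicity clause transfers by the same rescaling. The case $\lambda = 0$ follows because $P_k^N + Q_k^N = e\cdot(I_{2^k}\otimes J_{2^{N-k}})$ is positive in $M_{2^N}(\mathcal{V})$, so $t_k = \epsilon_k$ witnesses $0 \in \mathcal{C}(p_1,\ldots,p_N)_n$. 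For closure under addition, given $x, y$ in the cone and $(\epsilon_k)$, I apply the definition to $x$ and $y$ at halved parameters $(\epsilon_k/2)$ to produce witnesses $(t_k^x)$ and $(t_k^y)$, and sum the resulting memberships inside the cone $\mathcal{C}_{n2^N}$; the tuple $(t_k^x + t_k^y)$ witnesses $x+y$. For the monotonicity clause, replacing $\epsilon_i$ by $\epsilon_i' < \epsilon_i$ produces halved parameters $\epsilon_i'/2 < \epsilon_i/2$; the monotonicity clauses for $x$ and $y$ separately supply $t_i^{x\prime} > t_i^x$ and $t_i^{y\prime} > t_i^y$, so their sum is the required strictly larger witness for $x+y$.

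I expect the only genuine obstacle, should one read ``matrix ordering'' to additionally require compatibility under scalar conjugation $X \mapsto (A\otimes I_{2^N})^* X (A\otimes I_{2^N})$ for $A\in M_{n,m}(\mathbb{C})$, to be pulling $A^*A$ off of the $P_k^N$ and $Q_k^N$ terms after conjugating. The key inequality is $(\|A\|^2 I_m - A^*A)\otimes P_k^N \geq 0$, together with the analogue for $Q_k^N$: these hold because $P_k^N$ and $Q_k^N$ are positive in $M_{2^N}(\mathcal{V})$, and tensoring a positive scalar matrix with a positive element of an operator system remains positive. Combining this positivity with the compatibility already enjoyed by the ambient matrix ordering $\mathcal{C}$, and rescaling $\epsilon_k \mapsto \epsilon_k/\|A\|^2$ before applying the definition to $x$, closes this case. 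In all steps, no property of $p_1,\ldots,p_N$ beyond their being positive contractions is used.
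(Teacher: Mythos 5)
Your proof is correct and follows essentially the same route as the paper's: both arguments reduce everything to rescaling the $\epsilon$-parameters before invoking the definition, and both handle conjugation by $\alpha\otimes I_{2^N}$ via the inequality $\alpha^*\alpha\leq\|\alpha\|^2 I$ together with positivity of $P_k^N$ and $Q_k^N$. The only cosmetic difference is that the paper verifies closure under direct sums (taking $s_k=\max(t_k,r_k)$) where you verify addition and scalar multiplication directly; these are interchangeable given conjugation, and you are in fact slightly more careful than the paper in tracking the monotonicity clause of Definition~\ref{defn: multiple projection compression cones} through each operation.
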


\begin{proof}
The proof is similar to \cite[Proposition 4.7]{araiza2020abstract}. To see that $\mathcal{C}(p_1, \dots, p_N)$ is a matrix ordering, it suffices to check that it is closed under direct sums and conjugation by scalar matrices of the form $\alpha \otimes I_{2^N}$ (since the base vector space is $M_{2^N}(\mathcal V)$). Suppose $x \in \mathcal{C}(p_1, \dots, p_N)_n$ and $y \in \mathcal{C}(p_1, \dots, p_N)_m$. If $\epsilon_1, \dots, \epsilon_N > 0$, then there exist $t_1, \dotsc, t_N > 0$ and $r_1, \dotsc, r_N > 0$ such that \[ x + \sum_{k=1}^N \epsilon_k I_n \otimes P_i^N + \sum_{k=1}^N t_k I_n \otimes Q_k^N \in \mathcal C_{n2^{N}} \] and \[ y + \sum_{k=1}^N \epsilon_k I_m \otimes P_k^N + \sum_{k=1}^N r_k I_m \otimes Q_k^N \in \mathcal C_{m2^{N}}. \] Let $s_k=\max(t_k,r_k)$ for each $k$. Then \[ x \oplus y + \sum_{k=1}^N \epsilon_k I_{n+m} \otimes P_k^N + \sum_{k=1}^N s_k I_{n+m} \otimes Q_k^N \in \mathcal C_{(n+m)2^N}. \] It follows that $x \oplus y \in \mathcal{C}(p_1, \dots, p_N)_{n+m}$.

Next let $x \in \mathcal{C}(p_1, \dots, p_N)_n$, and suppose that $\alpha \in M_{n,k}$ with $\alpha \neq 0$. Let $\epsilon_1, \dots, \epsilon_N > 0$. Then there exists $t_1, \dotsc t_N > 0$ such that \[ x + \sum_{i=1}^N \frac{\epsilon_i}{\|\alpha\|^2} I_n \otimes P_i^N + \sum_{i=1}^N t_i I_n \otimes Q_i^N \in \mathcal C_{n2^N}. \]
Conjugating this expression by $\alpha \otimes I_{2^N}$, we obtain
\[ (\alpha \otimes I_{2^N})^* x (\alpha \otimes I_{2^N}) + \sum_{i=1}^N \frac{\epsilon_i}{\|\alpha\|^2} \alpha^* \alpha \otimes P_i^N + \sum_{i=1}^N t_i \alpha^* \alpha \otimes Q_i^N \in \mathcal C_{k2^N}. \] Since $\alpha^* \alpha \leq \|\alpha\|^2 I_k$ we have
\[ (\alpha \otimes I_{2^N})^* x (\alpha \otimes I_{2^N}) + \sum_{i=1}^N \epsilon_k I_k \otimes P_i^N + \sum_{i=1}^N \|\alpha\|^2 t_i I_k \otimes Q_i^N \in \mathcal C_{k2^N}. \] We conclude that $(\alpha \otimes I_{2^N})^* x (\alpha \otimes I_{2^N}) \in \mathcal{C}(p_1, \dots, p_N)_k$. This also holds trivially when $\alpha=0$, so the proof is complete. \end{proof}

\begin{lemma}\label{lem: I 2 to the N tensor e is Archimedean matrix order unit for matrix space multiple contractions} 
Fix $N \in \mathbb{N}$ and suppose that $(\mathcal{V}, \mathcal{C}, e)$ is an operator system and that $p_1, p_2, \dots, p_N \in \mathcal{V}$ are positive contractions. Then $I_{2^N} \otimes e$ is an Archimedean matrix-order unit for the non-proper matrix-ordered vector space $(M_{2^N}(\mathcal{V}),\mathcal{C}(p_1, \dots, p_N))$.
\end{lemma}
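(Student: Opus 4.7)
The plan is to verify the two properties that $I_{2^N} \otimes e$ must satisfy: that it is a matrix order unit for $(M_{2^N}(\mathcal{V}), \mathcal{C}(p_1,\dotsc,p_N))$, and that each cone $\mathcal{C}(p_1,\dotsc,p_N)_n$ is Archimedean closed with respect to $I_n \otimes I_{2^N} \otimes e$.

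For the matrix order unit property, I will first establish the inclusion $\mathcal{C}_{n 2^N} \subseteq \mathcal{C}(p_1,\dotsc,p_N)_n$ for each $n \in \mathbb{N}$. Since the $p_k$ are positive contractions, both $p_k \oplus p_k^\perp$ and $p_k^\perp \oplus p_k$ lie in $\mathcal{C}_2$, and tensoring with the positive identity matrices $I_{2^{k-1}}$ and the positive matrices $J_{2^{N-k}}$ preserves positivity, so $I_n \otimes P_k^N$ and $I_n \otimes Q_k^N$ lie in $\mathcal{C}_{n 2^N}$. For any $y \in \mathcal{C}_{n 2^N}$ the choice $t_k := 1/\epsilon_k$ makes the sum in Equation~\eqref{eq: cones dependent on multiple contractions} remain in $\mathcal{C}_{n 2^N}$, and the monotonicity clause is satisfied trivially since $1/\epsilon_k$ strictly decreases in $\epsilon_k$. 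Combining this inclusion with the fact that $e$ is an Archimedean matrix order unit for $\mathcal{C}$, for any hermitian $y \in M_{n 2^N}(\mathcal{V})$ one obtains $t > 0$ with $t I_{n 2^N} \otimes e \pm y \in \mathcal{C}_{n 2^N} \subseteq \mathcal{C}(p_1,\dotsc,p_N)_n$.

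For the Archimedean property, the crucial identity is $I_{2^N} \otimes e = P_N^N + Q_N^N$, obtained from $p_N + p_N^\perp = e$ together with $J_1 = 1$. Suppose $y + \epsilon I_n \otimes I_{2^N} \otimes e \in \mathcal{C}(p_1,\dotsc,p_N)_n$ for every $\epsilon > 0$, and fix $\epsilon_1,\dotsc,\epsilon_N > 0$. Set $\delta := \epsilon_N/4$ and apply membership of $y + \delta I_n \otimes I_{2^N} \otimes e$ in $\mathcal{C}(p_1,\dotsc,p_N)_n$ with parameters $(\epsilon_1/2,\dotsc,\epsilon_N/2)$ to obtain $s_1,\dotsc,s_N > 0$ satisfying
\[
y + \delta I_n \otimes (P_N^N + Q_N^N) + \sum_{k=1}^{N} (\epsilon_k/2) I_n \otimes P_k^N + \sum_{k=1}^{N} s_k I_n \otimes Q_k^N \in \mathcal{C}_{n 2^N}.
\]
Grouping the $\delta I_n \otimes P_N^N$ term with $(\epsilon_N/2) I_n \otimes P_N^N$ yields the combined coefficient $\epsilon_N/2 + \delta < \epsilon_N$, and adding the non-negative quantities $(\epsilon_k/2) I_n \otimes P_k^N$ for $k < N$ together with $(\epsilon_N/2 - \delta) I_n \otimes P_N^N$ raises every $P_k^N$-coefficient to $\epsilon_k$, giving
\[
y + \sum_{k=1}^{N} \epsilon_k I_n \otimes P_k^N + \sum_{k<N} s_k I_n \otimes Q_k^N + (s_N + \delta) I_n \otimes Q_N^N \in \mathcal{C}_{n 2^N}.
\]
This produces the witnesses $t_k := s_k$ for $k < N$ and $t_N := s_N + \delta$ fulfilling the existence clause of membership in $\mathcal{C}(p_1,\dotsc,p_N)_n$.

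The most delicate remaining point is verifying the monotonicity clause in Definition~\ref{defn: multiple projection compression cones}. I observe that each cone $\mathcal{C}(p_1,\dotsc,p_N)_n$ is upward closed in every $t_k$ parameter, since enlarging any $t_k$ adds a non-negative multiple of $I_n \otimes Q_k^N \in \mathcal{C}_{n 2^N}$ and so preserves the inclusion. Thus for each replacement $\epsilon_i \to \epsilon_i' < \epsilon_i$, one applies the existence clause at the new input $(\epsilon_1,\dotsc,\epsilon_i',\dotsc,\epsilon_N)$ to obtain witnesses $(u_1,\dotsc,u_N)$, then replaces $u_i$ with $\max(u_i, t_i + 1) > t_i$; the inclusion still holds. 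Hence the monotonicity clause follows automatically from the existence clause, completing the verification that $I_{2^N} \otimes e$ is an Archimedean matrix order unit.
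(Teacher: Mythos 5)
Your proof is correct and takes essentially the same route as the paper's: the order-unit half reduces to the inclusion $\mathcal{C}_{n2^N} \subseteq \mathcal{C}(p_1,\dots,p_N)_n$, and the Archimedean half rests on the identity $I_{n2^N}\otimes e = I_n\otimes P_N^N + I_n\otimes Q_N^N$, with the $\epsilon$-multiple of the unit absorbed into the coefficients of $P_N^N$ and $Q_N^N$ exactly as in the paper (which splits $\epsilon_N$ as $\tfrac{\epsilon_N}{2}+\tfrac{\epsilon_N}{2}$ where you use $\tfrac{\epsilon_N}{4}$ and $\tfrac{\epsilon_N}{2}$). One small caution: your closing claim that the monotonicity clause of Definition~\ref{defn: multiple projection compression cones} is automatic from the existence clause implicitly re-chooses all of the $t_k$ rather than only $t_i$, which is a more permissive reading than the definition's literal statement; the paper's own proof leaves this clause unverified as well, and in the present argument the witnesses in any case inherit the property from the hypothesis.
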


\begin{proof}
To see that $I_{2^N} \otimes e$ is a matrix-order unit, it suffices to show that $I_{2^N} \otimes e$ is an order unit for $M_{2^N}(\mathcal{V})$ with respect to the cone $\mathcal{C}(p_1, \dots, p_N)_1$ (see, e.g., \cite[Proposition 2.4]{araiza2020abstract}). To this end, suppose that $x \in M_{2^N}(\mathcal{V})$ and that $x = x^*$. Then because $e$ is a matrix-order unit for $\mathcal{V}$, there exists $t > 0$ such that $x + t I_{2^N} \otimes e \in \mathcal{C}_{2^N} \subseteq \mathcal{C}(p_1, \dots, p_N)_1$. Thus $I_{2^N} \otimes e$ is an order unit.

We now verify the Archimedean property. Let $n \in \mathbb{N}$ with $x \in M_{n2^N}(\mathcal V)$ and suppose that $x + \epsilon I_{n2^N} \otimes e \in \mathcal{C}(p_1, \dots, p_N)_n$ for every $\epsilon > 0$. Then for every $\epsilon_1, \epsilon_2, \dots, \epsilon_N > 0$ there exists $t_1, t_2, \dotsc, t_N > 0$ such that
\begin{equation} x + \frac{\epsilon_N}{2} I_{n2^N} \otimes e + \sum_{k=1}^{N-1} \epsilon_k I_n \otimes P_k^N + \frac{\epsilon_N}{2} I_n \otimes P_N^N + \sum_{k=1}^N t_k I_n \otimes Q_k^N \in \mathcal{C}_{n2^N}. \end{equation}
Observe that
\[ I_{n2^N} \otimes e = I_n \otimes I_{2^{N-1}} \otimes (e \oplus e) = I_n \otimes P_N^N + I_n \otimes Q_N^N. \]
Thus
\[ x + \sum_{k=1}^N \epsilon_k I_n \otimes P_k^N + \sum_{k=1}^{N-1} t_k I_n \otimes Q_k^N + \left( t_N + \frac{\epsilon_N}{2}\right) I_n \otimes Q_N^N \in \mathcal{C}_{n2^N}. \]
It follows that $x \in \mathcal{C}(p_1, \dots, p_N)_n$.
\end{proof}

\begin{proposition}\label{prop: quotient of multiple contraction cone is operator system}
Let $N \in \mathbb{N}$. Suppose that $(\mathcal{V}, \mathcal{C}, e)$ is an operator system and $p_1, p_2, \dots, p_N \in \mathcal{V}$ are positive contractions.
Let $\mathcal{J} = \Span \mathcal{C}(p_1, \dots, p_N)_1 \cap - \mathcal{C}(p_1, \dots, p_N)_1$. Then 
\[ (M_{2^N}(\mathcal{V})/\mathcal{J}, \{ \mathcal{C}(p_1, \dots, p_N)_n + M_n(\mathcal{J})\}_{n=1}^\infty, I_{2^N} \otimes e + \mathcal{J}) \]
is an operator system.
\end{proposition}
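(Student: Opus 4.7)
The plan is to deduce this proposition directly from Proposition~\ref{prop: nonproper operator system}, which manufactures an operator system from any matrix-ordered $*$-vector space with an Archimedean matrix-order unit by quotienting out the subspace on which the cone fails to be proper. Thus my job reduces to checking that the hypotheses of Proposition~\ref{prop: nonproper operator system} apply to the data $(M_{2^N}(\mathcal{V}), \mathcal{C}(p_1, \dots, p_N))$ with distinguished element $I_{2^N} \otimes e$.

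First I would observe that $M_{2^N}(\mathcal{V})$ is a $*$-vector space via the usual involution $(a_{ij})^* := (a_{ji}^*)$, and that each cone $\mathcal{C}(p_1, \dots, p_N)_n$ consists of hermitian elements, since every term appearing in the defining expression for $\mathcal{C}(p_1, \dots, p_N)_n$ is hermitian. The two substantive hypotheses, namely (i) that $\mathcal{C}(p_1, \dots, p_N)$ is a matrix ordering on $M_{2^N}(\mathcal{V})$, and (ii) that $I_{2^N} \otimes e$ is an Archimedean matrix-order unit for this matrix ordering, are precisely Lemma~\ref{lem: cone multiple contractions is a matrix ordering} and Lemma~\ref{lem: I 2 to the N tensor e is Archimedean matrix order unit for matrix space multiple contractions} respectively, and have already been verified in the text above.

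Second, I would invoke Proposition~\ref{prop: nonproper operator system} to conclude that the triple
\[ (M_{2^N}(\mathcal{V})/\mathcal{J}, \{ \mathcal{C}(p_1, \dots, p_N)_n + M_n(\mathcal{J})\}_{n=1}^\infty, I_{2^N} \otimes e + \mathcal{J}) \]
is an operator system. The only minor bookkeeping point to address is that the statement defines $\mathcal{J} = \mathcal{C}(p_1, \dots, p_N)_1 \cap -\mathcal{C}(p_1, \dots, p_N)_1$, whereas Proposition~\ref{prop: nonproper operator system} uses $\Span$ of this intersection; these agree as subspaces relevant to the quotient construction, because the intersection of a cone with its negative is already a real subspace of the hermitian elements, and the complexified span is absorbed when we form $M_{2^N}(\mathcal{V})/\mathcal{J}$.

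Since the technical content is fully carried by the two preceding lemmas and the general machinery of Proposition~\ref{prop: nonproper operator system}, there is no genuine obstacle: this proposition is essentially a formal corollary, packaging the data of those lemmas into the operator-system framework that the next section will exploit.
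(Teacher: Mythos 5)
Your proposal is correct and matches the paper's proof, which likewise deduces the result immediately from Lemma~\ref{lem: cone multiple contractions is a matrix ordering} and Lemma~\ref{lem: I 2 to the N tensor e is Archimedean matrix order unit for matrix space multiple contractions} via the quotient construction of Proposition~\ref{prop: nonproper operator system}. Your extra remark reconciling the intersection with its span in the definition of $\mathcal{J}$ is a harmless bookkeeping point the paper leaves implicit.
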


\begin{proof}
The result follows from Lemma~\ref{lem: cone multiple contractions is a matrix ordering} and Lemma~\ref{lem: I 2 to the N tensor e is Archimedean matrix order unit for matrix space multiple contractions}. 
\end{proof}

\begin{lemma} \label{lem: unital embedding into quotient}
Let $N \in \mathbb{N}$ and suppose that $(\mathcal{V}, \mathcal{C}, e)$ is an operator system and $p_1, p_2, \dots, p_N \in \mathcal{V}$ are positive contractions.
Let $\mathcal{J} = \Span \mathcal{C}(p_1, \dots, p_N)_1 \cap - \mathcal{C}(p_1, \dots, p_N)_1$. Then the mapping $x \mapsto x \otimes J_{2^N} + \mathcal{J}$ from $\mathcal V$ to $M_{2^N}(\mathcal V)/ \mathcal{J}$ is unital. Moreover 
\[e \otimes J_{2^N} + \mathcal J = I_2 \otimes e \otimes J_{2^{N-1}} + \mathcal J = I_{2^2} \otimes e \otimes J_{2^{N-2}} + \mathcal J  = \cdots = I_{2^N} \otimes e + \mathcal J.
\] 
\end{lemma}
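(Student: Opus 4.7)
The plan is to establish the chain of equalities; unitality then follows immediately, since the leftmost term is the image of $e$ under the map $x \mapsto x \otimes J_{2^N} + \mathcal{J}$ and the rightmost term $I_{2^N} \otimes e + \mathcal{J}$ is the matrix order unit of the quotient operator system by Proposition~\ref{prop: quotient of multiple contraction cone is operator system}. Because $\mathcal{J}$ is a subspace, it suffices to verify each of the $N$ consecutive equalities individually; that is, for every $k = 0, 1, \ldots, N-1$, I would show that the difference
\[ v_k := I_{2^k} \otimes e \otimes J_{2^{N-k}} - I_{2^{k+1}} \otimes e \otimes J_{2^{N-k-1}} = I_{2^k} \otimes \begin{pmatrix} 0 & e \\ e & 0 \end{pmatrix} \otimes J_{2^{N-k-1}} \]
lies in $\mathcal{J} = \mathcal{C}(p_1, \dots, p_N)_1 \cap -\mathcal{C}(p_1, \dots, p_N)_1$, using the factorization $J_{2^{N-k}} = J_2 \otimes J_{2^{N-k-1}}$ together with the elementary identity $e \otimes J_2 - I_2 \otimes e = \begin{pmatrix} 0 & e \\ e & 0 \end{pmatrix}$.

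To show $\pm v_k \in \mathcal{C}(p_1, \dots, p_N)_1$, I would reduce to the single-projection case at $p_{k+1}$. Unitality of $\pi_{p_{k+1}}$, which is \cite[Lemma 5.6]{araiza2020abstract} (already invoked in the proof of Proposition~\ref{prop: pi_p is always ucp}), gives $e \otimes J_2 - I_2 \otimes e \in \mathcal{J}_{p_{k+1}}$, so that for each $\epsilon > 0$ there exists $t > 0$ depending monotonically on $\epsilon$ with
\[ \pm(e \otimes J_2 - I_2 \otimes e) + \epsilon (p_{k+1} \oplus p_{k+1}^\perp) + t (p_{k+1}^\perp \oplus p_{k+1}) \in \mathcal{C}_2. \]
The map $A \mapsto I_{2^k} \otimes A \otimes J_{2^{N-k-1}}$ from $M_2(\mathcal{V})$ to $M_{2^N}(\mathcal{V})$ is completely positive, being a tensor product with the positive scalar matrices $I_{2^k}$ and $J_{2^{N-k-1}}$; applying it to the inclusion above produces
\[ \pm v_k + \epsilon P_{k+1}^N + t Q_{k+1}^N \in \mathcal{C}_{2^N}. \]

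Finally, to satisfy the full defining inequality of $\mathcal{C}(p_1, \dots, p_N)_1$, I would pad by setting $t_i := \epsilon_i$ for each $i \neq k+1$; then $\epsilon_i P_i^N + t_i Q_i^N = \epsilon_i \, I_{2^i} \otimes e \otimes J_{2^{N-i}} \in \mathcal{C}_{2^N}$, so adding these non-negative contributions preserves membership. The monotonicity clause in Definition~\ref{defn: multiple projection compression cones} at the index $i = k+1$ is inherited directly from the single-projection case, and at other indices the $t_i$ may be enlarged freely since only positive terms appear there. The main obstacle I anticipate is the careful bookkeeping needed to honor the monotonicity clause across all $N$ indices at once; the decoupling of $v_k$ across the tensor factors --- so that only the $(k+1)$-th slot is nontrivial --- is what permits a clean lift of the single-projection result to the multi-projection setting.
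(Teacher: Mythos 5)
Your argument is correct, and it reaches the conclusion by a genuinely shorter route than the paper's. The paper proves the chain of equalities by splitting $e = p_i + p_i^{\perp}$ and establishing, for each $i$, the two coset identities $I_{2^{i-1}} \otimes p_i \otimes J_{2^{N-i+1}} + \mathcal{J} = I_{2^{i-1}} \otimes (p_i \oplus 0) \otimes J_{2^{N-i}} + \mathcal{J}$ and the companion identity for $p_i^{\perp}$, via explicit $2\times 2$ positive-matrix computations (with the witness $t_i = 1 + 1/\epsilon_i$), together with the separate fact $\pm Q_i^N \in \mathcal{C}(p_1,\dots,p_N)_1$; the consecutive equalities are then assembled from these pieces. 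You instead handle the whole consecutive difference $v_k = I_{2^k} \otimes (e \otimes J_2 - I_2 \otimes e) \otimes J_{2^{N-k-1}}$ in one step, importing $\pm(e \otimes J_2 - I_2 \otimes e) \in \mathcal{C}(p_{k+1})_1$ from the single-projection unitality lemma of \cite{araiza2020abstract} (the same lemma the paper invokes in Proposition~\ref{prop: pi_p is always ucp}) and lifting it through the map $A \mapsto I_{2^k} \otimes A \otimes J_{2^{N-k-1}}$, which indeed preserves the cones $\mathcal{C}_m$ since it is an amplification followed by a conjugation realizing the tensor with the rank-one positive matrix $J_{2^{N-k-1}}$. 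The padding $t_i = \epsilon_i$ at the uninvolved indices coincides with the paper's device, and your treatment of the monotonicity clause is sound, because enlarging any $t_i$ only adds the positive element $(t_i' - t_i) Q_i^N$ and therefore never destroys membership in $\mathcal{C}_{2^N}$. What your approach buys is brevity and an explicit reduction of the $N$-projection unitality to the already-established $N=1$ case; what the paper's approach buys is self-containedness, since it never needs to unwind the computation behind the cited single-projection lemma.
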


\begin{proof}
The proof is similar to \cite[Lemma 5.6]{araiza2020abstract}. We will show 
\begin{equation} I_{2^{i-1}} \otimes e \otimes J_{2^{N-i+1}} + \mathcal{J} = P_i^N + \mathcal{J},  \qquad \text{for}\,\, i=1,\dotsc,N, \label{eq: unitality equation 1} \end{equation} and
\begin{equation}
    I_{2^i} \otimes e \otimes J_{2^{N-i}} + \mathcal{J}=P_i^N + \mathcal{J}, \qquad \text{for}\,\, i=1,\dotsc,N. \label{eq: unitality eq 2}
\end{equation}

We will first show Equation~\eqref{eq: unitality equation 1}. Fix $i \in \{1,2,\dots,N\}$. Since $e = p_i + p_i^\perp$, we see that
\[ I_{2^{i-1}} \otimes e \otimes J_{2^{N-i+1}} + \mathcal{J} = (I_{2^{i-1}} \otimes p_i \otimes J_{2^{N-i+1}} + \mathcal{J}) + (I_{2^{i-1}} \otimes p_i^\perp \otimes J_{2^{N-i+1}} + \mathcal{J}). \]
We will show 
\begin{equation} I_{2^{i-1}} \otimes p_i \otimes J_{2^{N-i+1}} + \mathcal{J} = I_{2^{i-1}} \otimes (p_i \oplus 0) \otimes J_{2^{N-i}} + \mathcal{J} \label{eq: unitality eq 3}  \end{equation}
and
\begin{equation} I_{2^{i-1}} \otimes p_i^\perp \otimes J_{2^{N-i+1}} + \mathcal{J} = I_{2^{i-1}} \otimes (0 \oplus p_i^\perp) \otimes J_{2^{N-i}} + \mathcal{J}. \label{eq: unitality eq 4} \end{equation}
Equation~\eqref{eq: unitality equation 1} can then be obtained by summing Equation~\eqref{eq: unitality eq 3} and Equation~\eqref{eq: unitality eq 4}.

In order to prove Equation~\eqref{eq: unitality eq 3} we rewrite the left hand side as
\[ I_{2^{i-1}} \otimes \begin{pmatrix} p_i & p_i \\ p_i & p_i \end{pmatrix} \otimes J_{2^{N-i}} + \mathcal{J} \]
and the statement will follow by showing 
\[ \pm I_{2^{i-1}} \otimes \begin{pmatrix} 0 & p_i \\ p_i & p_i \end{pmatrix} \otimes J_{2^{N-i}}  \in \mathcal{C}(p_1, \dots, p_N)_1. \]

To prove this, let $\epsilon_1, \dotsc, \epsilon_N > 0$. Set $t_k = \epsilon_k$ for $k \neq i$. Let $t_i = 1 + \frac{1}{\epsilon_i}$. Since
\begin{eqnarray}
 \begin{pmatrix} 0 & p_i \\ p_i & p_i \end{pmatrix} + \epsilon_i \begin{pmatrix} p_i & 0 \\ 0 & p_i^\perp \end{pmatrix} + t_i \begin{pmatrix} p_i^\perp & 0 \\ 0 & p_i \end{pmatrix} & = &  \begin{pmatrix} \epsilon_i p_i & p_i \\ p_i & (1+t_i) p_i \end{pmatrix} + \begin{pmatrix} t_i p_i^\perp & 0 \\ 0 & \epsilon_i p_i^\perp \end{pmatrix} \nonumber \\ 
& = &  \begin{pmatrix} \epsilon_i & 1 \\ 1 & 2+\frac{1}{\epsilon_i} \end{pmatrix} \otimes p_i + \begin{pmatrix} 1+\frac{1}{\epsilon_i} & 0 \\ 0 & \epsilon_i \end{pmatrix} \otimes p_i^\perp \nonumber \\ & \in &  \mathcal C_{2} \nonumber 
\end{eqnarray}
and
\begin{eqnarray}
 -\begin{pmatrix} 0 & p_i \\ p_i & p_i \end{pmatrix} + \epsilon_i \begin{pmatrix} p_i & 0 \\ 0 & p_i^\perp \end{pmatrix} + t_i \begin{pmatrix} p_i^\perp & 0 \\ 0 & p_i \end{pmatrix}
& = & \begin{pmatrix} \epsilon_i p_i & -p_i \\ -p_i & (t_i-1) p_i \end{pmatrix} + \begin{pmatrix} t_i p_i^\perp & 0 \\ 0 & \epsilon_i p_i^\perp \end{pmatrix} \nonumber \\ 
& = & \begin{pmatrix} \epsilon_i & -1 \\ -1 & \frac{1}{\epsilon_i} \end{pmatrix} \otimes p_i + \begin{pmatrix} 1+\frac{1}{\epsilon_i} & 0 \\ 0 & \epsilon_i \end{pmatrix} \otimes p_i^\perp \nonumber \\ & \in &  \mathcal C_{2} \nonumber 
\end{eqnarray}
we conclude that
\[ \pm I_{2^{i-1}} \otimes \begin{pmatrix} 0 & p_i \\ p_i & p_i \end{pmatrix} \otimes J_{2^{N-i}} + \epsilon_i P_i^N + t_i Q_i^N \in \mathcal C_{2^N} \]
and hence
\[ \pm I_{2^{i-1}} \otimes \begin{pmatrix} 0 & p_i \\ p_i & p_i \end{pmatrix} \otimes J_{2^{N-i}} + \sum_{k=1}^N \epsilon_k P_k^N + \sum_{k=1}^N t_k Q_k^N \in \mathcal C_{2^N}. \]
Therefore
\[ \pm I_{2^{i-1}} \otimes \begin{pmatrix} 0 & p_i \\ p_i & p_i \end{pmatrix} \otimes J_{2^{N-i}} \in \mathcal{C}(p_1, \dots, p_N)_1, \] which proves Equation~\eqref{eq: unitality eq 3}.
A similar argument proves
\[ \pm I_{2^{i-1}} \otimes \begin{pmatrix} p_i^\perp & p_i^\perp \\ p_i^\perp & 0 \end{pmatrix} \otimes J_{2^{N-i}} \in \mathcal{C}(p_1, \dots, p_N)_1, \]
which proves Equation~\eqref{eq: unitality eq 4}.

Finally we show Equation~\eqref{eq: unitality eq 2}. Let $\epsilon_1, \dots, \epsilon_N > 0$. For $k \neq i$, set $t_k = \epsilon_k$, and set $t_i = 1$. Then
\[ \pm Q_i^N + \sum_k \epsilon_k P_k^N + \sum_k t_k Q_k^N \in \mathcal C_{2^N}. \]
Thus $\pm Q_i^N \in \mathcal{C}(p_1, \dots, p_N)_1$. It follows that 
\[ P_i^N + \mathcal{J} = (P_i^N + Q_i^N) + \mathcal J = I_{2^i} \otimes e \otimes J_{2^{N-i}} + \mathcal{J},\] which proves Equation~\eqref{eq: unitality eq 2}.
\end{proof}

We now characterize when one has multiple projections in the operator system case. We point out that when $N=1$ this is done using the methods of \cite{araiza2020abstract}.

\begin{theorem} \label{thm: Multi projection operator system characterization}
Suppose that $(\mathcal{V}, \mathcal{C}, e)$ is an operator system and $p_1, p_2, \dots, p_N \in \mathcal{V}$ are positive contractions.
The following are equivalent:
\begin{itemize}
    \item[(1)] Each $p_1,\dotsc,p_N$ is an abstract projection in $\mathcal V.$
    \item[(2)] For each $1 \leq i \leq N$ the map $\pi_{p_i}: \mathcal V \to M_2(\mathcal V)/\mathcal{J}_i$, where $\mathcal J_i:= \Span \mathcal C(p_i)_1 \cap -\mathcal C(p_i)_1$, is a complete order embedding.
    \item[(3)] The map $x \mapsto x \otimes J_{2^N} + \mathcal{J}$, where $\mathcal J := \Span \mathcal C(p_1,\dotsc,p_N)_1 \cap -\mathcal C(p_1,\dotsc,p_N)_1$, is a complete order embedding from $\mathcal V$ to \[(M_{2^N}(\mathcal{V})/\mathcal{J}, \mathcal{C}(p_1, \dots, p_N) + \mathcal J, e \otimes I_{2^N} + \mathcal{J}). \] 
\end{itemize}
\end{theorem}

\begin{proof}
The equivalence of (1) and (2) follows from Theorem~\ref{thm: main thm AR1.0}. Proposition~\ref{prop: multiple projection characterization concrete case} shows (1) implies (3). It remains to prove (3) implies (2).

Suppose $x \mapsto x \otimes J_{2^N} + \mathcal{J}$ is a complete order embedding. Fix $i \in \{1, 2, \dots, N\}$. Since $\pi_{p_i}$ is unital completely positive we need only show that $\pi_{p_i}^{-1}: \pi_{p_i}(\mathcal V) \to \mathcal V$ is completely positive. Let $x \in M_n(\mathcal V)$ such that $(\pi_{p_i})_n(x) \in \mathcal C(p_i)_n + M_n(\mathcal J_i).$ Then for every $\epsilon_i > 0$ there exists $t_i > 0$ such that
\[ x \otimes J_2 + \epsilon_i I_n \otimes (p_i \oplus p_i^{\perp}) + t_i I_n \otimes (p_i^{\perp} \oplus p_i) \in \mathcal C_{2n}. \]
By tensoring on the left by the positive matrix $J_{2^{N-i}}$ and tensoring on the right by the positive matrix $J_{2^{i-1}}$ we obtain
\[ J_{2^{N-i}} \otimes (x \otimes J_2) \otimes J_{2^{i-1}} + \epsilon_i J_{2^{N-i}} \otimes I_n \otimes (p_i \oplus p_i^{\perp}) \otimes J_{2^{i-1}} + t_i J_{2^{N-i}} \otimes I_n \otimes (p_i^{\perp} \oplus p_i) \otimes J_{2^{i-1}} \in \mathcal C_{n2^N}. \]
Applying the canonical shuffle 
\[ \varphi: M_{2^{N-i}} \otimes M_n \otimes M_2 \otimes M_{2^{i-1}} \to M_n \otimes M_{2^{N-i}} \otimes M_2 \otimes M_{2^{i-1}} \]
we have
\[ x \otimes J_{2^N} + \epsilon_i I_n \otimes \widehat{P}_i^N + t_i \widehat{Q}_i^N \in \mathcal C_{n2^N}, \] where $\widehat{P}_i^N$ and $\widehat{Q}_i^N$ are from Definition~\ref{defn: multiple projection compression cones}.
Let $\epsilon_k > 0$ for each $k \neq i$ and set $t_k = \epsilon_k$ for each $k \neq i$. Then
\[ x \otimes J_{2^N} + \sum_k \epsilon_k I_n \otimes \hat{P}_k^N + \sum_k t_k I_n \otimes \hat{Q}_k^N \in \mathcal C_{n2^N}. \]
Hence $x \otimes J_{2^N} \in \widehat{\mathcal{C}}(p_1, \dots, p_N)_n$, which by Lemma~\ref{lem: p-hats contained in p's} implies $x \otimes J_{2^N} \in \mathcal{C}(p_1, \dots, p_N)_n.$ Since $x \mapsto x \otimes J_{2^N} + \mathcal{J}$ is a complete order embedding, we conclude that $x \in \mathcal C_n$. 
\end{proof}

Our next goal is to generalize Theorem~\ref{thm: characterize projections in AOU space} to the multiple projection case; i.e., to characterize AOU spaces with multiple projections. In the single projection case, this was achieved using Lemma~\ref{lem: compression cones equal AOU case}, which essentially says we can enlarge an arbitrary operator system structure $(\mathcal V,\mathcal{C},e)$ by restricting the operator system structure \[ (M_2(\mathcal{V})/\mathcal{J}, \{\mathcal{C}(p_1)_n + M_n(\mathcal{J})\}_{n=1}^\infty, e+\mathcal{J}) \] to the subsystem $\mathcal{V} \otimes J_2 + \mathcal{J}$ identified with $\mathcal{V}$ via $x \mapsto x \otimes J_2 + \mathcal{J}$.

\begin{definition} \label{defn: multi multi compression cones}
Let $(\mathcal{V}, \mathcal{C}, e)$ be an operator system and let $p_1, p_2, \dots, p_N \in \mathcal{V}$ be positive contractions. Define $P_i^N$ and $Q_i^N$ as in Definition \ref{defn: multiple projection compression cones}. For each $L \in \N$ we define
\[ P_{i,k}^{N,L} := I_{2^{N(k-1)}} \otimes P_i^N \otimes J_{2^{N(L-k)}}, \qquad 1 \leq i \leq N \,\,\text{and}\,\, 1 \leq k \leq L. \] Analogously, we define $Q_{i,k}^{N,L}$ by 
\[ Q_{i,k}^{N,L} := I_{2^{N(k-1)}} \otimes Q_i^N \otimes J_{2^{N(L-k)}}, \qquad 1 \leq i \leq N \,\,\text{and}\,\, 1 \leq k \leq L. \]
For each $L \in \N$ we define the non-proper matrix ordering 
\[ \mathcal{C}(p_1, \dots, p_N)^L := \mathcal{C}(\underbrace{p_1, \dots, p_N, p_1, \dots, p_N, \dots, p_1, \dots, p_N}_{L\text{-times}}). \]
More specifically, $x \in \mathcal{C}(p_1, \dots, p_N)_n^L$ if and only if $x \in M_{n2^{NL}}(\mathcal{V})$, $x=x^*,$ and for each $N \times L$ matrix $(\epsilon_{i,j})$ of strictly positive real numbers there exists a corresponding $N \times L$ matrix $(t_{ij})$ of strictly positive real numbers such that
\begin{equation} x + \sum_{i,j} \epsilon_{ij} I_n \otimes P_{ij}^{N,L} + \sum_{ij} t_{ij} I_n \otimes Q_{ij}^{N,L} \in \mathcal C_{n2^{NL}}, \label{eq: cones dependent on L-tuples of multiple projections} \end{equation} with the property that if one replaces $\epsilon_{ij}$ with  $\epsilon_{ij}^\prime < \epsilon_{ij}$ then there exists $t_{ij}^\prime > t_{ij}$ such that Equation~\eqref{eq: cones dependent on L-tuples of multiple projections} still holds.
\end{definition}

Since the cones $\mathcal{C}(p_1, \dots, p_N)_n^L$ are simply the cones one obtains from Definition \ref{defn: multiple projection compression cones} when the projections $p_1, \dots, p_N$ are repeated sequentially $L$ times, the following proposition is an immediate consequence of Proposition~\ref{prop: quotient of multiple contraction cone is operator system} and Theorem~\ref{thm: Multi projection operator system characterization}.

\begin{proposition} \label{prop: Properties of multi multi compression systems}
Let $(\mathcal{V},\mathcal{C},e)$ be an operator system, let $p_1,\dots,p_N \in \mathcal{V}$ be positive contractions and let $L \in \mathbb N$. Then the following statements hold.
\begin{enumerate}
    \item Setting $\mathcal{J}^L = \Span \mathcal{C}(p_1, \dots, p_N)_1^L \cap -\mathcal{C}(p_1, \dots, p_N)_1^L$, the triple 
    \[ (M_{2^{NL}}(\mathcal{V}) / \mathcal{J}^L, \mathcal{C}(p_1, \dots, p_N)^L + \mathcal{J}^L, e + \mathcal{J}^L) \] is an operator system, and the mapping $x \mapsto x \otimes J_{2^{NL}} + \mathcal{J}^L$ is unital.
    \item The operators $p_1, \dots p_N$ are each abstract projections in $(\mathcal{V}, \mathcal{C}, e)$ if and only if the mapping $x \mapsto x \otimes J_{2^{NL}} + \mathcal{J}^L$ from $\mathcal{V}$ to $M_{2^{NL}}(\mathcal{V}) / \mathcal{J}^L$ is a complete order embedding.
\end{enumerate}
\end{proposition}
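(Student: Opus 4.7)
The plan is to observe that both parts follow by unpacking definitions and invoking earlier results applied to a longer concatenated tuple. By construction $\mathcal{C}(p_1,\dots,p_N)^L$ is \emph{defined} as $\mathcal{C}(q_1,\dots,q_{NL})$ where $q_{N(k-1)+i} := p_i$ for $1 \leq i \leq N$ and $1 \leq k \leq L$. So my first task is the short index-matching check: setting $m = N(k-1)+i$, one has $2^{N(k-1)} \cdot 2^{i-1} = 2^{m-1}$ and $2^{N-i} \cdot 2^{N(L-k)} = 2^{NL-m}$, so the operators $P^{N,L}_{i,k}$ and $Q^{N,L}_{i,k}$ of Definition \ref{defn: multi multi compression cones} coincide with $P^{NL}_m$ and $Q^{NL}_m$ of Definition \ref{defn: multiple projection compression cones} applied to the tuple $(q_1,\dots,q_{NL})$. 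Once this identification is in place, both parts become direct citations of results already proved in the excerpt.

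For part (1), I would apply Proposition \ref{prop: quotient of multiple contraction cone is operator system} to the $NL$-tuple $(q_1,\dots,q_{NL})$, which immediately yields that $(M_{2^{NL}}(\mathcal{V})/\mathcal{J}^L,\,\mathcal{C}(p_1,\dots,p_N)^L + \mathcal{J}^L,\, I_{2^{NL}} \otimes e + \mathcal{J}^L)$ is an operator system. Unitality of the mapping $x \mapsto x \otimes J_{2^{NL}} + \mathcal{J}^L$ then follows from Lemma \ref{lem: unital embedding into quotient} applied to the same $NL$-tuple, which also gives the identification $e \otimes J_{2^{NL}} + \mathcal{J}^L = I_{2^{NL}} \otimes e + \mathcal{J}^L$ so that the statement may be written with unit $e + \mathcal{J}^L$ via the identification of $\mathcal{V}$ with its image in the quotient.

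For part (2), the forward direction again reduces to applying Theorem \ref{thm: Multi projection operator system characterizatoin} to the $NL$-tuple: if each $p_i$ is an abstract projection in $(\mathcal{V},\mathcal{C},e)$, then every $q_j$ is also an abstract projection (since each $q_j$ equals some $p_i$), so implication (1) $\Rightarrow$ (3) of that theorem delivers the complete order embedding $x \mapsto x \otimes J_{2^{NL}} + \mathcal{J}^L$ for every $L$. The reverse direction is easier still: specialize the hypothesis to $L = 1$, which is precisely statement (3) of Theorem \ref{thm: Multi projection operator system characterizatoin} for the tuple $(p_1,\dots,p_N)$, and apply (3) $\Rightarrow$ (1) of that theorem to conclude that each $p_i$ is an abstract projection.

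There is no genuine obstacle here; the only thing that could trip one up is the initial bookkeeping that identifies the $P^{N,L}_{i,k}$, $Q^{N,L}_{i,k}$ operators with the $P^{NL}_m$, $Q^{NL}_m$ operators of the shorter definition, and the accompanying observation that the "$\epsilon_{ij}^\prime < \epsilon_{ij} \Rightarrow \exists t_{ij}^\prime > t_{ij}$" compatibility condition in Definition \ref{defn: multi multi compression cones} is literally the same condition imposed in Definition \ref{defn: multiple projection compression cones} under the reindexing. This is why the paper flags the statement as an immediate consequence of Proposition \ref{prop: quotient of multiple contraction cone is operator system} and Theorem \ref{thm: Multi projection operator system characterizatoin}.
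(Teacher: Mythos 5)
Your proposal is correct and is exactly the paper's argument: the paper proves this proposition by observing that $\mathcal{C}(p_1,\dots,p_N)^L$ is by definition the cone of Definition~\ref{defn: multiple projection compression cones} applied to the $NL$-tuple obtained by repeating $p_1,\dots,p_N$ sequentially $L$ times, and then citing Proposition~\ref{prop: quotient of multiple contraction cone is operator system} and Theorem~\ref{thm: Multi projection operator system characterizatoin} (the paper leaves the reindexing and the $L=1$ specialization implicit, which you have usefully spelled out). No gaps.
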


To continue, we will need to introduce inductive limits of matrix orderings. Our definitions are a special case of the inductive limit of an operator system developed in \cite{MawhinneyTodorov2017}. For the sake of completeness, we will prove the necessary results, but they follow from the results in \cite{MawhinneyTodorov2017}.

\begin{definition} \label{defn: union of matrix cones}
Let $\mathcal V$ be a $*$-vector space and suppose that $e$ is a nonzero hermitian element of $\mathcal{V}$. Suppose that for every positive integer $L$ there exists a matrix ordering $\mathcal{C}^L = \{\mathcal C_n^L\}_{n \in \N}$ such that $(\mathcal V,\mathcal{C}^L,e)$ is an operator system. We define the matrix orderings $\{\mathcal C^L\}_{L=1}^\infty$ to be \emph{nested increasing} if for each $n \in \N$ we have $\mathcal{C}_n^L \subseteq \mathcal{C}_n^{L+1}$ for all $L \in \N$. 

For a nested increasing sequence $\{\mathcal C^L\}_{L=1}^\infty$ define $\mathcal{C}_n^\infty$ to be the Archimedean closure of $\bigcup_{L=1}^\infty \mathcal{C}_n^L$; that is, $x \in \mathcal{C}_n^\infty$ if and only if for every $\epsilon > 0$ there exists $L \in \N$ such that $x + \epsilon I_n \otimes e \in \mathcal{C}_n^L$. We call $\mathcal{C}^\infty:= \{ \mathcal{C}_n^\infty\}_{n=1}^\infty $ the \textit{inductive limit} of the sequence $\{\mathcal{C}^L\}_{L=1}^\infty$.
\end{definition}

The following Lemma should be compared with \cite[Proposition 4.10]{MawhinneyTodorov2017}.

\begin{lemma} \label{lem: union of operator systems}
Let $\mathcal V$ be a $*$-vector space and suppose that $e$ is a nonzero hermitian element of $\mathcal V$. Suppose that for every positive integer $L$ there exists a matrix ordering $\mathcal{C}^L$ such that $(\mathcal V,\mathcal{C}^L,e)$ is an operator system and $\{\mathcal C^L\}_{L=1}^\infty$ is a nested increasing sequence. Then $\mathcal{C}^\infty$ is a (possibly non-proper) matrix ordering for $\mathcal{V}$ and $e$ is an Archimedean matrix-order unit for $(\mathcal{V}, \mathcal{C}^\infty)$.
\end{lemma}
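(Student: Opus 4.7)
The plan is to verify the axioms of a matrix ordering together with the Archimedean property for the order unit $e$, working directly from the definition of $\mathcal{C}^\infty_n$ as the Archimedean closure of $\bigcup_L \mathcal{C}^L_n$. The nested increasing hypothesis is the key ingredient: it allows us to combine finitely many membership statements living at possibly different levels $L$ into a single level at the cost of perturbing by a multiple of $I_n \otimes e$.

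First I would check that each $\mathcal{C}^\infty_n$ is a cone in $M_n(\mathcal{V})_h$. Hermiticity of any $x \in \mathcal{C}^\infty_n$ follows because the approximants $x + \epsilon I_n \otimes e$ lie in $M_n(\mathcal{V})_h$ and $I_n \otimes e$ is hermitian. Closure under $\mathbb{R}^+$-scaling is immediate by rescaling the approximation parameter. For closure under addition, given $x, y \in \mathcal{C}^\infty_n$ and $\epsilon > 0$, choose $L_1, L_2$ with $x + (\epsilon/2)I_n \otimes e \in \mathcal{C}^{L_1}_n$ and $y + (\epsilon/2)I_n \otimes e \in \mathcal{C}^{L_2}_n$; by nestedness both lie in $\mathcal{C}^L_n$ for $L = \max(L_1, L_2)$, so their sum $x + y + \epsilon I_n \otimes e$ lies in $\mathcal{C}^L_n$.

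Second I would verify compatibility, i.e. that $\alpha^* \mathcal{C}^\infty_n \alpha \subseteq \mathcal{C}^\infty_m$ for every scalar $\alpha \in M_{n,m}(\mathbb{C})$. The case $\alpha = 0$ is trivial. Otherwise, given $x \in \mathcal{C}^\infty_n$ and $\epsilon > 0$, pick $L$ with $x + (\epsilon/\|\alpha\|^2) I_n \otimes e \in \mathcal{C}^L_n$. Conjugating by $\alpha \otimes I_\mathcal{V}$ within the matrix ordering $\mathcal{C}^L$ gives $\alpha^* x \alpha + (\epsilon/\|\alpha\|^2) \alpha^*\alpha \otimes e \in \mathcal{C}^L_m$. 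Writing $\|\alpha\|^2 I_m - \alpha^*\alpha = \gamma^*\gamma$ for some scalar matrix $\gamma$, the element $(\epsilon/\|\alpha\|^2)(\|\alpha\|^2 I_m - \alpha^*\alpha) \otimes e$ equals $(\epsilon/\|\alpha\|^2) \gamma^*(I \otimes e)\gamma \in \mathcal{C}^L_m$. Adding the two yields $\alpha^* x \alpha + \epsilon I_m \otimes e \in \mathcal{C}^L_m$, giving $\alpha^* x \alpha \in \mathcal{C}^\infty_m$.

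Finally I would establish the two remaining order-unit conditions. That $I_n \otimes e$ is an order unit for $(M_n(\mathcal{V}), \mathcal{C}^\infty_n)$ is immediate: since $e$ is already a matrix order unit for $\mathcal{C}^1$, for any hermitian $x \in M_n(\mathcal{V})$ there exists $t > 0$ with $x + tI_n \otimes e \in \mathcal{C}^1_n \subseteq \mathcal{C}^\infty_n$. For the Archimedean property, suppose $x + \epsilon I_n \otimes e \in \mathcal{C}^\infty_n$ for every $\epsilon > 0$. Applying the $\mathcal{C}^\infty$-defining condition to $x + (\epsilon/2) I_n \otimes e$ with parameter $\epsilon/2$ produces some $L$ with $x + \epsilon I_n \otimes e \in \mathcal{C}^L_n$, which is exactly the condition for $x \in \mathcal{C}^\infty_n$. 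There is no real obstacle here; the only subtle point is the double use of nestedness (once to merge approximants coming from different $L$'s, and once, implicitly, to show that the Archimedean closure of an ascending union remains stable under the cone operations), and this is handled uniformly by splitting $\epsilon$ into two halves as above.
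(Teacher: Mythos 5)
Your proof is correct and follows essentially the same route as the paper's: the key step in both is to use the nested increasing hypothesis to merge finitely many memberships at possibly different levels $L$ into a single common level, and then to handle the Archimedean closure. The only organizational difference is that the paper first verifies the matrix-ordering and order-unit axioms for the plain union $\widetilde{C}_n = \bigcup_L \mathcal{C}_n^L$ and then appeals to the general fact that Archimedeanization preserves these properties, whereas you verify the axioms directly on $\mathcal{C}^\infty_n$ by splitting $\epsilon$ into halves; both are sound.
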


\begin{proof}
First set $\widetilde{C}_n := \bigcup_{L=1}^\infty \mathcal{C}_n^L$ for each $n \in \N$. Suppose $x \in \widetilde{C}_n$ and $y \in \widetilde{C}_m$. Then there exist $L,L' > 0$ such that $x \in \mathcal{C}_n^L$ and $y \in \mathcal{C}_m^{L'}$. Therefore $x \in \mathcal{C}_n^{\max(L,L')}$ and $y \in \mathcal{C}_m^{\max(L,L')}$. This implies $x \oplus y \in \mathcal{C}_{n+m}^{\max(L,L')}$. Therefore $x \oplus y \in \widetilde{C}_{n+m}$. It is immediate that $\{\widetilde{C}_n\}_{n =1}^\infty$ is compatible since each matrix ordering $\mathcal{C}^L$ is compatible. Let $x \in M_n(\mathcal V)_{h}$. Then for each $L > 0$ there exists $t > 0$ such that $x + tI_n \otimes e \in \mathcal{C}_n^L \subseteq \widetilde{C}_n$. Thus $e$ is a matrix order unit for $\{\widetilde{{C}}_n\}_{n=1}^\infty$. Since $\mathcal{C}_n^\infty$ is the Archimedean closure of $\widetilde{C}_n$, it follows that $\mathcal{C}^\infty$ is a (possibly non-proper) matrix ordering for $\mathcal{V}$ and $e$ is an Archimedean matrix order unit for $(\mathcal{V}, \mathcal{C}^\infty)$. 
\end{proof}

\begin{proposition} \label{prop: Union multi compression cones}
Let $(\mathcal{V},\mathcal{C},e)$ be an operator system let $p_1, \dots, p_N \in \mathcal{V}$ be positive contractions. For $L \in \N$ let $\pi_L: \mathcal{V} \to M_{2^{NL}}(\mathcal{V})$ denotes the mapping $x \mapsto x \otimes J_{2^{NL}}$. Then 
\[ \{\pi_L^{-1}(\mathcal{C}(p_1, \dots, p_N)^L)\}_{L=1}^\infty \]
is a nested increasing sequence of matrix orderings on $\mathcal V$. 
\end{proposition}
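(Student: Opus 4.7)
The plan is to establish two things: first, that for each $L$ the set family $\pi_L^{-1}(\mathcal{C}(p_1,\dots,p_N)^L)$ is itself a matrix ordering on $\mathcal V$; and second, that these pullback orderings sit inside one another as $L$ increases. The first part is essentially a pullback argument, and the second part rests on being able to pad an existing witness for level $L$ into one for level $L+1$ by tensoring with $J_{2^N}$ and then adding free positive terms.

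For the pullback step, I would note that $(\pi_L)_n(x) = x \otimes J_{2^{NL}}$ as an element of $M_{n2^{NL}}(\mathcal V)$, and this amplification is linear, involution-preserving (since $J_{2^{NL}}^* = J_{2^{NL}}$), and satisfies $(\pi_L)_n(\alpha^* x\alpha) = (\alpha \otimes I_{2^{NL}})^*(\pi_L)_n(x)(\alpha \otimes I_{2^{NL}})$ for scalar $\alpha$, as well as $(\pi_L)_{n+m}(x \oplus y) = (\pi_L)_n(x) \oplus (\pi_L)_m(y)$ after the canonical shuffle. Since Proposition~\ref{prop: Properties of multi multi compression systems} (together with Lemma~\ref{lem: cone multiple contractions is a matrix ordering}) already tells us $\mathcal C(p_1,\dots,p_N)^L$ is a matrix ordering on $M_{2^{NL}}(\mathcal V)$, these compatibilities transfer directly to the pullback, giving a matrix ordering on $\mathcal V$.

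For the nesting, suppose $x \in \pi_L^{-1}(\mathcal{C}(p_1,\dots,p_N)^L)_n$, so that $x \otimes J_{2^{NL}} \in \mathcal C(p_1,\dots,p_N)^L_n$. Given any $N \times (L+1)$ matrix $(\epsilon_{ij}')$ of strictly positive reals, apply the hypothesis to the restriction to the first $L$ columns to obtain positive $(t_{ij})_{i,j=1}^{N,L}$ with
\[ x \otimes J_{2^{NL}} + \sum_{i=1}^N \sum_{j=1}^L \epsilon_{ij}' I_n \otimes P_{ij}^{N,L} + \sum_{i=1}^N \sum_{j=1}^L t_{ij} I_n \otimes Q_{ij}^{N,L} \in \mathcal C_{n 2^{NL}}. \]
Tensoring on the right by the positive scalar matrix $J_{2^N} \in M_{2^N}(\C)_+$ preserves the matrix cone, and the identity $J_{2^a} \otimes J_{2^b} = J_{2^{a+b}}$ yields $P_{ij}^{N,L} \otimes J_{2^N} = P_{ij}^{N,L+1}$ and $Q_{ij}^{N,L} \otimes J_{2^N} = Q_{ij}^{N,L+1}$ for $j \leq L$. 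Thus the tensored relation lands in $\mathcal C_{n 2^{N(L+1)}}$. To fill in the $(L+1)$-th column, observe that each $P_{i,L+1}^{N,L+1}$ and $Q_{i,L+1}^{N,L+1}$ is a positive tensor built from the positive contractions $p_i, p_i^\perp$ and the positive matrices $I$ and $J$, hence already lies in $\mathcal C_{2^{N(L+1)}}$; adding $\epsilon_{i,L+1}' I_n \otimes P_{i,L+1}^{N,L+1} + t_{i,L+1} I_n \otimes Q_{i,L+1}^{N,L+1}$ with any chosen $t_{i,L+1} > 0$ keeps the expression in $\mathcal C_{n 2^{N(L+1)}}$. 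This shows $x \otimes J_{2^{N(L+1)}} \in \mathcal C(p_1,\dots,p_N)^{L+1}_n$. The monotonicity clause in Definition~\ref{defn: multi multi compression cones} is inherited: for any strictly smaller $(\epsilon_{ij}'')$, the hypothesis gives new $t_{ij}'' > t_{ij}$ on the first $L$ columns, and one simply picks the new $(L+1)$-column $t$-values larger than the old ones.

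The only place where I expect to need real care is the bookkeeping with tensor factors: confirming that tensoring on the right by $J_{2^N}$ carries $M_{n 2^{NL}}(\mathcal V)$ into $M_{n 2^{N(L+1)}}(\mathcal V)$ with the correct identification of tensor slots, and that the cone $\mathcal C_{n 2^{NL}}$ is mapped into $\mathcal C_{n 2^{N(L+1)}}$ under this operation. Both are standard consequences of the fact that $J_{2^N}$ is a positive scalar matrix and $\mathcal C$ is a matrix ordering, but the notation requires attention to ensure that the indices of $P_{ij}^{N,L}$ line up correctly with those of $P_{ij}^{N,L+1}$.
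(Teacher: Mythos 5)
Your proposal is correct and follows essentially the same route as the paper: the nesting step is proved by tensoring the level-$L$ witness on the right by $J_{2^N}$, identifying $P_{i,j}^{N,L}\otimes J_{2^N}=P_{i,j}^{N,L+1}$ and $Q_{i,j}^{N,L}\otimes J_{2^N}=Q_{i,j}^{N,L+1}$, and then adjoining the freely positive $(L+1)$-column terms (the paper takes $t_{i,L+1}=\epsilon_{i,L+1}$). Your additional explicit checks of the pullback being a matrix ordering and of the monotonicity clause in Definition~\ref{defn: multi multi compression cones} are points the paper leaves implicit, but they do not change the argument.
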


\begin{proof}
Fix $n, L \in \N$ and let $x \in M_n(\mathcal V).$ Suppose that $x \otimes J_{2^{NL}} \in \mathcal{C}(p_1, \dots, p_N)_n^L$. Hence, for every $N \times L$ matrix $(\epsilon_{i,k})$ of strictly positive real numbers there exists a corresponding $N \times L$ matrix $(t_{i,k})$ of strictly positive real numbers such that
\[ x \otimes J_{2^{NL}} + \sum_{i=1}^N \sum_{k=1}^{L} \epsilon_{i,k} I_n \otimes P_{i,k}^{N,L} + \sum_{i=1}^N \sum_{k=1}^{L} t_{i,k} I_n \otimes Q_{i,k}^{N,L} \in \mathcal{C}_{n2^{NL}}. \]
Tensoring this expression on the right by $J_{2^{N}}$ we obtain
\[ x \otimes J_{2^{N(L+1)}} + \sum_{i=1}^N \sum_{k=1}^{L}\epsilon_{i,k} I_n \otimes P_{i,k}^{N,L} \otimes J_{2^N} + \sum_{i=1}^N \sum_{k=1}^{L} t_{i,k} I_n \otimes Q_{i,k}^{N,L} \otimes J_{2^N} \in \mathcal{C}_{n2^{N(L+1)}}. \]
 Let $\epsilon_{1,L+1}, \dots, \epsilon_{N,L+1} > 0$ and set $t_{i,L+1}:= \epsilon_{i,L+1}$. Then
\[ x \otimes J_{2^{N(L+1)}} + \sum_{i=1}^N \sum_{k=1}^{L+1} \epsilon_{i,k} I_n \otimes P_{i,k}^{N,L+1} + \sum_{i=1}^N \sum_{k=1}^{L+1} t_{i,k} I_n \otimes Q_{i,k}^{N,L+1} \in \mathcal{C}_{n2^{N(L+1)}}. \] 
We conclude that $x \otimes J_{2^{N(L+1)}} \in \mathcal{C}(p_1, \dots, p_N)_n^{L+1}$, and hence 
\[ (\pi_L)_n^{-1}(\mathcal{C}(p_1, \dots, p_N)_n^L) \subseteq (\pi_L)_n^{-1}(\mathcal{C}(p_1, \dots, p_N)_n^{L+1}). \qedhere \] \end{proof}

\begin{definition} \label{defn: C(p_1,...,p_n)^infty}
Let $(\mathcal{V},\mathcal{C},e)$ be an operator system and let $p_1,\dots,p_N \in \mathcal{V}$ be positive contractions. We define the matrix ordering $\mathcal{C}(p_1, \dots, p_N)^\infty$ on $\mathcal{V}$ to be the inductive limit of the nested increasing sequence $\{\pi_L^{-1}(\mathcal{C}(p_1, \dots, p_N)^L)\}_{L=1}^\infty$ on $\mathcal{V}$ where $\pi_L: \mathcal{V} \to M_{2^{NL}}(\mathcal{V})$ is given by $x \mapsto x \otimes J_{2^{NL}}$.
\end{definition}

\begin{lemma} \label{lem: Proper matrix cone}
Let $\mathcal{C}$ be a matrix ordering on a $*$-vector space $\mathcal{V}$, and suppose that $\mathcal{C}_1$ is proper. Then $\mathcal{C}_n$ is proper for every $n \in \N$.
\end{lemma}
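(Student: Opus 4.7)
The plan is to propagate properness from level $1$ up to level $n$ via compression by scalar column vectors. Specifically, I would take an arbitrary element $x \in \mathcal{C}_n \cap -\mathcal{C}_n$ and show that every entry $x_{ij} \in \mathcal{V}$ of $x$ must vanish, forcing $x = 0$. The tool is that matrix orderings are closed under conjugation by scalar matrices $\alpha \in M_{n,m}(\mathbb{C})$ (this compatibility is used implicitly throughout the preceding section, e.g., in the proof of Lemma~\ref{lem: cone multiple contractions is a matrix ordering}), so for any $v \in \mathbb{C}^n = M_{n,1}(\mathbb{C})$ we have $v^* x v \in \mathcal{C}_1$.

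Applying this to both $x$ and $-x$ yields $v^* x v \in \mathcal{C}_1 \cap -\mathcal{C}_1 = \{0\}$ for every $v \in \mathbb{C}^n$, by the hypothesis that $\mathcal{C}_1$ is proper. I would then make three carefully chosen scalar selections to isolate the entries of $x$. First, $v = e_i$ gives $x_{ii} = 0$ for every $i$. Next, $v = e_i + e_j$ with $i \neq j$ gives $x_{ii} + x_{jj} + x_{ij} + x_{ji} = 0$, and since $x = x^*$ forces $x_{ji} = x_{ij}^*$, this reduces (using the already established vanishing of the diagonal) to $x_{ij} + x_{ij}^* = 0$. Finally, $v = e_i + i\, e_j$ gives $i(x_{ij} - x_{ij}^*) = 0$, i.e.\ $x_{ij} - x_{ij}^* = 0$. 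Summing the two relations yields $2 x_{ij} = 0$, so $x_{ij} = 0$.

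Since this holds for every pair $(i,j)$, we conclude $x = 0$ and hence $\mathcal{C}_n$ is proper. There is no genuine obstacle in the argument; the only point worth being explicit about is the use of compatibility under scalar conjugation, without which the compression step from level $n$ to level $1$ would have no content.
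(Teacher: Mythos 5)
Your proof is correct and is essentially identical to the paper's: both compress $x \in \mathcal{C}_n \cap -\mathcal{C}_n$ by the scalar vectors $e_i$, $e_i + e_j$, and $e_i \pm i e_j$ to kill the diagonal entries and then the hermitian and anti-hermitian parts of each off-diagonal entry, using properness of $\mathcal{C}_1$ at each step. Your explicit remark that the argument rests on compatibility of the matrix ordering under scalar conjugation is a fair observation, since the paper's definition leaves that property implicit.
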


\begin{proof}
Suppose that $x = (x_{ij}) \in \mathcal{C}_n \cap -\mathcal{C}_n$. Then for each $k= 1, 2, \dots, n$ we have $x_{kk} = e_k^* x e_k \in \mathcal{C}_1 \cap -\mathcal{C}_1$, where $e_k \in \mathbb{C}^n$ is the $k$\textsuperscript{th} standard unit vector. Hence $x_{kk} = 0$ since $\mathcal C_1$ is proper. Let $k,l \in \{1,2, \dots, n\}$ with $k \neq l$. Then $(e_k + e_l)^* x (e_k + e_l) = x_{lk} + x_{kl} = 2\operatorname{Re}(x_{lk}) \in \mathcal{C}_1 \cap - \mathcal{C}_1$. Hence $\operatorname{Re}(x_{lk}) = 0$. Also $(e_k - ie_l)^* x (e_k - ie_l) = i(x_{lk} - x_{kl}) = 2i\operatorname{Im}(x_{lk}) \in \mathcal{C}_1 \cap - \mathcal{C}_1$. Hence $\operatorname{Im}(x_{lk}) = 0$. Thus $x_{lk} = 0$, and it follows that $x = 0$ and $\mathcal{C}_n$ is proper.
\end{proof}

\begin{proposition} \label{prop: restrict union of multi compression cones}
Let $(\mathcal{V},\mathcal{C},e)$ be an operator system and let $p_1,\dots,p_N \in \mathcal{V}$ be positive contractions. If the cone $\mathcal{C}(p_1, \dots, p_N)_1^\infty$ is proper then $(\mathcal{V}, \mathcal{C}(p_1, \dots, p_N)^\infty, e)$ is an operator system, and for each $k=1,\dotsc,N$ we have that $p_k$ is an abstract projection in $(\mathcal{V}, \mathcal{C}(p_1, \dots, p_N)^\infty, e)$.
\end{proposition}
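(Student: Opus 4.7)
The plan is to establish the two assertions in turn: first that $(\mathcal{V}, \mathcal{C}(p_1,\dots,p_N)^\infty, e)$ is an operator system, and then that each $p_k$ is an abstract projection in it.

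For the operator system structure, I combine the inductive limit construction from Lemma~\ref{lem: union of operator systems} with Lemma~\ref{lem: Proper matrix cone}. Proposition~\ref{prop: Properties of multi multi compression systems}(1) provides the operator system structure at each finite level and shows that the map $\pi_L \colon x \mapsto x \otimes J_{2^{NL}} + \mathcal{J}^L$ is unital, so each pullback $\pi_L^{-1}(\mathcal{C}(p_1,\dots,p_N)^L)$ is a matrix ordering on $\mathcal{V}$ admitting $e$ as a matrix order unit. Proposition~\ref{prop: Union multi compression cones} ensures the pullbacks are nested increasing, and the proof of Lemma~\ref{lem: union of operator systems} only uses matrix orderings with matrix order units at each level (not properness), so its conclusion applies: $\mathcal{C}(p_1,\dots,p_N)^\infty$ is a matrix ordering on $\mathcal{V}$ with $e$ as Archimedean matrix order unit. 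Combining the hypothesis that $\mathcal{C}(p_1,\dots,p_N)_1^\infty$ is proper with Lemma~\ref{lem: Proper matrix cone} extends properness to every level, yielding the operator system.

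For the abstract projections, my plan is to invoke Theorem~\ref{thm: Multi projection operator system characterizatoin}(3) applied to $(\mathcal{V}, \mathcal{C}(p_1,\dots,p_N)^\infty, e)$: it suffices to show that the map $x \mapsto x \otimes J_{2^N} + \mathcal{J}'$ is a complete order embedding of $\mathcal V$ into $(M_{2^N}(\mathcal{V})/\mathcal{J}', [\mathcal{C}(p_1,\dots,p_N)^\infty](p_1,\dots,p_N) + \mathcal{J}')$. The forward (unital completely positive) direction is a direct generalization of Proposition~\ref{prop: pi_p is always ucp} with $\mathcal{C}(p_1,\dots,p_N)^\infty$ in place of the ambient matrix ordering. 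For the order-reflecting direction, unwinding both the compression-cone definition and the Archimedean-closure definition of $\mathcal{C}(p_1,\dots,p_N)^\infty_{n 2^N}$ in the hypothesis $x \otimes J_{2^N} \in [\mathcal{C}(p_1,\dots,p_N)^\infty](p_1,\dots,p_N)_n$ yields, for every $\epsilon_1,\dots,\epsilon_N,\eta > 0$, witnesses $t_1,\dots,t_N > 0$ and $L \in \mathbb{N}$ such that
\[
\Bigl(x \otimes J_{2^N} + \sum_k \epsilon_k I_n \otimes P_k^N + \sum_k t_k I_n \otimes Q_k^N + \eta I_{n 2^N} \otimes e\Bigr) \otimes J_{2^{NL}} \in \mathcal{C}(p_1,\dots,p_N)^L_{n 2^N}.
\]
Using the identity $I_{n 2^N} \otimes e = I_n \otimes (P_N^N + Q_N^N)$ to absorb $\eta$ into $\epsilon_N$ and $t_N$, together with the reindexing $I_{n 2^N} \otimes P_{i,j}^{N,L} = I_n \otimes P_{i,j+1}^{N,L+1}$ and $I_n \otimes P_k^N \otimes J_{2^{NL}} = I_n \otimes P_{k,1}^{N,L+1}$ (with their $Q$-analogues), I would rearrange the inclusion to exhibit $x \otimes J_{2^{N(L+1)}}$ as a witness of membership in $\mathcal{C}(p_1,\dots,p_N)^{L+1}_n$, which gives $x \in \pi_{L+1}^{-1}(\mathcal{C}(p_1,\dots,p_N)^{L+1})_n \subseteq \mathcal{C}(p_1,\dots,p_N)^\infty_n$.

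The main obstacle I anticipate is the quantifier bookkeeping in this last step: the index $L$ produced by the Archimedean passage depends on the choice of $\epsilon_k$ and $\eta$, while membership in $\mathcal{C}(p_1,\dots,p_N)^{L+1}_n$ demands that $L+1$ be fixed before quantifying over all $(\tilde\epsilon_{ij})$. I expect to navigate this by working through the Archimedean closure itself---establishing only that for each $\delta > 0$ there is some $L$ with $x + \delta I_n \otimes e \in \pi_L^{-1}(\mathcal{C}(p_1,\dots,p_N)^L)_n$ rather than membership in any single pullback---and leveraging the nested inclusions from Proposition~\ref{prop: Union multi compression cones} together with the $L$-level analogue of the quotient identities in Lemma~\ref{lem: unital embedding into quotient} to absorb mismatched coefficients into the $\delta$-slack.
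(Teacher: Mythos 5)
Your proposal is correct and follows essentially the same route as the paper: the first half is verbatim the paper's combination of Proposition~\ref{prop: Properties of multi multi compression systems}, Proposition~\ref{prop: Union multi compression cones}, Lemma~\ref{lem: union of operator systems}, and Lemma~\ref{lem: Proper matrix cone}, and the second half is the same reduction to Theorem~\ref{thm: Multi projection operator system characterizatoin}(3) via unwinding the two closures and the reindexing identities $I_{n2^N}\otimes P_{i,k}^{N,L}=I_n\otimes P_{i,k+1}^{N,L+1}$, $I_n\otimes P_i^N\otimes J_{2^{NL}}=I_n\otimes P_{i,1}^{N,L+1}$. The only divergence is in handling the unit slack: rather than absorbing $\eta I_{n2^N}\otimes e$ into $\epsilon_N$ and $t_N$ via $P_N^N+Q_N^N$ (which discards the slack and aims at the too-strong conclusion $x\otimes J_{2^{N(L+1)}}\in\mathcal{C}(p_1,\dots,p_N)_n^{L+1}$), the paper uses Lemma~\ref{lem: unital embedding into quotient} to rewrite it as $\eta I_n\otimes e\otimes J_{2^{N(L+1)}}$ modulo $\mathcal{J}^L$, concluding only $(x+\eta I_n\otimes e)\otimes J_{2^{N(L+1)}}\in\mathcal{C}(p_1,\dots,p_N)_n^{L+1}$ and finishing with the Archimedean property of $\mathcal{C}(p_1,\dots,p_N)^\infty$ --- which is precisely the retreat through the Archimedean closure that you anticipate in your final paragraph.
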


\begin{proof}
Since $\mathcal{V}$ is an operator system and $p_1, p_2, \dots, p_N \in \mathcal{V}$ are contractions, it follows from the first statement of Proposition~\ref{prop: Properties of multi multi compression systems} that $\pi_L^{-1}(\mathcal{C}(p_1, \dots, p_N)^L)$ is a matrix ordering on $\mathcal{V}$ for each $L \in \N$ and that $e$ is an Archimedean matrix order unit for each matrix ordered vector space $(\mathcal{V}, \pi_L^{-1}(\mathcal{C}(p_1, \dots, p_N)^L))$. Then by Lemma~\ref{lem: union of operator systems} and Definition~\ref{defn: C(p_1,...,p_n)^infty} we conclude that the inductive limit $\mathcal{C}(p_1, \dots, p_N)^\infty$ is a matrix ordering on $\mathcal{V}$ and $e$ is an Archimedean matrix order unit for the matrix ordered $*$-vector space $(\mathcal{V}, \mathcal{C}(p_1, \dots, p_N)^\infty)$. If $\mathcal{C}(p_1, \dots, p_N)_1^\infty$ is proper, by Lemma~\ref{lem: Proper matrix cone} $\mathcal{C}(p_1, \dots, p_N)^\infty$ is a proper matrix ordering. Therefore $(\mathcal{V}, \mathcal{C}(p_1, \dots, p_N)^\infty, e)$ is an operator system.

Given $x \in M_n(\mathcal V)$ suppose that for every $\epsilon_1, \dots, \epsilon_N > 0$ there exists $t_1, \dots, t_N > 0$ such that 
\[ x \otimes J_{2^N} + \sum_i \epsilon_i I_n \otimes P_i^N + \sum_i t_i I_n \otimes Q_i^N \in \mathcal{C}(p_1, \dots, p_N)_{n2^N}^\infty. \]
Then for every $\epsilon > 0$ there exists $L \in \N$ such that
\[ \left[ x \otimes J_{2^N} + \sum_i \epsilon_i I_n \otimes P_i^N + \sum_i t_i I_n \otimes Q_i^N + \epsilon I_{n2^N} \otimes e \right] \otimes J_{2^{NL}} \in \mathcal{C}(p_1, \dots, p_N)_{n2^N}^L, \] and in particular, \[
\left[ x \otimes J_{2^N} + \sum_i \epsilon_i I_n \otimes P_i^N + \sum_i t_i I_n \otimes Q_i^N + \epsilon I_{n2^N} \otimes e \right] \otimes J_{2^{NL}} + M_{n2^N}(\mathcal{J}^L) \] \[ \in \mathcal{C}(p_1, \dots, p_N)_{n2^N}^L + M_{n2^{NL}}(\mathcal J^L).
\]
By Lemma~\ref{lem: unital embedding into quotient} we see that $I_{2^N} \otimes e \otimes J_{2^{NL}} + M_{2^N}(\mathcal{J}^L) = e \otimes J_{2^{N(L+1)}} + M_{2^N}(\mathcal{J}^L)$. Hence \[I_{n2^N} \otimes e \otimes J_{2^{NL}} + M_{n2^N}(\mathcal{J}^L) = I_n \otimes e \otimes J_{2^{N(L+1)}} + M_{n2^N}(\mathcal{J}^L).\] This implies
\[ \left[ x \otimes J_{2^N} + \sum_i \epsilon_i I_n \otimes P_i^N + \sum_i t_i I_n \otimes Q_i^N + \epsilon I_n \otimes e \otimes J_{2^N} \right] \otimes J_{2^{NL}} \in \mathcal{C}(p_1, \dots, p_N)_{n2^N}^L. \]
Therefore for every $N \times L$ matrix $(\delta_{i,k})$ of strictly positive real numbers there exists an $N \times L$ matrix $(r_{i,k})$ of strictly positive real numbers such that
\[ \left[ x \otimes J_{2^N} + \sum_i \epsilon_i I_n \otimes P_i^N + \sum_i t_i I_n \otimes Q_i^N + \epsilon I_n \otimes  e \otimes J_{2^N} \right] \otimes J_{2^{NL}} \]\[+ \sum_{i,k} \delta_{i,k} I_{n2^N} \otimes P_{i,k}^{N,L} + \sum_{i,k} r_{ik} I_{n2^N} \otimes Q_{i,k}^{N,L} \in \mathcal C_{n2^{N(L+1)}}.\] 
Since $I_{n2^N} \otimes P_{i,k}^{N,L} = I_n \otimes P_{i,k+1}^{N,L+1}$, $I_{n2^N} \otimes Q_{i,k}^{N,L} = I_n \otimes Q_{i,k+1}^{N,L+1}$, $I_n \otimes P_i^N \otimes J_{2^{NL}} = I_n \otimes P_{i,1}^{N,L+1}$, and $I_n \otimes Q_i^N \otimes J_{2^{NL}} = I_n \otimes Q_{i,1}^{N,L+1}$, by setting $\epsilon_i = \delta_{i,0}$ and $t_i = r_{i,0}$ for $1 \leq i \leq N$, we conclude that
\[ (x + \epsilon I_n \otimes e) \otimes J_{2^{N(L+1)}} + \sum_{i=1}^N \sum_{k=1}^{L+1} \delta_{i,k-1} I_n \otimes P_{i,k}^{N,L+1} + \sum_{i=1}^N \sum_{k=1}^{L+1} r_{i,k-1} I_n \otimes Q_{i,k}^{N,L+1} \in \mathcal C_{n 2^{N(L+1)}}, \]
and hence
\[ (x + \epsilon I_n \otimes e) \otimes J_{2^{N(L+1)}} \in \mathcal{C}(p_1, \dots, p_N)_n^{L+1}. \]
Since $\epsilon >0$ was arbitrary, we conclude that $x \in \mathcal{D}_n := \mathcal{C}(p_1, \dots, p_N)_n^\infty$ by the Archimedean property. Therefore $x \mapsto x \otimes J_{2^N} + \mathcal{J}$ is a complete order embedding of $(\mathcal V,\mathcal{D},e)$ into the quotient operator system 
\[ (M_{2^N}(\mathcal{V})/\mathcal{J}, \mathcal{D}(p_1, \dots, p_N) + \mathcal{J}, e \otimes I_{2^N} + \mathcal{J}) \] where 
\[ \mathcal{J} = \Span \mathcal{D}(p_1, \dots, p_N) \cap - \mathcal{D}(p_1, \dots, p_N). \]
The statement then follows from Theorem~\ref{thm: Multi projection operator system characterization}.

\end{proof}

We can now characterize multiple projections in an AOU space.

\begin{theorem} \label{thm: Multi projections AOU characterization}
Let $(\mathcal V,C,e)$ be an AOU space and let $p_1, p_2, \dots, p_N$ be positive contractions in $\mathcal V$. Then the following statements are equivalent:
\begin{enumerate}
    \item There exists a Hilbert space $H$ and a unital order embedding $\pi: \mathcal V \to B(H)$ such that $\pi(p_i)$ is a projection on $H$ for each $i=1,\dots,N$.
    \item There exists a proper matrix ordering $\mathcal{C}$ with $\mathcal{C}_1=C$ such that $(\mathcal V, \mathcal{C}, e)$ is an operator system with abstract projections $p_1, p_2, \dots, p_N$.
    \item $C = C^{max}(p_1, \dots, p_N)_1^\infty$.
\end{enumerate}
\end{theorem}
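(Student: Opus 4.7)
The plan is to establish $(1) \Leftrightarrow (2)$ and $(2) \Leftrightarrow (3)$, reducing the AOU statement to the operator-system machinery developed earlier in this section.

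For $(1) \Rightarrow (2)$, I would pull back the matrix ordering from $B(H)$ along the given unital order embedding $\pi$, defining $\mathcal{C}_n := \pi_n^{-1}(M_n(B(H))^+)$. The level-$1$ order-embedding hypothesis forces $\pi$, and hence each $\pi_n$, to be injective, so $\mathcal{C}$ is a proper matrix ordering on $\mathcal{V}$ with Archimedean matrix-order unit $e$ and $\mathcal{C}_1 = C$; each $p_i$ is then an abstract projection in $(\mathcal{V}, \mathcal{C}, e)$ by Theorem~\ref{thm: main thm AR1.0}. For $(2) \Rightarrow (1)$, Theorem~\ref{thm: main thm AR1.0}(3) makes each $p_i$ an honest projection in $C_e^*(\mathcal{V})$, and any faithful unital representation of $C_e^*(\mathcal{V})$ provides a unital complete (hence in particular) order embedding into some $B(H)$ sending each $p_i$ to a projection. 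For $(3) \Rightarrow (2)$, I would simply take $\mathcal{C} := C^{max}(p_1, \dots, p_N)^\infty$ built from the operator system $(\mathcal{V}, C^{max}, e)$: the hypothesis $\mathcal{C}_1 = C^{max}(p_1, \dots, p_N)^\infty_1 = C$ is proper (since $C$ is proper), so Proposition~\ref{prop: restrict union of multi compression cones} immediately yields an operator system structure on $\mathcal{V}$ in which each $p_i$ is an abstract projection.

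The substantive work lies in $(2) \Rightarrow (3)$. The containment $C \subseteq C^{max}(p_1, \dots, p_N)^\infty_1$ is the easy direction: if $x \in C = C^{max}_1$, then $x \otimes J_{2^{NL}} \in C^{max}_{2^{NL}}$, and since each $P_{ij}^{N,L}$ and $Q_{ij}^{N,L}$ is separately positive in $C^{max}$ (being a tensor product of positive elements), the defining inequality in Definition~\ref{defn: multi multi compression cones} holds for any choice of the $t_{ij}$, so $x \otimes J_{2^{NL}} \in C^{max}(p_1, \dots, p_N)^L_1$ for every $L$. For the reverse containment, given $x \in C^{max}(p_1, \dots, p_N)^\infty_1$ and $\epsilon > 0$, I would choose $L$ with $(x + \epsilon e) \otimes J_{2^{NL}} \in C^{max}(p_1, \dots, p_N)^L_1$. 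Since $C^{max}$ is the smallest matrix ordering with level-$1$ cone $C$, we have $C^{max}_n \subseteq \mathcal{C}_n$ for all $n$ and hence $C^{max}(p_1, \dots, p_N)^L_n \subseteq \mathcal{C}(p_1, \dots, p_N)^L_n$, so $(x + \epsilon e) \otimes J_{2^{NL}} + \mathcal{J}^L$ lies in the positive cone of the quotient operator system of Proposition~\ref{prop: Properties of multi multi compression systems}. Because $p_1, \dots, p_N$ are abstract projections in $(\mathcal{V}, \mathcal{C}, e)$, that same proposition says $y \mapsto y \otimes J_{2^{NL}} + \mathcal{J}^L$ is a complete order embedding, and pulling back gives $x + \epsilon e \in \mathcal{C}_1 = C$. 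The Archimedean property of $(\mathcal{V}, C, e)$ then yields $x \in C$.

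The principal obstacle is coordinating three ingredients in this last step: the minimality of $C^{max}$ among matrix orderings extending $C$ (needed to pass from $C^{max}$-data to $\mathcal{C}$-data), the complete-order-embedding characterization of abstract projections from Proposition~\ref{prop: Properties of multi multi compression systems} (applied to $\mathcal{C}$ rather than to $C^{max}$), and the Archimedean property (to absorb the $\epsilon e$ correction inherent in the inductive-limit definition). The remaining implications are direct applications of Theorem~\ref{thm: main thm AR1.0} and Proposition~\ref{prop: restrict union of multi compression cones}.
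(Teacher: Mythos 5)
Your proposal is correct and follows essentially the same route as the paper: the $(1)\Leftrightarrow(2)$ equivalence via Theorem~\ref{thm: main thm AR1.0} (pulling back the ordering from $B(H)$, resp.\ representing $C_e^*(\mathcal{V})$), the $(3)\Rightarrow(2)$ step via Proposition~\ref{prop: restrict union of multi compression cones}, and the $(2)\Rightarrow(3)$ step using minimality of $C^{max}$ to pass into $\mathcal{C}(p_1,\dots,p_N)^L$, the complete-order-embedding property from Proposition~\ref{prop: Properties of multi multi compression systems}, and the Archimedean property to absorb the $\epsilon e$. Your write-up of $(1)\Leftrightarrow(2)$ is in fact more explicit than the paper's one-line citation, but the underlying argument is the same.
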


\begin{proof}
The equivalence of (1) and (2) follows from Theorem \ref{thm: Multi projection operator system characterization}. We shall establish the equivalence of (2) and (3).

First suppose (2) holds. Note that $C = C_1^{max}$ where $C^{max}$ denotes the maximal matrix ordering on the AOU space $(\mathcal V, C, e).$ Since $C_1^{max} \subseteq C^{max}(p_1,\dotsc,p_N)_1^L$ for each $L \in \N$, we have $C_1^{max} \subseteq C^{max}(p_1,\dotsc,p_N)_1^\infty$, and hence $C \subseteq C^{max}(p_1,\dotsc,p_N)_1^\infty.$ For the reverse inclusion, if $x \in  C^{{max}}(p_1,\dotsc,p_N)_1^\infty$, then for every $\epsilon >0$ there exists $L \in \N$ such that $(x+\epsilon e) \otimes J_{2^{NL}} \in C^{max}(p_1,\dotsc,p_N)_1^L$. Since $ C^{max}(p_1,\dotsc,p_N)_1^L \subseteq \mathcal C(p_1,\dotsc,p_N)_1^L$, for every $\epsilon >0$ there exists $L \in \N$ such that $(x + \epsilon e) \otimes J_{2^{NL}} \in \mathcal C(p_1,\dotsc,p_N)_1^L$. Because $p_1,\dotsc,p_N$ are abstract projections relative to $\mathcal C$, it follows that $x + \epsilon e \in \mathcal C_1$. Because this holds for all $\epsilon >0$ and $\mathcal C_1$ is Archimedean closed, it follows that $x \in \mathcal C_1 = C.$ Thus $(3)$ holds.

Next assume (3) holds. Since $C =C^{max}(p_1, \dots, p_N)_1^\infty$ and $C$ is a proper cone, Proposition~\ref{prop: restrict union of multi compression cones} implies $(\mathcal V, C^{max}(p_1, \dots, p_N)^\infty ,e)$ is an operator system with abstract projections $p_1, \dots, p_N$. Thus the matrix ordering $C^{max}(p_1,\dotsc,p_N)^\infty$ satisfies the conditions of (2).

\end{proof}

The following corollary is immediate from Proposition~\ref{prop: restrict union of multi compression cones} and Theorem~\ref{thm: Multi projections AOU characterization}.

\begin{corollary} \label{cor: projection envelope of an AOU space}
Let $(\mathcal V,C,e)$ be an AOU space and let $p_1, p_2, \dots, p_N$ be positive contractions in $\mathcal V$. Suppose that the cone $C^{max}(p_1, \dots, p_N)_1^\infty$ is proper. Then there exists a unital order embedding $\pi: (\mathcal V,C^{max}(p_1,\dotsc,p_N)_1^\infty,e) \to B(H)$ such that $\pi(p_k)$ is a projection for each $k=1,\dotsc,N$. 
\end{corollary}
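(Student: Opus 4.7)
The plan is to combine Proposition~\ref{prop: restrict union of multi compression cones} (specialized to the maximal matrix ordering $C^{max}$) with the equivalence (1)$\Leftrightarrow$(2) in Theorem~\ref{thm: Multi projections AOU characterization}. The inputs to those results are designed to fit together, so the argument is essentially one of bookkeeping.

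First, I would observe that the hypothesis makes $(\mathcal V, C^{max}(p_1,\dots,p_N)_1^\infty, e)$ a genuine AOU space: Lemma~\ref{lem: union of operator systems} applied to the nested increasing sequence $\{\pi_L^{-1}(C^{max}(p_1,\dots,p_N)^L)\}_{L=1}^\infty$ (which is nested and increasing by Proposition~\ref{prop: Union multi compression cones}) yields that $C^{max}(p_1,\dots,p_N)^\infty$ is a matrix ordering with $e$ as an Archimedean matrix order unit, and by assumption the first cone $C^{max}(p_1,\dots,p_N)_1^\infty$ is proper.

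Next, I would apply Proposition~\ref{prop: restrict union of multi compression cones} with the operator system $(\mathcal V, C^{max}, e)$ in the role of $(\mathcal V, \mathcal C, e)$. Since the hypothesis of the corollary is precisely the properness condition required by that proposition, we conclude that $(\mathcal V, C^{max}(p_1,\dots,p_N)^\infty, e)$ is an operator system and that each $p_k$ is an abstract projection in it.

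Finally, I would invoke the equivalence (1)$\Leftrightarrow$(2) of Theorem~\ref{thm: Multi projections AOU characterization} applied to the AOU space $(\mathcal V, C^{max}(p_1,\dots,p_N)_1^\infty, e)$. The matrix ordering $\mathcal C := C^{max}(p_1,\dots,p_N)^\infty$ is proper (by Lemma~\ref{lem: Proper matrix cone}, since its first level is proper), satisfies $\mathcal C_1 = C^{max}(p_1,\dots,p_N)_1^\infty$, and makes $(\mathcal V,\mathcal C, e)$ an operator system with abstract projections $p_1,\dots,p_N$. Hence condition (2) of that theorem is met, so condition (1) provides a Hilbert space $H$ and a unital order embedding $\pi:(\mathcal V, C^{max}(p_1,\dots,p_N)_1^\infty, e) \to B(H)$ with each $\pi(p_k)$ a projection, as desired. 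No step requires serious new work; the only point to double-check is that the hypothesis of Proposition~\ref{prop: restrict union of multi compression cones} genuinely uses properness of the first-level cone (which then propagates to all levels via Lemma~\ref{lem: Proper matrix cone}), so that the resulting structure really is an operator system rather than merely a matrix ordered space.
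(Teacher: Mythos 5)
Your proposal is correct and follows essentially the same route as the paper: apply Proposition~\ref{prop: restrict union of multi compression cones} to get that $(\mathcal V, C^{max}(p_1,\dots,p_N)^\infty, e)$ is an operator system with abstract projections $p_1,\dots,p_N$, then note that this matrix ordering witnesses condition (2) of Theorem~\ref{thm: Multi projections AOU characterization} for the AOU space $(\mathcal V, C^{max}(p_1,\dots,p_N)_1^\infty, e)$, so condition (1) yields the embedding. The only difference is cosmetic: your extra verification via Lemma~\ref{lem: union of operator systems} and Lemma~\ref{lem: Proper matrix cone} is already subsumed in Proposition~\ref{prop: restrict union of multi compression cones}.
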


\begin{proof}
Since $C^{max}(p_1, \dots, p_N)_1^\infty$ is proper then by Proposition~\ref{prop: restrict union of multi compression cones} the triple \[ (\mathcal V,C^{max}(p_1,\dotsc,p_N)^\infty,e) \] is an operator system and each $p_i$ is an abstract projection in $(\mathcal V, C^{max}(p_1,\dotsc,p_N)^\infty,e)$.  In particular, $(\mathcal V, C^{max}(p_1,\dotsc,p_N)_1^\infty,e)$ is an AOU space and thus by Condition (2) and Condition (3) of Theorem~\ref{thm: Multi projections AOU characterization} there exists a unital order embedding $\pi: (\mathcal V, C^{max}(p_1,\dotsc,p_N)^\infty,e) \to B(H)$ mapping each $p_i$ to a projection. 
\end{proof}

\section{Quantum commuting correlations as states on AOU spaces}\label{sec: qc correlations as states in AOU spaces}

Let $n,k \in \mathbb{N}$. We call a tuple $p = \{p(a,b|x,y) : a,b \in [k], x,y \in [n]\}$ a \textit{correlation} with $n$ inputs and $k$ outputs if for each $a,b \in [k]$ and $x,y \in [n]$, $p(a,b|x,y)$ is a non-negative real number, and for each $x,y \in [n]$ we have \[ \sum_{a,b = 1}^k p(a,b|x,y) = 1. \] We let $C(n,k)$ denote the set of all correlations with $n$ inputs and $k$ outputs. A correlation $p$ is called \textit{nonsignalling} if for each $a,b \in [k]$ and $x,y \in [n]$ the values \[ p_A(a|x) := \sum_d p(a,d|x,w) \quad \text{ and } \quad p_B(b|y) := \sum_c p(c,b|z,y) \] are well-defined, meaning that $p_A(a|x)$ is independent of the choice of $w \in [n]$ and $p_B(b|y)$ is independent of the choice of $z \in [n]$. We let $C_{ns}(n,k)$ denote the set of all nonsignalling correlations with $n$ inputs and $k$ outputs.

Much of the literature on correlation sets is focused on various subsets of the nonsignalling correlation sets. We mention three of these subsets here, namely the quantum commuting, quantum, and local correlations. A correlation $p$ is a \textit{quantum commuting} correlation with $n$ inputs and $k$ outputs if there exists a Hilbert space $H$, a pair of C*-algebras $\mathcal{A}, \mathcal{B} \subseteq B(H)$ with $z_1z_2=z_2z_1$ for all $z_1 \in \mathcal{A}$ and $z_2 \in \mathcal{B}$, projection-valued measures $\{E_{x,a}\}_{a=1}^k \subseteq \mathcal{A}$ and $\{F_{y,b}\}_{b=1}^k \subseteq \mathcal{B}$ for each $x,y \in [n]$, and a state $\phi: \mathcal{A} \mathcal{B} \to \mathbb{C}$ such that $p(a,b|x,y) = \phi(E_{x,a} F_{y,b})$ for all $a,b \in [k]$ and $x,y \in [n]$. A quantum commuting correlation is called a \textit{quantum} correlation if we require the Hilbert space $H$ to be finite-dimensional. A quantum commuting correlation is called \textit{local} if we require that the C*-algebras $\mathcal{A}$ and $\mathcal{B}$ are commutative. We let $C_{qc}(n,k), C_{q}(n,k)$, and $C_{loc}(n,k)$ denote the sets of quantum commuting, quantum, and local correlations, respectively.

It is well-known that for each input-output pair $(n,k)$ each of the correlation sets mentioned above are convex subsets of $\mathbb{R}^{n^2 k^2}$ and satisfy
\[ C_{loc}(n,k) \subseteq C_{q}(n,k) \subseteq C_{qc}(n,k) \subseteq C_{ns}(n,k) \subseteq C(n,k). \]
Moreover, each inclusion in the above sequence is proper for some choice of input $n$ and output $k$. The set $C_q(n,k)$ is known to be non-closed for certain values of $n$ and $k$ \cite{slofstra2019set}. Let $C_{qa}(n,k)$ denote the closure of $C_{q}(n,k)$ for each pair $(n,k)$. Whether or not $C_{qa}(n,k)$ is equal to $C_{qc}(n,k)$ for every input $n$ and output $k$ is equivalent to Connes' embedding problem from operator algebras (see \cite{junge2011connes}, \cite{fritz2012tsirelson}, and \cite{ozawa2013connes}). The recent preprint \cite{ji2020mip} implies that $C_{qa}(n,k)$ does not equal $C_{qc}(n,k)$ for some values of $n$ and $k$, estimated to be approximately $10^{20}$ each. It remains open whether or not $C_{qa}(n,k)$ differs from $C_{qc}(n,k)$ for small values of $n$ and $k$. Although it is well-known that $C_q(2,2) = C_{qa}(2,2) = C_{qc}(2,2)$, it remains open whether $C_{qa}(3,2) = C_{qc}(3,2)$ or $C_{qa}(3,2) = C_{q}(3,2)$, for example.

Let \[ G(n,k) := \underbrace{(\mathbb{Z} / k \mathbb{Z}) * \dots * (\mathbb{Z} / k \mathbb{Z})}_{n\,\, \text{times}}; \] i.e., $G(n,k)$ is the $n$-fold free product of $\mathbb{Z} / k \mathbb{Z}$ with itself amalgamated over the group identity. Let $\mathcal{A}(n,k) := C^*(G(n,k))$ denote the full group C*-algebra of $G(n,k)$. Then $\mathcal{A}(n,k)$ is the universal C*-algebra generated by $n$ noncommuting projection-valued measures $\{P_{1,i}\}_{i=1}^k, \dots \{P_{n,i}\}_{i=1}^k$ (here we take $\{P_{x,i}\}_{i=1}^k$ to be the spectral projections for the generator of the $x$\textsuperscript{th} factor $\mathbb{Z} / k \mathbb{Z}$ of the free product). Let $\mathcal{S}(n,k) \subseteq \mathcal{A}(n,k)$ denote the operator system spanned by the projections $\{P_{x,i}: x \in [n], i \in [k]\}$ for  in $\mathcal A(n,k)$.

\begin{theorem}[\cite{lupini2020perfect}] \label{thm: lupini et. al.}
The following statements are equivalent:
\begin{enumerate}
    \item $p \in C_{ns}(n,k)$ (resp. $C_{qc}(n,k)$, $C_{qa}(n,k)$).
    \item There exists a state $\phi$ on $\mathcal{S}(n,k) \otimes_{max} \mathcal{S}(n,k)$ (resp. $\mathcal{S}(n,k) \otimes_c \mathcal{S}(n,k)$, $\mathcal{S}(n,k) \otimes_{min} \mathcal{S}(n,k)$) such that $p(a,b|x,y) = \phi( P_{x,a} \otimes P_{y,b})$ for each $a,b \in [k]$ and $x,y \in [n]$.
\end{enumerate}
\end{theorem}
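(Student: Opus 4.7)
The plan is to prove each of the three equivalences by combining the universal property of $\mathcal{A}(n,k) = C^*(G(n,k))$ with the characterizing universal property of the corresponding operator system tensor product. In each case, the forward direction manufactures a state on the tensor product from a concrete realization of the correlation, while the reverse direction extracts the required Hilbert space data (commuting PVMs, finite-dimensional approximations, or a pair of positive maps) from a state on the tensor product. Throughout, one uses the fact that $\mathcal{S}(n,k)$ is spanned by the PVMs $\{P_{x,a}\}_{a=1}^k$ ($x\in[n]$) so that elementary tensors $P_{x,a}\otimes P_{y,b}$ correspond directly to joint-outcome indicators.

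For the qc and $\otimes_c$ case, which is the main case: given $p \in C_{qc}(n,k)$ realized by commuting PVMs $\{E_{x,a}\},\{F_{y,b}\} \subseteq B(H)$ and a state $\tau$, the universal property of $\mathcal{A}(n,k)$ (as the full $C^*$-algebra of the free product $G(n,k)$) produces $*$-homomorphisms $\alpha,\beta:\mathcal{A}(n,k)\to B(H)$ with $\alpha(P_{x,a})=E_{x,a}$ and $\beta(P_{y,b})=F_{y,b}$. Restriction gives UCP maps $\mathcal{S}(n,k)\to B(H)$ with commuting ranges, and the defining property of $\otimes_c$ (states on $S\otimes_c T$ correspond to triples $(\pi,\rho,\omega)$ with $\pi,\rho$ UCP into a common $B(K)$ having commuting ranges and $\omega$ a vector state) produces the required state. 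Conversely, a state $\phi$ on $\mathcal{S}(n,k)\otimes_c\mathcal{S}(n,k)$ yields UCP maps $\sigma_1,\sigma_2:\mathcal{S}(n,k)\to B(K)$ with commuting ranges; since each $P_{x,a}$ is an abstract projection (in $\mathcal{A}(n,k)=C_e^*(\mathcal{S}(n,k))$), a simultaneous Stinespring dilation preserves both the projection structure and the commutation, producing $*$-homomorphisms whose images are the commuting PVMs witnessing $p \in C_{qc}(n,k)$.

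For the ns and $\otimes_{max}$ case: the defining property of $\otimes_{max}$ is that its states correspond to jointly completely positive bilinear forms on $S \times T$ with no commutation requirement. Given a state $\phi$ on $\mathcal{S}(n,k) \otimes_{max} \mathcal{S}(n,k)$, the values $p(a,b|x,y):=\phi(P_{x,a}\otimes P_{y,b})$ are nonnegative and sum to $1$ over $(a,b)$, and the nonsignalling conditions follow directly from $\sum_a P_{x,a} = e = \sum_b P_{y,b}$, yielding $\sum_a p(a,b|x,y)=\phi(e\otimes P_{y,b})$ (independent of $x$) and analogously for the other marginal. Conversely, a nonsignalling correlation determines a bilinear form on $\mathcal{S}(n,k)\times\mathcal{S}(n,k)$ that is completely positive in each variable, extending to a state on $\otimes_{max}$ by its universal property. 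For the qa and $\otimes_{min}$ case: the minimum tensor product is characterized by having positive elements detected by evaluation against pairs of UCP maps into matrix algebras, so states on $\mathcal{S}(n,k)\otimes_{min}\mathcal{S}(n,k)$ are limits of states of the form $\xi^*(\pi_1\otimes\pi_2)(\cdot)\xi$ with $\pi_i:\mathcal{S}(n,k)\to M_{k_i}$ UCP. Stinespring-dilating each $\pi_i$ to a $*$-homomorphism $\tilde\pi_i:\mathcal{A}(n,k)\to B(H_i)$ on a finite-dimensional space $H_i$ (using that the $P_{x,a}$ are abstract projections) gives correlations in $C_q(n,k)$, and closing gives $C_{qa}(n,k)$.

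The main obstacle is the qc/c case, specifically the passage from UCP maps with commuting ranges into $B(K)$ to $*$-homomorphisms with commuting ranges on $\mathcal{A}(n,k)$. In general Stinespring dilation can destroy commutation between two UCP maps, so a simultaneous dilation result is required. This succeeds here because the abstract projection structure (Theorem \ref{thm: main thm AR1.0}) ensures that each UCP map's dilation sends the $P_{x,a}$ to genuine projections, and a standard Arveson-style commutant lifting argument (applied in the presence of a common enveloping $C^*$-algebra for the two maps) produces the commuting $*$-homomorphisms. Once this step is handled the other two cases reduce to more direct applications of the universal properties of $\otimes_{max}$ and $\otimes_{min}$.
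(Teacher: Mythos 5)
First, a point of order: the paper does not prove this theorem. It is imported verbatim from \cite{lupini2020perfect} and used as a black box, so there is no internal argument to measure your sketch against; what follows assesses the sketch on its own terms. Your overall architecture does follow the standard literature proof, and you correctly flag the one genuinely hard step, but your resolution of that step does not hold up. In the quantum commuting case the forward implication is indeed immediate: the commuting representations of $\mathcal{A}(n,k)$ supplied by a $C_{qc}$ realization restrict to unital completely positive maps on $\mathcal{S}(n,k)$ with commuting ranges, and positivity of the induced functional on the $\otimes_c$ cone is then the very definition of that cone. The reverse implication is where the content lies. A state on $\mathcal{S}(n,k)\otimes_c\mathcal{S}(n,k)$ does not directly hand you a pair of ucp maps with commuting ranges; the standard route is to extend the state, via Krein's theorem, to a state on $C^*_u(\mathcal{S}(n,k))\otimes_{max}C^*_u(\mathcal{S}(n,k))$ (the max tensor product of the universal C*-covers, each identified with $\mathcal{A}(n,k)$) and then pass to its GNS representation, where commuting representations --- hence commuting PVMs, since $*$-homomorphisms preserve projections --- come for free. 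The extension step is licensed only by the nontrivial theorem that $\mathcal{S}\otimes_c\mathcal{T}$ embeds completely order isomorphically into $C^*_u(\mathcal{S})\otimes_{max}C^*_u(\mathcal{T})$ (see \cite{kavruk2011tensor}), and the containment actually needed is exactly the assertion that every pair of ucp maps with commuting ranges induces a positive functional on that max tensor product. Your substitute --- simultaneous Stinespring dilation plus an ``Arveson-style commutant lifting argument'' --- does not work as stated: the minimal Stinespring dilation of (the Arveson extension of) $\sigma_1$ to $C^*_u(\mathcal{S}(n,k))$ yields a representation whose range need not commute with $\sigma_2(\mathcal{S}(n,k))$, because commutant lifting applies to operators commuting with the range of the extended map on the whole C*-algebra, whereas you only know commutation with $\sigma_1(\mathcal{S}(n,k))$ itself. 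The abstract projection machinery of Theorem~\ref{thm: main thm AR1.0} is not the mechanism that rescues this; it plays no role once one is in the GNS picture.

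There is a second, smaller gap in the nonsignalling case. A bilinear form that is completely positive \emph{in each variable separately} need not be \emph{jointly} completely positive, and it is the latter that characterizes positivity for $\otimes_{max}$; so the reverse direction there is not ``direct'' as you claim. The usual argument identifies states on $\mathcal{S}\otimes_{max}\mathcal{T}$ with completely positive maps from $\mathcal{S}$ into the dual operator system of $\mathcal{T}$ and then verifies that a nonsignalling $p$ produces such a map --- this verification is where the nonsignalling identities are genuinely used, not merely in checking the marginal conditions. Your $\otimes_{min}$/$C_{qa}$ sketch is essentially correct, modulo the standard (but not free) density of finite-dimensionally factorable states on the minimal tensor product.
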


A corollary to Theorem~\ref{thm: lupini et. al.} is that there exist AOU spaces $(\mathcal{V}_r, \mathcal{V}_r^+, e)$ for $r = ns, qc, qa$ each generated by positive operators $Q(a,b|x,y)$ such that $p(a,b|x,y) \in C_r(n,k)$ if and only if there exists a state $\phi:\mathcal{V}_r \to \mathbb{C}$ such that $p(a,b|x,y) = \phi(Q(a,b|x,y))$. Hence, determining the geometry of the positive cones $\mathcal{V}_r^+$ is equivalent to determining the geometry of the set $C_r(n,k)$ by Kadison's Theorem \cite{kadison1951representation}.

We recall the following definition from \cite{araiza2020abstract}.

\begin{definition}[{\cite[Definition 6.2]{araiza2020abstract}}] \label{defn: qc operator system}
Let $n, k \in \N$. We call an operator system $(\mathcal{V},\mathcal{C},e)$ a \textit{nonsignalling} operator system if $\mathcal{V} = \Span \{Q(a,b|x,y)\}_{a,b \in [k]; x,y \in [n]}$ where $Q(a,b|x,y) \in \mathcal C_1$ for each $a,b \in [k]$ and $x,y \in [n]$,  
\[ \sum_{a,b =1}^k Q(a,b|x,y) = e \]
for each $x,y \in [n]$, and the \textit{marginal vectors}
\[ E(a|x) := \sum_{c=1}^k Q(a,c|x,z) \quad \text{ and } \quad F(b|y) := \sum_{d=1}^k Q(d,b|w,y) \]
are well-defined. We call $\mathcal{V}$ a \textit{quantum commuting} operator system if it is a nonsignalling operator system with the additional condition that each $Q(a,b|x,y)$ is an abstract projection in $(\mathcal{V}, \mathcal{C}, e)$.
\end{definition}

The terminology in Definition \ref{defn: qc operator system} is justified by the following theorem.

\begin{theorem}[{\cite[Theorem 6.3]{araiza2020abstract}}] \label{thm: NS and QC operator systems}
A correlation $p \in C(n,k)$ is nonsignalling (resp. quantum commuting) if and only if there exists a nonsignalling (resp. quantum commuting) operator system $\mathcal{V}$ with generators $\{Q(a,b|x,y); a,b \in [k], x,y \in [n] \}$ and a state $\phi$ on $\mathcal{V}$ such that 
\[ p(a,b|x,y) = \phi(Q(a,b|x,y)) \]
for each $a,b \in [k]$ and $x,y \in [n]$.
\end{theorem}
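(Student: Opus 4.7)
The plan is to prove both directions for the nonsignalling and quantum commuting cases in parallel, using the structural theorems from the preceding sections to recover a commuting PVM structure from abstract data.

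For the backward direction (existence of state implies the correlation is of the given type), given a nonsignalling operator system $\mathcal V$ with generators $Q(a,b|x,y)$ and a state $\phi$, I would define $p(a,b|x,y) := \phi(Q(a,b|x,y))$. Positivity of $\phi$ with $Q(a,b|x,y) \in \mathcal C_1$ gives $p(a,b|x,y) \geq 0$; unitality together with $\sum_{a,b} Q(a,b|x,y) = e$ gives $\sum_{a,b} p(a,b|x,y) = 1$; and well-definedness of the marginal vectors $E(a|x)$, $F(b|y)$ in $\mathcal V$ forces $\sum_d p(a,d|x,w) = \phi(E(a|x))$ to be independent of $w$, and similarly for the $B$-marginals. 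If additionally each $Q(a,b|x,y)$ is an abstract projection, I invoke Theorem~\ref{thm: main thm AR1.0} to obtain a unital complete order embedding $\pi:\mathcal V \to B(H)$ sending every $Q(a,b|x,y)$ to a projection on $H$. Setting $E_{x,a} := \pi(E(a|x))$ and $F_{y,b} := \pi(F(b|y))$, I would first show these are genuine projections forming PVMs: since each $\pi(Q(a,b|x,y))$ is a projection and $\sum_b \pi(Q(a,b|x,y)) = E_{x,a}$ is a positive contraction bounded above by $I$, the summands must be pairwise orthogonal subprojections whose sum is itself a projection; the normalization relation $\sum_a E_{x,a} = I$ follows from unitality of $\pi$. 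I would then deduce $\pi(Q(a,b|x,y)) = E_{x,a}F_{y,b} = F_{y,b}E_{x,a}$ from the factorization of $Q(a,b|x,y)$ as a product of its two marginal projections, so that $\phi$ composed with $\pi^{-1}$ extends via Arveson to a state on $C^*(\{E_{x,a}, F_{y,b}\})$ that realizes $p$ as a quantum commuting correlation.

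For the forward direction, starting from an arbitrary nonsignalling correlation $p$, the natural candidate for $\mathcal V$ is the universal nonsignalling operator system obtained by taking the free $*$-vector space on generators $Q(a,b|x,y)$ modulo the relations $\sum_{a,b} Q(a,b|x,y) = e$ together with well-definedness of the marginals, equipped with an operator system structure making each $Q(a,b|x,y)$ a positive contraction; the correlation $p$ itself defines a unital positive functional $Q(a,b|x,y) \mapsto p(a,b|x,y)$ which extends to a state. For the quantum commuting case I instead build $\mathcal V$ from a concrete realization $p(a,b|x,y) = \phi(E_{x,a}F_{y,b})$ by setting $\mathcal V := \Span\{E_{x,a}F_{y,b}\} \subseteq B(H)$ with $Q(a,b|x,y) := E_{x,a}F_{y,b}$; since $E_{x,a}$ and $F_{y,b}$ are commuting projections, each $Q(a,b|x,y)$ is an honest projection on $H$ and hence an abstract projection in $\mathcal V$, while the marginals reduce to $E(a|x) = E_{x,a}$ and $F(b|y) = F_{y,b}$ and are therefore well-defined, and $\phi$ restricts to the required state.

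The principal technical obstacle is the first step of the quantum commuting backward direction: producing genuinely commuting PVMs $\{E_{x,a}\}, \{F_{y,b}\}$ on $H$ starting only from the abstract hypothesis that each product $Q(a,b|x,y)$ is an abstract projection. The abstract-projection condition on $Q(a,b|x,y)$ alone does not a priori guarantee that the marginal vectors $E(a|x)$ and $F(b|y)$ embed as projections, nor that the resulting two families commute and satisfy $Q(a,b|x,y) = E(a|x)F(b|y)$. Overcoming this requires exploiting the partition-of-unity relations $\sum_b Q(a,b|x,y) = E(a|x)$ and $\sum_a Q(a,b|x,y) = F(b|y)$ together with the order-theoretic fact that a finite sum of positive contractions equal to a projection forces each summand to be a subprojection of the whole, with the summands mutually orthogonal; once this is in place, the commutation and factorization identities can be read off from computations with orthogonal projections inside $C_e^*(\mathcal V)$.
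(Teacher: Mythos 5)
This statement is imported verbatim from \cite[Theorem 6.3]{araiza2020abstract} and is not proved in the present paper, so there is no internal proof to compare against; judged on its own terms, your reconstruction is correct and essentially the natural argument. The one step that carries all the weight --- recovering commuting PVMs from the abstract data --- is handled correctly in your second paragraph: Theorem~\ref{thm: main thm AR1.0}(3) gives a \emph{single} embedding $\pi$ under which every $\pi(Q(a,b|x,y))$ is simultaneously a projection, the bound $\sum_{a}\sum_{b}\pi(Q(a,b|x,y))=I$ forces pairwise orthogonality of these projections within each row and column (two projections with $P+Q\le I$ satisfy $PQ=0$), hence $E_{x,a}:=\sum_b \pi(Q(a,b|x,y))$ and $F_{y,b}:=\sum_a \pi(Q(a,b|x,y))$ are PVMs, and expanding the product $E_{x,a}F_{y,b}$ kills every cross term by orthogonality, leaving $\pi(Q(a,b|x,y))$ and yielding commutation for free. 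One caution about your closing paragraph: the principle you state there, that a finite sum of \emph{positive contractions} equal to a projection forces the summands to be orthogonal subprojections, is false as stated (e.g.\ $\tfrac12 P+\tfrac12 P=P$); what your argument actually uses, and what is true, is that a finite sum of \emph{projections} dominated by $I$ has mutually orthogonal summands. Since the simultaneous representation guarantees the summands are projections before you invoke this fact, the proof is unaffected, but the lemma should be phrased accordingly. The forward directions are routine as you describe them, and indeed the universal nonsignalling object you posit is constructed explicitly later in this paper as $(\mathcal V_{ns},D_{ns},e_{ns})$ with $\mathcal D_{ns}=D_{ns}^{max}$.
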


Our goal is to construct a pair of AOU spaces universal with respect to nonsignalling and quantum commuting correlations, in the sense that $p$ is a nonsignalling (resp. quantum commuting) correlation if and only if $p(a,b|x,y) = \phi(Q(a,b|x,y))$ for some state $\phi$, where $\{Q(a,b|x,y): a,b \in [k], x,y \in [n]\}$ contains certain canonical elements of these AOU spaces. We begin by generalizing Definition \ref{defn: qc operator system}.

\begin{definition}
We call a pair $(\mathcal{V}, \{Q(a,b|x,y)\}_{a,b \in [k], x,y \in [n]})$ a \textit{nonsignalling} vector space on $n$ inputs and $k$ outputs if $\mathcal V$ is a vector space spanned by vectors $\{Q(a,b|x,y): a,b \in [k], x,y \in [n] \}$ satisfying \[ \sum_{a,b \in [k]} Q(a,b|x,y) = e \] for some fixed nonzero vector $e$, which we call the \textit{unit} of $\mathcal{V}$, and such that the vectors \[ E(a|x) := \sum_{c=1}^k Q(a,c|x,z) \quad \text{ and } \quad F(b|y) := \sum_{d=1}^k Q(d,b|w,y) \] are well-defined. When the vectors $Q(a,b|x,y)$ are clear from context, we simply call $\mathcal{V}$ a nonsignalling vector space. When $\mathcal{V}$ is nonsignalling, we write $n(\mathcal{V})$ and $k(\mathcal{V})$ for the number of inputs and for the number of outputs, respectively; i.e., $\mathcal{V} = \Span \{Q(a,b|x,y) : a,b \in [k(\mathcal{V})], x,y \in [n(\mathcal{V})] \}.$
\end{definition}

We use the notation $n(\mathcal{V})$ and $k(\mathcal{V})$ to speak of nonsignalling vector spaces without reference to the input and and output parameters $n$ and $k$, only referencing them when needed. However, it should be understood that any nonsignalling vector space $\mathcal{V}$ comes equipped with spanning vectors $\{Q(a,b|x,y)\}_{a,b \in [k], x,y \in [n]}$ and associated input and output parameters $n(\mathcal{V})$ and $k(\mathcal{V})$ whether they are mentioned or not. We remark that a vector space $\mathcal{V}$ could have two spanning sets $\{Q(a,b|x,y)\}$ and $\{P(a,b|x,y)\}$ with possibly different input and output parameters making $\mathcal{V}$ into a nonsignalling vector space. For example, if $E(a|x), F(b|y) \in \mathbb{R}$ are arbitrary positive real numbers satisfying $\sum_{a} E(a|x) = \sum_b F(b|y) = 1$ for all $x,y \in [n]$, then setting $Q(a,b|x,y)=E(a|x)F(b|y)$ makes $\mathbb{C}$ into a nonsignalling vector space.

The following example will play a role in several proofs hereafter.

\begin{example} \label{ex: commutative ns vector space}
Let $n,k \in \mathbb{N}$. Let $D_k$ denote the set of diagonal $k \times k$ matrices. Let $E_a$ denote the diagonal matrix with 1 for its $a$\textsuperscript{th} diagonal entry and zeroes elsewhere. Let $\mathcal{V} \subseteq D_k^{\otimes 2n}$ denote the vector space spanned by the operators $\{Q(a,b|x,y): a,b \in [k], x,y \in [n] \}$ defined by
\[ Q(a,b|x,y):= I_k^{\otimes x-1} \otimes E_{a} \otimes I_k^{\otimes n-x} \otimes I_k^{\otimes y-1} \otimes E_{b} \otimes I_k^{\otimes n-y} \]
where $I_k$ denotes the $k \times k$ identity matrix and $I_k^{\otimes n}$ denotes the $n$-fold tensor product of $I_k$ with itself (understanding $I_k^{\otimes 0} = 1$). Then $\mathcal{V}$ is a nonsignalling vector space with generators $\{Q(a,b|x,y)\}$. Moreover $\dim(\mathcal{V}) = (n(k-1)+1)^2$. To see this, first observe that for each $a,b \in [k]$ and $x,y \in [n]$
\[ E(a|x) = I_k^{\otimes x-1} \otimes E_{a} \otimes I_k^{\otimes n-x} \otimes I_k^{\otimes n} \quad \text{ and } \quad F(b|y) = I_k^{\otimes n} \otimes I_k^{\otimes y-1} \otimes E_{b} \otimes I_k^{\otimes n-y}. \]
Moreover $Q(a,b|x,y) = E(a|x) F(b|y)$, where the product is taken in the algebra $D_k^{\otimes 2n}$. Hence $\mathcal{V} = \mathcal{V}_A \mathcal{V}_B$ where $\mathcal{V}_A$ denotes $\Span \{E(a|x): a \in [k], x \in [n] \}$ and $\mathcal{V}_B$ denotes $\Span \{F(b|y) : b \in [k], y \in [n] \}$. It follows that $\mathcal{V}_A$ is spanned by the set 
\[ S = \{E(a|x): a \in [k-1], x \in [n] \} \cup \{ I_k^{\otimes 2n} \} \]
since $E(k|x) = I_k^{\otimes 2n} - (\sum_{a=1}^{k-1} E(a|x))$. It follows that $S$ is linearly independent and hence a basis for $\mathcal{V}_A$. Thus $\dim(\mathcal{V}_A) = n(k-1) + 1$. A similar observation shows that $\dim(\mathcal{V}_B) = n(k-1) + 1$. Hence
\[ \dim(\mathcal{V}) = \dim(\mathcal{V}_A) \dim(\mathcal{V}_B) = (n(k-1)+1)^2. \]
\end{example}

For the remainder of this section, we establish the existence and describe the structure of universal nonsignalling and quantum commuting operator systems and AOU spaces. Indeed, the existence of such spaces is already implied by Theorem \ref{thm: lupini et. al.} as remarked above, so the details of the constructions will be more important than the observation that such spaces exist. Each of these structures will have a common underlying vector space, which we describe in the next proposition.

\begin{proposition} \label{prop: universal non-signalling vector space}
For each $n,k \in \mathbb{N}$, there exists a nonsignalling vector space $\mathcal{V}_{ns}$ with $ n(\mathcal{V}_{ns})=n$, $k(\mathcal{V}_{ns})=k$, and generators $\{Q_{ns}(a,b|x,y) \}$ satisfying the following universal property: if $\mathcal{W}$ is another nonsignalling vector space with $n(\mathcal{W}_{ns}) = n$, $k(\mathcal{W}_{ns}) = k$, and generators $\{Q(a,b|x,y)\}$, then there exists a linear map $\phi: \mathcal{V}_{ns} \to \mathcal{W}$ satisfying $\phi(Q_{ns}(a,b|x,y)) = Q(a,b|x,y)$. Moreover $\dim(\mathcal{V}_{ns}) = (n(k-1)+1)^2$.
\end{proposition}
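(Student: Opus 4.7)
The plan is to build $\mathcal{V}_{ns}$ as a quotient of a free vector space by the linear relations that define a nonsignalling vector space, and then pin down its dimension by matching an upper bound obtained from an explicit spanning set against a lower bound supplied by the concrete example in Example~\ref{ex: commutative ns vector space}.

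First I would let $F$ be the free complex vector space on the symbols $Q(a,b|x,y)$ for $a,b \in [k]$ and $x,y \in [n]$, and set $e := \sum_{a,b \in [k]} Q(a,b|1,1) \in F$. Let $R \subseteq F$ be the subspace spanned by
\begin{align*}
& \sum_{a,b \in [k]} Q(a,b|x,y) - e, \qquad (x,y) \in [n] \times [n], \\
& \sum_{c \in [k]} Q(a,c|x,z) - \sum_{c \in [k]} Q(a,c|x,z'), \qquad a \in [k],\ x \in [n],\ z,z' \in [n], \\
& \sum_{d \in [k]} Q(d,b|z,y) - \sum_{d \in [k]} Q(d,b|z',y), \qquad b \in [k],\ y \in [n],\ z,z' \in [n],
\end{align*}
and define $\mathcal{V}_{ns} := F/R$, writing $Q_{ns}(a,b|x,y)$ for the image of $Q(a,b|x,y)$ and $e_{ns}$ for the image of $e$. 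By construction $\mathcal{V}_{ns}$ is nonsignalling with $n(\mathcal{V}_{ns})=n$ and $k(\mathcal{V}_{ns})=k$, provided $e_{ns} \neq 0$, which will be a consequence of the lower bound on $\dim(\mathcal{V}_{ns})$ established below. The universal property is then immediate: for any nonsignalling $\mathcal{W}$ with generators $Q_\mathcal{W}(a,b|x,y)$, the linear map $F \to \mathcal{W}$ sending $Q(a,b|x,y) \mapsto Q_\mathcal{W}(a,b|x,y)$ annihilates $R$, because each generator of $R$ is by definition one of the identities holding in any nonsignalling vector space; so it descends to a linear map $\phi: \mathcal{V}_{ns} \to \mathcal{W}$ with the required property.

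For the dimension, I would prove both inequalities separately. For the upper bound, I claim that the set
\[
S := \{Q_{ns}(a,b|x,y) : a,b \in [k-1],\ x,y \in [n]\} \cup \{E_{ns}(a|x) : a \in [k-1],\ x \in [n]\}
\]
\[
\cup\ \{F_{ns}(b|y) : b \in [k-1],\ y \in [n]\} \cup \{e_{ns}\}
\]
spans $\mathcal{V}_{ns}$, where $E_{ns}, F_{ns}$ denote the marginal vectors. One verifies this by expressing the remaining generators using the relations: $Q_{ns}(k,b|x,y) = F_{ns}(b|y) - \sum_{a<k} Q_{ns}(a,b|x,y)$, $Q_{ns}(a,k|x,y) = E_{ns}(a|x) - \sum_{b<k} Q_{ns}(a,b|x,y)$, and then $Q_{ns}(k,k|x,y)$ via the completeness relation. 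A quick count gives $|S| = n^2(k-1)^2 + 2n(k-1) + 1 = (n(k-1)+1)^2$, so $\dim(\mathcal{V}_{ns}) \leq (n(k-1)+1)^2$. For the lower bound, let $\mathcal{W}$ be the commutative nonsignalling vector space from Example~\ref{ex: commutative ns vector space}, which has dimension exactly $(n(k-1)+1)^2$. The universal property furnishes a linear map $\phi: \mathcal{V}_{ns} \to \mathcal{W}$ sending generators to generators, and since $\mathcal{W}$ is spanned by the images of the generators, $\phi$ is surjective; hence $\dim(\mathcal{V}_{ns}) \geq \dim(\mathcal{W}) = (n(k-1)+1)^2$. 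Combining the two inequalities gives the equality, and as a byproduct $\phi$ is an isomorphism onto $\mathcal{W}$, forcing $e_{ns} \neq 0$.

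The only mildly technical step is the spanning argument for $S$: one must chase the substitutions for $Q_{ns}(k,b|x,y)$, $Q_{ns}(a,k|x,y)$, and $Q_{ns}(k,k|x,y)$ carefully and verify that the resulting expressions are consistent across different choices of $(x,y)$, which is exactly what the marginal relations guarantee. Everything else — the universal property, the existence and surjectivity of $\phi$, the verification that Example~\ref{ex: commutative ns vector space} really has dimension $(n(k-1)+1)^2$ — is essentially formal given the relations in the definition of $R$.
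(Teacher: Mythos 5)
Your proposal is correct and follows essentially the same route as the paper: both construct $\mathcal{V}_{ns}$ as the quotient of the free vector space on the symbols $Q(a,b|x,y)$ by the span of the completeness and marginal relations, derive the universal property from the fact that these relations hold in any nonsignalling vector space, and obtain the dimension by combining an explicit spanning set of size $(n(k-1)+1)^2$ with a surjection onto the commutative example of Example~\ref{ex: commutative ns vector space}. Your additional observation that $e_{ns}\neq 0$ must be checked (and follows from the lower bound) is a small point the paper leaves implicit, but it does not change the argument.
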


\begin{proof}
Let $\mathcal{W}$ be any nonsignalling vector space with generators $\{Q(a,b|x,y)\}$ satisfying $n(\mathcal{W}_{ns}) = n$ and $k(\mathcal{W}_{ns}) = k$. Let $\widetilde{V} := \mathbb{C}^{n^2 k^2}$ and denote the canonical basis elements as $\{ \widetilde{Q}(a,b|x,y): a,b \in [k], x,y \in [n]\}$. Let $\phi: \widetilde{V} \to \mathcal W$ be the linear map that takes $\widetilde{Q}(a,b|x,y) \mapsto Q(a,b|x,y)$. Define the vectors
\begin{eqnarray} F(x,y|x',y') & := & \sum_{a,b} \widetilde{Q}(a,b|x,y) - \sum_{a,b} \widetilde{Q}(a,b|x',y'), \nonumber \\
G(a|x,z,w) & := & \sum_c \widetilde{Q}(a,c|x,z) - \sum_c \widetilde{Q}(a,c|x,w), \quad \text{and} \nonumber \\
H(b|y,z,w) & := & \sum_d \widetilde{Q}(d,b|z,y) - \sum_d \widetilde{Q}(d,b|w,y). \nonumber
\end{eqnarray}
Let $J$ be the subspace spanned by the vectors $\{F(x,y|x',y'), G(a|x,z,w), H(b|y,z,w)\}$. Then $J \subseteq \ker \phi$. Set $\mathcal{V}_{ns} := \widetilde{V}/J$ and $Q_{ns}(a,b|x,y) := \widetilde{Q}(a,b|x,y) + J \in \mathcal{V}_{ns}$. Then $\phi$ descends to a linear map $\widetilde{\phi}: \mathcal{V}_{ns} \to \mathcal W$ taking $Q_{ns}(a,b|x,y) \mapsto Q(a,b|x,y)$. Moreover $\mathcal{V}_{ns}$ is a nonsignalling vector space with unit $e := \sum_{a,b} Q(a,b|x,y)$. Since $\widetilde{\phi}$ is linear and since $\mathcal{W}$ is arbitrary, $\mathcal{V}_{ns}$ satisfies the conditions of the theorem.

We now calculate $\dim(\mathcal{V}_{ns})$. It suffices to show that the set
\[ B = \{e_{ns}, Q_{ns}(a,b|x,y), E_{ns}(a|x), F_{ns}(b|y): a,b \in [k-1], x,y \in [n] \} \]
is a basis for $\mathcal{V}_{ns}$ (where $E_{ns}(a|x)$ and $F_{ns}(b|y)$ are the marginal vectors and $e_{ns}$ is the unit of $\mathcal{V}_{ns}$), because this set contains $n^2(k-1)^2 + 2n(k-1) + 1 = (n(k-1)+1)^2$ elements. To see that $B$ spans $\mathcal{V}_{ns}$, we only need to show that $\Span B$ contains the vectors $Q_{ns}(k,b|x,y), Q_{ns}(a,k|x,y)$, and $Q_{ns}(k,k|x,y)$ for each $a,b \in [k-1]$ and $x,y \in [n]$. To this end, fix $b \in [k]$ and $x,y \in [n]$. Then
\[ F_{ns}(b|y) -  \sum_{a=1}^{k-1} Q_{ns}(a,b|x,y)  = \sum_{a=1}^k Q_{ns}(a,b|x,y)  -  \sum_{a=1}^{k-1} Q_{ns}(a,b|x,y)  = Q_{ns}(k,b|x,y). \]
Thus $Q_{ns}(k,b|x,y) \in \Span B$. A similar observation shows that $Q_{ns}(a,k|x,y) \in \Span B$ for each $a \in [k]$ and $x,y \in [n]$. Finally let $x,y \in [n]$. Then
\[ e_{ns} - \sum_{a,b=1}^{k-1} Q_{ns}(a,b|x,y)  -  \sum_{b=1}^{k-1} Q_{ns}(k,b|x,y)  -  \sum_{a=1}^{k-1} Q_{ns}(a,k|x,y)  = Q_{ns}(k,k|x,y). \] 
Thus $Q_{ns}(k,k|x,y) \in \Span B$. We conclude that $\dim(\mathcal{V}_{ns}) \leq |B| = (n(k-1)+1)^2$. To show that $\dim(\mathcal{V}_{ns}) \geq |B|$, let $\mathcal{V}$ denote the nonsignalling vector space from Example \ref{ex: commutative ns vector space}. By the universal property of $\mathcal{V}_{ns}$, there exists a linear surjection from $\mathcal{V}_{ns}$ to $\mathcal{V}$. Hence $\dim(\mathcal{V}_{ns}) \geq \dim(\mathcal{V})= (n(k-1)+1)^2$. Therefore $\dim(\mathcal{V}_{ns}) = (n(k-1)+1)^2 = |B|$, and thus $B$ is a basis for $\mathcal{V}_{ns}$.
\end{proof}

The observation that $\dim(\mathcal{V}_{ns}) = (n(\mathcal{V}_{ns})(k(\mathcal{V}_{ns}) -1)+1)^2$ implies that $\mathcal{V}_{ns}$ is isomorphic to the nonsignalling vector space from Example \ref{ex: commutative ns vector space}, since vector spaces are unique up to dimension.

Having established the structure of the underlying vector space $\mathcal{V}_{ns}$, we seek to construct positive cones $D_{ns}$ and $D_{qc}$ making $(\mathcal{V}_{ns}, D_{ns}, e_{ns})$ and $(\mathcal{V}_{ns}, D_{qc}, e_{ns})$ universal AOU spaces with respect to nonsignalling and quantum commuting correlations, respectively. We will establish the necessary universal properties by constructing corresponding matrix orderings $\mathcal{D}_{ns}$ and $\mathcal{D}_{qc}$ with $(\mathcal{D}_{ns})_1 = D_{ns}$ and $(\mathcal{D}_{qc})_1 = D_{qc}$ so that $(\mathcal{V}_{ns}, \mathcal{D}_{ns}, e_{ns})$ and $(\mathcal{V}_{ns}, \mathcal{D}_{qc}, e_{ns})$ are universal nonsignalling and quantum commuting operator systems, respectively. We begin with the nonsignalling case.

\begin{definition} \label{defn: ns cones}
Let $n,k \in \mathbb{N}$, and let $\mathcal{V}_{ns}$ denote the universal nonsignalling vector space with $n(\mathcal{V}_{ns})=n$ and $k(\mathcal{V}_{ns})=k$. We define
\[ (\mathcal{V}_{ns})_h := \left\{ \sum t(a,b|x,y)Q_{ns}(a,b|x,y) : t(a,b|x,y) \in \mathbb{R} \text{ for all } a,b \in [k], x,y \in [n] \right\} \]
and
\[ D_{ns} := \left\{ \sum t(a,b|x,y)Q_{ns}(a,b|x,y) : t(a,b|x,y) \geq 0 \text{ for all } a,b \in [k], x,y \in [n] \right\} \]
so that $D_{ns}$ is the smallest cone generated by the vectors $\{Q_{ns}(a,b|x,y)\}$. Given an arbitrary element $z = \sum r(a,b|x,y)Q_{ns}(a,b|x,y)$ with $r(a,b|x,y) \in \mathbb{C}$, define $z^* := \sum \overline{r(a,b|x,y)}Q_{ns}(a,b|x,y)$. Let $\mathcal{D}_{ns}$ denote the maximal matrix ordering $\mathcal D_{ns}^{max}$, so that $(\mathcal{D}_{ns}^{max})_1 = D_{ns}$.
\end{definition}

We first establish that $(\mathcal{V}_{ns}, D_{ns}, e_{ns})$ is an AOU space. It will follow that $(\mathcal{V}_{ns}, \mathcal{D}_{ns}, e_{ns})$ is an operator system.

\begin{proposition}
Let $n,k \in \mathbb{N}$, and let $\mathcal{V}_{ns}$ denote the universal nonsignalling vector space with $n(\mathcal{V}_{ns})=n$ and $k(\mathcal{V}_{ns})=k$. Then $(\mathcal{V}_{ns}, D_{ns}, e_{ns})$ is an AOU space.
\end{proposition}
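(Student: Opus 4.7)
The plan is to verify the four defining properties of an AOU space in turn: well-definedness of the $*$-operation, properness of the cone $D_{ns}$, the order-unit property of $e_{ns}$, and the Archimedean property. The $*$-operation descends to $\mathcal{V}_{ns}$ because the relations $F(x,y|x',y')$, $G(a|x,z,w)$, $H(b|y,z,w)$ generating the subspace $J$ in the construction of Proposition \ref{prop: universal non-signalling vector space} all have real coefficients, so $J$ is closed under coefficient conjugation on the standard basis of $\widetilde{V}$, and the $*$-operation is independent of the chosen representation of an element.

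For properness of $D_{ns}$, I would exploit the concrete commutative model from Example \ref{ex: commutative ns vector space}. The universal property of $\mathcal{V}_{ns}$ yields a surjective linear map $\phi: \mathcal{V}_{ns} \to \mathcal{V}$ with $\phi(Q_{ns}(a,b|x,y)) = Q(a,b|x,y)$, and since $\dim(\mathcal{V}_{ns}) = \dim(\mathcal{V}) = (n(k-1)+1)^2$ this $\phi$ is a linear isomorphism. Each $Q(a,b|x,y)$ is a projection in the C*-algebra $D_k^{\otimes 2n}$, so $\phi(D_{ns}) \subseteq (D_k^{\otimes 2n})^+$, a proper cone. Combined with injectivity of $\phi$, this forces $D_{ns} \cap -D_{ns} = \{0\}$.

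For the order-unit property, the key observation is that $e_{ns} \pm Q_{ns}(a,b|x,y) \in D_{ns}$ for every $(a,b,x,y)$: indeed $e_{ns} - Q_{ns}(a,b|x,y) = \sum_{(a',b') \neq (a,b)} Q_{ns}(a',b'|x,y)$ by the defining relation $\sum_{a,b} Q_{ns}(a,b|x,y) = e_{ns}$, and $e_{ns} + Q_{ns}(a,b|x,y)$ is a direct nonnegative combination of generators. Given any hermitian $v = \sum t(a,b|x,y) Q_{ns}(a,b|x,y)$ with $t(a,b|x,y) \in \mathbb{R}$, put $t = \sum_{a,b,x,y} |t(a,b|x,y)|$. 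Splitting the sum according to the sign of each coefficient gives
\[ t e_{ns} - v = \sum_{t(a,b|x,y) > 0} t(a,b|x,y)\bigl(e_{ns} - Q_{ns}(a,b|x,y)\bigr) + \sum_{t(a,b|x,y) < 0} |t(a,b|x,y)|\bigl(e_{ns} + Q_{ns}(a,b|x,y)\bigr), \]
which exhibits $t e_{ns} - v$ as a nonnegative combination of elements of $D_{ns}$ and hence as an element of $D_{ns}$.

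Finally, for the Archimedean property, $D_{ns}$ is by definition the image of the positive orthant $\mathbb{R}_{\geq 0}^{n^2 k^2}$ under a linear map into the finite-dimensional real vector space $(\mathcal{V}_{ns})_h$, and any such finitely generated convex cone is closed. Therefore if $v + \epsilon e_{ns} \in D_{ns}$ for every $\epsilon > 0$, taking $\epsilon \to 0$ in any norm on $(\mathcal{V}_{ns})_h$ places $v$ in $\overline{D_{ns}} = D_{ns}$. Of the four steps, I expect properness to be the main obstacle, since it is the one requiring an external positive representation to rule out hidden cancellations; the remaining verifications are essentially finite-dimensional, soft arguments once the commutative model has been introduced.
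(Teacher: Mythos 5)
Your proof is correct and follows essentially the same route as the paper's: the concrete commutative model of Example~\ref{ex: commutative ns vector space} supplies both the well-definedness of the involution and (via the resulting linear isomorphism into a C*-algebra) the properness of $D_{ns}$, while the order-unit and Archimedean properties are soft finite-dimensional convexity arguments. You are in fact more explicit than the paper, which compresses the latter two verifications into the single claim that $e_{ns}$ is an interior point of the closed, finitely generated cone $D_{ns}$ and leaves properness implicit in the identification with the concrete model.
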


\begin{proof}
First, observe that the map $*:\mathcal{V}_{ns} \to \mathcal{V}_{ns}$ from Definition~\ref{defn: ns cones} is a well-defined involution on $\mathcal{V}_{ns}$. Indeed, this involution coincides with the canonical involution on the nonsignalling vector space $\mathcal{V}$ from Example \ref{ex: commutative ns vector space}. It is clear that $(\mathcal{V}_{ns})_h = \{x \in \mathcal{V}_{ns} : x=x^*\}$ and that $D_{ns} \subseteq (\mathcal{V}_{ns})_h$. To complete the proof, we only need to establish that $e_{ns}$ is an interior point of the cone $D_{ns}$ (c.f. \cite[Lemma 1.7]{Aliprantis2007ConesAD}). Since the extreme points of $D_{ns}$ are precisely the rays $\{t Q_{ns}(a,b|x,y) : t \in [0,\infty)\}$ and since $e_{ns}$ can be expressed as a convex combination of nonzero points along these rays, for example as
\[ e_{ns} = \frac{1}{n^2 k^2} \sum_{a,b \in [k], x,y \in [n]} k^2 Q_{ns}(a,b|x,y), \]
we see that $e_{ns}$ is an interior point for $D_{ns}$. Therefore $(\mathcal{V}_{ns}, D_{ns}, e_{ns})$ is an AOU space.
\end{proof}

The next proposition shows that the operator system $(\mathcal{V}_{ns}, \mathcal{D}_{ns}, e_{ns})$ is universal with respect to all nonsignalling operator systems.

\begin{proposition} \label{prop: D_ns universal}
Let $n,k \in \mathbb{N}$, and let $\mathcal{V}_{ns}$ denote the universal nonsignalling vector space with $n(\mathcal{V}_{ns})=n$ and $k(\mathcal{V}_{ns})=k$. Suppose that $(\mathcal{W}, \mathcal{C}, e)$ is a nonsignalling operator system with $n(\mathcal{W})=n$, $k(\mathcal{W})=k$, and generators $\{Q(a,b|x,y)\}$. Then the map $\pi: (\mathcal{V}_{ns}, \mathcal{D}_{ns}, e_{ns}) \to (\mathcal{W}, \mathcal{C}, e)$ defined by $\pi(Q_{ns}(a,b|x,y)) = Q(a,b|x,y)$ is unital completely positive.
\end{proposition}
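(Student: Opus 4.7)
The plan is to produce $\pi$ in three moves: first get it as a linear map from the vector-space universal property, then verify unitality and positivity on the generating cone, and finally upgrade from positivity to complete positivity using the maximality of $\mathcal{D}_{ns}$.

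First, by Proposition~\ref{prop: universal non-signalling vector space} applied to the nonsignalling vector space underlying $(\mathcal{W}, \mathcal{C}, e)$, there exists a linear map $\pi: \mathcal{V}_{ns} \to \mathcal{W}$ with $\pi(Q_{ns}(a,b|x,y)) = Q(a,b|x,y)$ for all $a,b \in [k]$ and $x,y \in [n]$. Unitality is immediate: since $\mathcal{W}$ is a nonsignalling operator system we have $\sum_{a,b} Q(a,b|x,y) = e$, and by the definition of $e_{ns}$ we have $\sum_{a,b} Q_{ns}(a,b|x,y) = e_{ns}$, so $\pi(e_{ns}) = e$.

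Next, I will verify that $\pi$ is positive as a map $(\mathcal{V}_{ns}, D_{ns}) \to (\mathcal{W}, \mathcal{C}_1)$. Any element of $D_{ns}$ has the form $\sum t(a,b|x,y) Q_{ns}(a,b|x,y)$ with $t(a,b|x,y) \geq 0$, and each $Q(a,b|x,y)$ lies in $\mathcal{C}_1$ by the definition of a nonsignalling operator system, so $\pi$ of such an element is a nonnegative combination of elements of $\mathcal{C}_1$, hence lies in $\mathcal{C}_1$.

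Finally, I will promote positivity to complete positivity by unwinding the definition of the maximal matrix ordering $\mathcal{D}_{ns} = \mathcal{D}_{ns}^{max}$. A generator of $(D_{ns}^{max})_m$ has the form $a^* v a$ where $a \in M_{m',m}$ and $v = v_1 \oplus \cdots \oplus v_{m'}$ with each $v_i \in D_{ns}$; then
\[ \pi_m(a^* v a) = a^* (\pi(v_1) \oplus \cdots \oplus \pi(v_{m'})) a \in \mathcal{C}_m \]
since each $\pi(v_i) \in \mathcal{C}_1$ by the previous paragraph and $\mathcal{C}$ is closed under direct sums and compressions by scalar matrices. Thus $\pi_m((D_{ns}^{max})_m) \subseteq \mathcal{C}_m$ for each $m$. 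Now, for $x \in (\mathcal{D}_{ns})_m = ((D_{ns})^{max})_m$ (the Archimedeanization), for every $\epsilon > 0$ we have $x + \epsilon (I_m \otimes e_{ns}) \in (D_{ns}^{max})_m$, so $\pi_m(x) + \epsilon (I_m \otimes e) \in \mathcal{C}_m$ by the previous step and unitality; since $e$ is an Archimedean matrix order unit for $(\mathcal{W}, \mathcal{C})$, each $\mathcal{C}_m$ is Archimedean closed, and we conclude $\pi_m(x) \in \mathcal{C}_m$. Hence $\pi$ is unital completely positive.

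I do not expect any genuine obstacle here; the only subtlety is to remember that $\mathcal{D}_{ns}$ is the Archimedeanized maximal matrix ordering, so the verification of complete positivity needs the Archimedean-closure step in addition to pushing generators of $D_m^{max}$ through $\pi_m$.
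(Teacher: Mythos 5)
Your proof is correct and follows essentially the same route as the paper: obtain $\pi$ from the universal property of $\mathcal{V}_{ns}$, check unitality and positivity on $D_{ns}$ directly, and then use maximality of $\mathcal{D}_{ns} = D_{ns}^{max}$ to get complete positivity. The only difference is that the paper simply invokes the standard fact that unital positive maps out of a maximal operator system are completely positive, whereas you write out that verification (pushing generators $a^*va$ through $\pi_m$ and handling the Archimedeanization), which is exactly the proof of that fact.
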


\begin{proof}
Since $\mathcal{W}$ is a nonsignalling vector space, the map $\pi$ is well-defined, linear, and unital by Proposition \ref{prop: universal non-signalling vector space}. By Definition \ref{defn: qc operator system}, each operator $Q(a,b|x,y)$ is positive, and thus \[ \sum t(a,b|x,y) Q(a,b|x,y) \geq 0 \] whenever $t(a,b|x,y) \geq 0$ for each $a,b \in [k]$ and $x,y \in [n]$. It follows that $\pi$ is a unital positive map between the AOU spaces $(\mathcal{V}_{ns}, D_{ns}, e_{ns})$ and $(\mathcal{W}, \mathcal{C}_1, e)$. Hence $\pi$ is a unital positive map on the operator system $(\mathcal{V}_{ns}, \mathcal{D}_{ns}, e_{ns})$. Since $\mathcal{D}_{ns} = D_{ns}^{max}$, $\pi$ is completely positive.
\end{proof}

The following theorem shows that the AOU space $(\mathcal{V}_{ns}, D_{ns}, e_{ns})$ is precisely the affine dual of the convex set $C_{ns}(n,k)$ when $n(\mathcal{V}_{ns})=n$ and $k(\mathcal{V}_{ns})=k$.

\begin{theorem} \label{Characterize non-signalling correlations}
Let $n,k \in \mathbb{N}$, let $\mathcal{V}_{ns}$ denote the universal nonsignalling vector space with $n(\mathcal{V}_{ns})=n$ and $k(\mathcal{V}_{ns})=k$, and let $\{Q_{ns}(a,b|x,y)\}$ be the set of generators of $\mathcal V_{ns}.$ If $p = \{p(a,b|x,y)\}$ is a correlation, then $p \in C_{ns}(n,k)$ if and only if there exists a state $\phi:(\mathcal{V}_{ns}, D_{ns}, e_{ns}) \to \mathbb{C}$ such that $p(a,b|x,y) = \phi(Q_{ns}(a,b|x,y))$ for each $a,b \in [k], x,y \in [n]$. 
\end{theorem}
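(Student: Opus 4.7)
The strategy is that the theorem essentially unpacks the universal property of $\mathcal{V}_{ns}$ established in Proposition~\ref{prop: universal non-signalling vector space}, combined with the observation that $D_{ns}$ is by construction the smallest cone containing the generators $Q_{ns}(a,b|x,y)$. Accordingly, I would prove both directions by direct checking, with no deep operator-system input required beyond what has already been set up.

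For the forward direction, suppose $p=\{p(a,b|x,y)\} \in C_{ns}(n,k)$. View $\mathbb{C}$ as a nonsignalling vector space on $n$ inputs and $k$ outputs by taking as generators $Q(a,b|x,y) := p(a,b|x,y) \in \mathbb{C}$ and $e = 1$. The normalization condition for $p$ gives $\sum_{a,b} Q(a,b|x,y) = 1 = e$ for every $x,y$, and the nonsignalling condition for $p$ guarantees that the marginal scalars $\sum_c p(a,c|x,z)$ and $\sum_d p(d,b|z,y)$ are independent of $z$, so the marginal vectors are well-defined in $\mathbb{C}$. By the universal property of $\mathcal{V}_{ns}$ there is a linear map $\phi:\mathcal{V}_{ns} \to \mathbb{C}$ with $\phi(Q_{ns}(a,b|x,y)) = p(a,b|x,y)$. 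Since $p(a,b|x,y) \geq 0$ and every element of $D_{ns}$ has the form $\sum t(a,b|x,y) Q_{ns}(a,b|x,y)$ with $t(a,b|x,y)\geq 0$, the map $\phi$ is positive on $D_{ns}$. Finally $\phi(e_{ns}) = \sum_{a,b} p(a,b|x,y) = 1$, so $\phi$ is a state.

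For the converse, suppose $\phi:(\mathcal{V}_{ns}, D_{ns}, e_{ns}) \to \mathbb{C}$ is a state, and set $p(a,b|x,y) := \phi(Q_{ns}(a,b|x,y))$. Because $Q_{ns}(a,b|x,y) \in D_{ns}$, positivity of $\phi$ gives $p(a,b|x,y) \geq 0$. Using $\sum_{a,b} Q_{ns}(a,b|x,y) = e_{ns}$ and unitality of $\phi$ we obtain $\sum_{a,b} p(a,b|x,y) = \phi(e_{ns}) = 1$ for every $x,y$, so $p$ is a correlation. To verify that $p$ is nonsignalling, observe that in $\mathcal{V}_{ns}$ the marginal vectors $E_{ns}(a|x) = \sum_c Q_{ns}(a,c|x,z)$ and $F_{ns}(b|y)= \sum_d Q_{ns}(d,b|w,y)$ are well-defined, i.e.\ independent of $z$ and $w$ respectively. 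Applying $\phi$ yields $p_A(a|x) = \phi(E_{ns}(a|x))$ and $p_B(b|y) = \phi(F_{ns}(b|y))$, which are therefore independent of $w$ and $z$. Hence $p \in C_{ns}(n,k)$.

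There is no substantial obstacle: the only thing to be careful about is that the nonsignalling/normalization conditions on $p$ are exactly what is needed to invoke the universal property for the forward direction, and that the definition of the marginal vectors in $\mathcal{V}_{ns}$ is exactly what delivers nonsignalling of $p$ in the converse. Everything else is a direct verification using the defining properties of $D_{ns}$ and $e_{ns}$.
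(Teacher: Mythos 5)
Your proof is correct, but it takes a genuinely different and more elementary route than the paper's. The paper proves both directions by passing through the operator-system level: for the forward direction it invokes Theorem~\ref{thm: NS and QC operator systems} to produce a nonsignalling \emph{operator system} $(\mathcal{W},\mathcal{C},e)$ carrying a state realizing $p$, and then pulls that state back along the unital positive map of Proposition~\ref{prop: D_ns universal}; for the converse it observes that a state on $(\mathcal{V}_{ns},D_{ns},e_{ns})$ is a state on the nonsignalling operator system $(\mathcal{V}_{ns},\mathcal{D}_{ns},e_{ns})$ and invokes Theorem~\ref{thm: NS and QC operator systems} again. You bypass that imported theorem entirely: in the forward direction you realize $\mathbb{C}$ itself as a one-dimensional nonsignalling vector space with generators $p(a,b|x,y)$, apply only the vector-space universal property of Proposition~\ref{prop: universal non-signalling vector space}, and check positivity by hand using the explicit description of $D_{ns}$ as the positive span of the generators; in the converse you verify the correlation axioms directly. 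Both arguments are sound. What your version buys is self-containedness at the AOU-space level. What the paper's version buys is a template that transfers verbatim to the quantum commuting case (Theorem~\ref{thm: qc correlation characterization}): there the cone $D_{qc}$ has no explicit positive-combination description, so your "every element of the cone is a nonnegative combination of generators" step has no analogue, and one genuinely needs the detour through operator systems and the universal positivity result. It is worth being aware that your argument, while cleaner here, is special to the nonsignalling cone.
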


\begin{proof}
First suppose $p \in C_{ns}(n,k)$. Then by Theorem~\ref{thm: NS and QC operator systems}, there exists a nonsignalling operator system $(\mathcal{W}, \mathcal{C}, e)$ with generators $\{Q(a,b|x,y)\}$ and a state $\phi: \mathcal{W} \to \mathbb{C}$ such that $p(a,b|x,y) = \phi(Q(a,b|x,y))$. By Proposition \ref{prop: D_ns universal}, the mapping $\pi: \mathcal{V}_{ns} \to \mathcal{W}$ given by $\pi(Q_{ns}(a,b|x,y)) = Q(a,b|x,y)$ is a unital positive map. Hence $\phi \circ \pi: \mathcal{V}_{ns} \to \mathbb{C}$ is a state satisfying $p(a,b|x,y) = \phi \circ \pi(Q_{ns}(a,b|x,y))$.

On the other hand, suppose $\phi: (\mathcal{V}_{ns}, D_{ns}, e_{ns}) \to \mathbb{C}$ is a state. Then $\phi$ is also a state on the operator system $(\mathcal{V}_{ns}, \mathcal{D}_{ns}, e_{ns})$. Since this operator system is nonsignalling, it follows from Theorem \ref{thm: NS and QC operator systems} that $p(a,b|x,y) := \phi(Q_{ns}(a,b|x,y))$ defines a nonsignalling correlation.
\end{proof}

We will now pursue the analogous results for quantum commuting operator systems and quantum commuting correlations.

\begin{definition} \label{defn: qc cones}
Let $n,k \in \mathbb{N}$ and let $\mathcal{V}_{ns}$ denote the universal nonsignalling vector space with $n(\mathcal{V}_{ns})=n$, and $k(\mathcal{V}_{ns})=k$. We define
\[ \mathcal{D}_{qc} := \mathcal{D}_{ns}(p_1, \dots, p_N)^\infty \]
and 
\[ D_{qc} := \mathcal{D}_{ns}(p_1, \dots, p_N)_1^\infty = (\mathcal{D}_{qc})_1, \]
where $N = n^2k^2$ and $\{p_1, p_2, \dots, p_N\}$ is some enumeration of the generators $\{Q_{ns}(a,b|x,y)\}$.
\end{definition}

A priori, the cone $D_{qc}$ and the matrix ordering $\mathcal{D}_{qc}$ depend on a choice of ordering $\{p_1, p_2, \dots, p_N\}$ for the generators $\{Q(a,b|x,y)\}$. However we will see in Corollary~\ref{cor: Embedding of v_ns into B(H)} that the $\mathcal{D}_{qc}$ and $D_{qc}$ are the same regardless of which ordering $\{p_1, p_2, \dots, p_N\}$ is chosen.

In the following proof, we let 
\[ P_i^N(x) := I_{2^{i-1}} \otimes (x \oplus x^{\perp}) \otimes J_{2^{N-i}} \quad \text{and} \quad P_{i,j}^{N,L}(x) := I_{2^{N(j-1)}} \otimes P_i^N(x) \otimes J_{2^{N(L-j)}}  \]
for any positive contraction $x$, where $x^{\perp} := e - x$, $i \in [N]$, and $j \in [L]$.

\begin{proposition} \label{prop: Universal property of D_qc}
Let $n,k \in \mathbb{N}$, and let $H$ be a Hilbert space. Suppose that $\{E_{x,a}\}_{a=1}^k, \{F_{y,b}\}_{b=1}^k \subseteq B(H)$ are projection-valued measures and $E_{x,a} F_{y,b} = F_{y,b} E_{x,a}$ for each $a,b, \in [k],$ and $x,y \in [n]$. Let $\mathcal{W} := \Span \{E_{x,a} F_{y,b} \}$. Then the map $\pi:\mathcal{V}_{ns} \to \mathcal{W}$ defined by $\pi(Q_{ns}(a,b|x,y)) = E_{x,a} F_{y,b}$ is unital completely positive.
\end{proposition}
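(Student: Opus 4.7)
The plan is to first apply Proposition~\ref{prop: D_ns universal} to show that $\pi$ is unital completely positive with respect to the nonsignalling matrix ordering $\mathcal D_{ns}$, and then bootstrap this up to the larger matrix ordering $\mathcal D_{qc}$ by invoking the concrete characterization of projections in Proposition~\ref{prop: multiple projection characterization concrete case}. The single essential ingredient is that each product $E_{x,a}F_{y,b}$ is itself a projection on $H$, since $E_{x,a}$ and $F_{y,b}$ are commuting projections.

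First I would verify that $\mathcal W \subseteq B(H)$, equipped with the operator system structure inherited from $B(H)$ and with generators $\{E_{x,a}F_{y,b}\}$, is a nonsignalling operator system in the sense of Definition~\ref{defn: qc operator system}. The generators are positive in $B(H)$; for any $x,y \in [n]$ we have
\[ \sum_{a,b=1}^{k} E_{x,a}F_{y,b} = \Bigl(\sum_{a=1}^k E_{x,a}\Bigr)\Bigl(\sum_{b=1}^k F_{y,b}\Bigr) = I_H; \]
and the marginals $\sum_c E_{x,a}F_{z,c} = E_{x,a}$ and $\sum_d E_{w,d}F_{y,b} = F_{y,b}$ are independent of $z$ and $w$, respectively. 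Proposition~\ref{prop: D_ns universal} then immediately yields that $\pi:(\mathcal V_{ns},\mathcal D_{ns},e_{ns}) \to \mathcal W$ is unital completely positive, with $\pi(e_{ns}) = I_H$.

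To upgrade this to complete positivity with respect to $\mathcal D_{qc}$, fix $m \in \mathbb{N}$ and $x \in (\mathcal D_{qc})_m$. By Definitions~\ref{defn: union of matrix cones} and \ref{defn: C(p_1,...,p_n)^infty}, for every $\epsilon > 0$ there exists $L \in \mathbb{N}$ such that $(x + \epsilon I_m \otimes e_{ns})\otimes J_{2^{NL}} \in \mathcal D_{ns}(p_1,\dots,p_N)^L_m$. Unpacking Definition~\ref{defn: multi multi compression cones}, this says that for every $N\times L$ matrix $(\epsilon_{i,k})$ of positive scalars there exist positive scalars $(t_{i,k})$ such that
\[ (x + \epsilon I_m \otimes e_{ns})\otimes J_{2^{NL}} + \sum_{i,k}\epsilon_{i,k}\,I_m\otimes P_{i,k}^{N,L} + \sum_{i,k}t_{i,k}\,I_m\otimes Q_{i,k}^{N,L} \in (\mathcal D_{ns})_{m 2^{NL}}. \]
Applying the completely positive map $\pi_{m2^{NL}}$ to this expression produces an element of $B(H^{m 2^{NL}})^+$. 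Since each $\pi(p_i)$ is a projection on $H$, this is precisely the statement that $\pi_m(x + \epsilon I_m \otimes e_{ns})\otimes J_{2^{NL}}$ lies in the concrete cone associated, in the sense of Definition~\ref{defn: multiple projection compression cones}, to the $NL$-tuple of projections obtained by repeating $\pi(p_1),\dots,\pi(p_N)$ a total of $L$ times. Proposition~\ref{prop: multiple projection characterization concrete case}, applied to this $NL$-tuple, then gives $\pi_m(x + \epsilon I_m \otimes e_{ns}) \geq 0$ in $B(H^m)$. Letting $\epsilon \to 0$ and using that the positive cone of $B(H^m)$ is norm-closed, we conclude $\pi_m(x) \geq 0$, so $\pi$ is completely positive on $(\mathcal V_{ns},\mathcal D_{qc},e_{ns})$.

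The main obstacle I anticipate is mostly notational: one needs to carefully recognize the image under $\pi$ of the compression cone condition in $(\mathcal D_{ns})_{m2^{NL}}$ as the concrete cone condition for the projections $\pi(p_i)$ in $B(H)$, so that Proposition~\ref{prop: multiple projection characterization concrete case} applies verbatim. Substantively, the entire argument rests on the single observation that commuting PVMs produce honest projections $E_{x,a}F_{y,b}$, which is precisely what allows us to transfer positivity from the abstract matrix-ordered setting of $\mathcal V_{ns}$ to the concrete Hilbert space setting of $B(H)$.
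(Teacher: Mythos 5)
Your proof is correct and follows essentially the same route as the paper's: establish complete positivity with respect to $\mathcal D_{ns}$ via Proposition~\ref{prop: D_ns universal}, unpack the definition of $(\mathcal D_{qc})_m$ in terms of the cones $\mathcal D_{ns}(p_1,\dots,p_N)^L$, push the resulting positivity condition through $\pi_{m2^{NL}}$ into $B(H^{m2^{NL}})$, and invoke Proposition~\ref{prop: multiple projection characterization concrete case} for the $NL$-tuple of concrete projections before removing $\epsilon$. The only cosmetic difference is that you conclude via norm-closedness of the positive cone of $B(H^m)$ where the paper cites the Archimedean property, which amounts to the same thing here.
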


\begin{proof}
First observe that $\mathcal{W}$ is a quantum commuting operator system with generators $E_{x,a}F_{y,b}$ and unit $I$. By Proposition \ref{prop: D_ns universal}, the mapping $\pi: (\mathcal V_{ns}, \mathcal D_{ns}, e_{ns}) \to \mathcal W,$ defined by $\pi(Q_{ns}(a,b|x,y))=E_{x,a} F_{y,b}$ is unital completely positive with respect to the matrix ordering $\mathcal{D}_{ns}$. It remains to show that $\pi$ is unital completely positive with respect to the matrix ordering $\mathcal{D}_{qc}$.

Let $N := n^2 k^2$, and for each $L > 0$ let $P_{i,j}^{N,L} := P_{i,j}^{N,L}(p_i)$ and $Q_{i,j}^{N,L} := Q_{i,j}^{N,L}(p_i)$, where $\{p_1, p_2, \dots, p_N\}$ is the enumeration of the positive contractions $\{Q_{ns}(a,b|x,y)\}$ used in Definition~\ref{defn: qc cones}. Likewise, let $\widetilde{P}_{i,j}^{N,L} := P_{i,j}^{N,L}(\widetilde{p}_i)$ and $\widetilde{Q}_{i,j}^{N,L} := Q_{i,j}^{N,L}(\widetilde{p}_i)$, where $\{\widetilde{p}_1, \dots, \widetilde{p}_N\}$ denotes the corresponding enumeration of the $n^2 k^2$ projections $\{ E_{x,a} F_{y,b} \}$ in $\mathcal{W}$. 

Let $x \in (\mathcal{D}_{qc})_n$. By the definition of $\mathcal{D}_{qc}$, we see that for every $\epsilon > 0$ there exists an integer $L > 0$ such that for any $N \times L$ matrix $\{\epsilon_{i,j} \}$ of strictly positive real numbers there exists a corresponding $N \times L$ matrix $\{t_{i,j}\}$ of strictly positive real numbers such that
\[ (x + \epsilon I_n \otimes e_{ns}) \otimes J_{2^{NL}} + \sum_{i,j} \epsilon_{i,j} I_n \otimes P_{i,j}^{N,L} + \sum_{i,j} t_{i,j} I_n \otimes Q_{i,j}^{N,L} \in (\mathcal D_{ns})_{n2^{NL}}. \]
By applying $\pi_{n2^{NL}}$ to this expression, Proposition \ref{prop: D_ns universal} implies that
\[ (\pi_n(x) + \epsilon I_n \otimes I_H) \otimes J_{2^{NL}} + \sum_{i,j} \epsilon_{i,j} I_n \otimes \widetilde{P}_{i,j}^{N,L} + \sum_{i,j} t_{i,j} I_n \otimes \widetilde{Q}_{i,j}^{N,L} \in  B(H^{n2^{NL}})^+. \]
Since each $E_{x,a} F_{y,b}$ is a projection, it follows from Proposition \ref{prop: multiple projection characterization concrete case} that $\pi_n(x) + \epsilon I_n \otimes I_H \geq 0$. It then follows from the Archimedean property that $\pi_n(x) \geq 0$. Hence $\pi$ is unital completely positive with respect to $\mathcal{D}_{qc}$.
\end{proof}

We have not yet established that $D_{qc}$ is a proper cone. Using Proposition \ref{prop: Universal property of D_qc}, we can now obtain this result as a corollary.

\begin{corollary} \label{cor: Embedding of v_ns into B(H)}
Let $n,k \in \mathbb{N}$ and let $\mathcal{V}_{ns}$ denote the universal nonsignalling vector space with $n(\mathcal{V}_{ns})=n$ and $k(\mathcal{V}_{ns})=k$. Then $D_{qc}$ is a proper cone. Hence $(\mathcal{V}_{ns}, D_{qc}, e_{ns})$ is an AOU space, and $(\mathcal{V}_{ns}, \mathcal{D}_{qc}, e_{ns})$ is a quantum commuting operator system. Moreover, the matrix ordering $\mathcal{D}_{qc}$ is independent of the choice of enumeration $\{p_1, p_2, \dots, p_N\}$ of the generators $\{Q_{ns}(a,b|x,y)\}$ used in Definition \ref{defn: qc cones}.
\end{corollary}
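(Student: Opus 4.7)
The plan is to first establish that $D_{qc}$ is proper, from which the other structural claims (b) and (c) follow from the results of Section~\ref{sec: multiple projections in AOU spaces}, and then to handle the independence of enumeration as a separate (and more delicate) argument. For properness, I would exploit the concrete commutative realization from Example~\ref{ex: commutative ns vector space}. There the generators $Q(a,b|x,y)$ in $D_k^{\otimes 2n}$ decompose as products $E_{x,a}F_{y,b}$ of commuting projections coming from projection-valued measures in a finite-dimensional abelian C*-algebra. By Proposition~\ref{prop: Universal property of D_qc}, the induced map $\pi: (\mathcal{V}_{ns}, \mathcal{D}_{qc}, e_{ns}) \to \mathcal{V}$ sending $Q_{ns}(a,b|x,y) \mapsto Q(a,b|x,y)$ is unital completely positive. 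The dimension computations in Proposition~\ref{prop: universal non-signalling vector space} and Example~\ref{ex: commutative ns vector space} both yield $(n(k-1)+1)^2$, so $\pi$ must be a linear bijection. For any $x \in D_{qc} \cap -D_{qc}$ we then have $\pi(x) \in \mathcal{V} \cap (D_k^{\otimes 2n})^+ \cap -(D_k^{\otimes 2n})^+ = \{0\}$, and injectivity of $\pi$ forces $x = 0$. Hence $D_{qc}$ is proper.

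With properness in hand, Proposition~\ref{prop: restrict union of multi compression cones} immediately yields that $(\mathcal{V}_{ns}, \mathcal{D}_{qc}, e_{ns})$ is an operator system in which every generator $Q_{ns}(a,b|x,y)$ is an abstract projection, and consequently that $(\mathcal{V}_{ns}, D_{qc}, e_{ns})$ is an AOU space. The defining relations of the universal nonsignalling vector space $\mathcal{V}_{ns}$ (well-defined marginals and $\sum_{a,b} Q_{ns}(a,b|x,y) = e_{ns}$) ensure that $(\mathcal{V}_{ns}, \mathcal{D}_{qc}, e_{ns})$ is a nonsignalling operator system in the sense of Definition~\ref{defn: qc operator system}, and since each generator is now an abstract projection, it is in fact a quantum commuting operator system.

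The main obstacle is the claim that $\mathcal{D}_{qc}$ does not depend on the enumeration $\{p_1, \dots, p_N\}$ chosen for the generators. To handle this, I would fix two enumerations $\sigma$ and $\tau$ yielding matrix orderings $\mathcal{D}_{qc}^\sigma$ and $\mathcal{D}_{qc}^\tau$, and establish mutual containment by passing through a concrete representation on $B(H)$. Since $(\mathcal{V}_{ns}, \mathcal{D}_{qc}^\sigma, e_{ns})$ is a quantum commuting operator system by the previous step, passing to the C*-envelope and applying Gelfand-Naimark (as in the remark after Definition~\ref{Defn: abstract projection}) yields a unital complete order embedding $\iota: (\mathcal{V}_{ns}, \mathcal{D}_{qc}^\sigma, e_{ns}) \to B(H)$ sending each $Q_{ns}(a,b|x,y)$ to a projection. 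Because these projections sum to the identity for every fixed $(x,y)$, they are pairwise orthogonal, and a short calculation shows that the marginals $E_{x,a} := \sum_c \iota(Q_{ns}(a,c|x,y))$ and $F_{y,b} := \sum_d \iota(Q_{ns}(d,b|x,y))$ are commuting projection-valued measures with $E_{x,a} F_{y,b} = \iota(Q_{ns}(a,b|x,y))$. Applying Proposition~\ref{prop: Universal property of D_qc} with the enumeration $\tau$ to this data shows that $\iota$ is also unital completely positive as a map from $(\mathcal{V}_{ns}, \mathcal{D}_{qc}^\tau)$ to $B(H)$, which combined with $\iota$ being a complete order embedding under $\mathcal{D}_{qc}^\sigma$ forces $(\mathcal{D}_{qc}^\tau)_n \subseteq \iota_n^{-1}(B(H)^+) = (\mathcal{D}_{qc}^\sigma)_n$ at every level. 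The symmetric argument completes the proof.
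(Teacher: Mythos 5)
Your proof is correct and follows essentially the same route as the paper: properness of $D_{qc}$ via the positive map onto the commutative model of Example~\ref{ex: commutative ns vector space} together with the dimension count, the structural claims via Proposition~\ref{prop: restrict union of multi compression cones}, and enumeration-independence via Proposition~\ref{prop: Universal property of D_qc} applied in both directions. Your treatment of the last step is in fact slightly more careful than the paper's, since you explicitly build the concrete commuting projection-valued measures from the abstract-projection property before invoking Proposition~\ref{prop: Universal property of D_qc}, a detail the paper leaves implicit.
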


\begin{proof}
We first establish that $(\mathcal{V}_{ns}, \mathcal{D}_{qc}, e_{ns})$ is a quantum commuting operator system. By Corollary \ref{cor: projection envelope of an AOU space} we see that $(\mathcal{V}_{ns}, \mathcal{D}_{qc}, e_{ns})$ is a quantum commuting operator system if and only if the cone $D_{qc}$ is proper. To show that $D_{qc}$ is proper, it suffices to show that there exists a quantum commuting operator system $\mathcal{W} \subseteq B(H)$ with generators $\{Q(a,b|x,y)\}$ such that $\dim(\mathcal{W}) = (n(k-1)+1)^2$. Indeed, by Proposition \ref{prop: Universal property of D_qc} the map $\pi:\mathcal{V}_{ns} \to \mathcal{W}$ defined by $\pi(Q_{ns}(a,b|x,y)) =Q(a,b|x,y)$ is positive, and hence $\pi$ must map $D_{qc} \cap -D_{qc}$ to $\{0\}$. However $\dim(\mathcal{V}_{ns}) = \dim(\mathcal{W})$, so $\pi$ is injective, and hence $D_{qc} \cap -D_{qc} = \{0\}$. For an example of such an operator system, one can take the nonsignalling vector space from Example \ref{ex: commutative ns vector space} equipped with the canonical unit and order structure inherited from its natural embedding into $B(\mathbb{C}^{k^{2n}})$. Since $(\mathcal{V}_{ns}, \mathcal{D}_{qc}, e_{ns})$ is an operator system, it follows that $(\mathcal{V}_{ns}, D_{qc}, e_{ns})$ is an AOU space.

Finally, let $\{p_1, \dots, p_N\}$ and $\{q_1, \dots, q_n\}$ be two enumerations of the set of generators of $\mathcal{V}_{ns}$. Let $\mathcal{D}_{qc}$ and $\mathcal{D}_{qc}'$ denote the corresponding matrix orderings, as described in Definition \ref{defn: qc cones}. By Proposition \ref{prop: Universal property of D_qc}, the identity map is unital completely positive whether regarded as a map from $(\mathcal{V}_{ns}, D_{qc}, e_{ns})$ to $(\mathcal{V}_{ns}, D_{qc}', e_{ns})$ or from $(\mathcal{V}_{ns}, D_{qc}', e_{ns})$ to $(\mathcal{V}_{ns}, D_{qc}, e_{ns})$. It follows that $\mathcal{D}_{qc} = \mathcal{D}_{qc}'$.
\end{proof}

We now show $C_{qc}(n,k)$ is affinely isomorphic to the state space of $(\mathcal{V}_{ns}, D_{qc}, e_{ns})$ when $n(\mathcal{V}_{ns})=n$ and $k(\mathcal{V}_{ns})=k$.

\begin{theorem} \label{thm: qc correlation characterization}
Let $n,k \in \mathbb{N}$ and let $\mathcal{V}_{ns}$ denote the universal nonsignalling vector space with $n(\mathcal{V}_{ns})=n$ and $k(\mathcal{V}_{ns})=k$. If $p = \{p(a,b|x,y)\}$ is a correlation, then $p \in C_{qc}(n,k)$ if and only if there exists a state $\phi:(\mathcal{V}_{ns}, D_{qc}, e_{ns}) \to \mathbb{C}$ such that $p(a,b|x,y) = \phi(Q(a,b|x,y))$ for each $a,b \in [k], x,y \in [n]$.
\end{theorem}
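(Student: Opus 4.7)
The plan is to mirror the structure of the proof of Theorem~\ref{Characterize non-signalling correlations}, replacing the nonsignalling matrix ordering $\mathcal{D}_{ns}$ with the quantum commuting matrix ordering $\mathcal{D}_{qc}$ and invoking the heavier machinery constructed in Sections~\ref{sec: multiple projections in AOU spaces} and the present section. Because $(\mathcal{V}_{ns}, \mathcal{D}_{qc}, e_{ns})$ is already known (by Corollary~\ref{cor: Embedding of v_ns into B(H)}) to be a quantum commuting operator system, the theorem should collapse into two short arguments, one for each implication.

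For the forward implication, I would suppose $p \in C_{qc}(n,k)$. Unpacking the definition, this yields a Hilbert space $H$, a pair of commuting projection-valued measures $\{E_{x,a}\}_{a=1}^k$ and $\{F_{y,b}\}_{b=1}^k$ on $H$, and a state $\psi$ on the generated C*-algebra with $p(a,b|x,y) = \psi(E_{x,a}F_{y,b})$. Letting $\mathcal{W} := \Span\{E_{x,a}F_{y,b}\} \subseteq B(H)$, Proposition~\ref{prop: Universal property of D_qc} gives a unital completely positive map $\pi:(\mathcal{V}_{ns}, \mathcal{D}_{qc}, e_{ns}) \to \mathcal{W}$ sending $Q_{ns}(a,b|x,y) \mapsto E_{x,a}F_{y,b}$, which in particular is unital positive on the AOU space $(\mathcal{V}_{ns}, D_{qc}, e_{ns})$. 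The composition $\phi := \psi \circ \pi$ is then a state on this AOU space satisfying $\phi(Q_{ns}(a,b|x,y)) = p(a,b|x,y)$.

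For the reverse implication, I would suppose $\phi:(\mathcal{V}_{ns}, D_{qc}, e_{ns}) \to \mathbb{C}$ is a state. Since positive functionals into $\mathbb{C}$ are automatically completely positive, $\phi$ is also a state on the operator system $(\mathcal{V}_{ns}, \mathcal{D}_{qc}, e_{ns})$. By Corollary~\ref{cor: Embedding of v_ns into B(H)}, this operator system is a quantum commuting operator system in the sense of Definition~\ref{defn: qc operator system}; in particular each generator $Q_{ns}(a,b|x,y)$ is an abstract projection. Applying the ``if'' direction of Theorem~\ref{thm: NS and QC operator systems}, the correlation defined by $p(a,b|x,y) := \phi(Q_{ns}(a,b|x,y))$ lies in $C_{qc}(n,k)$.

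I expect no real obstacle at this stage: the conceptual difficulty has been absorbed by the earlier work. The key points that carry all the weight are (i) the universal mapping property of $\mathcal{D}_{qc}$ into any concrete realization with commuting projection-valued measures (Proposition~\ref{prop: Universal property of D_qc}), which relies on Proposition~\ref{prop: multiple projection characterization concrete case} applied to the inductive-limit construction, and (ii) the fact that $D_{qc}$ is proper and hence $(\mathcal{V}_{ns}, \mathcal{D}_{qc}, e_{ns})$ is a genuine quantum commuting operator system, established in Corollary~\ref{cor: Embedding of v_ns into B(H)} by exhibiting the commutative realization from Example~\ref{ex: commutative ns vector space}. Once those are in hand, the present theorem is a two-line consequence plus Theorem~\ref{thm: NS and QC operator systems}.
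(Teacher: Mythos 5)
Your proposal is correct and follows essentially the same route as the paper's own proof: the forward implication via the universal property in Proposition~\ref{prop: Universal property of D_qc}, and the reverse implication via Corollary~\ref{cor: Embedding of v_ns into B(H)} together with Theorem~\ref{thm: NS and QC operator systems}. The only cosmetic difference is that you unpack the definition of $C_{qc}(n,k)$ directly to obtain a concrete commuting realization $\mathcal{W} \subseteq B(H)$ before applying Proposition~\ref{prop: Universal property of D_qc}, whereas the paper first routes through Theorem~\ref{thm: NS and QC operator systems} to obtain an abstract quantum commuting operator system; both land on the same application of the universal property.
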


\begin{proof}
First suppose that $p \in C_{qc}(n,k)$. Then by Theorem~\ref{thm: NS and QC operator systems}, there exists a quantum commuting operator system $(\mathcal{W}, \mathcal{C}, e)$ with generators $\{Q(a,b|x,y)\}$ and a state $\phi: \mathcal{W} \to \mathbb{C}$ such that $p(a,b|x,y) = \phi(Q(a,b|x,y))$. By Proposition~\ref{prop: Universal property of D_qc}, the mapping $\pi: \mathcal{V}_{ns} \to \mathcal{W}$ given by $\pi(Q_{ns}(a,b|x,y)) = Q(a,b|x,y)$ is a unital positive map with respect to the cone $D_{qc}$. Hence $\phi \circ \pi: \mathcal{V}_{ns} \to \mathbb{C}$ is a state on $(\mathcal{V}_{ns}, D_{qc}, e_{ns})$ satisfying $p(a,b|x,y) = \phi \circ \pi(Q_{ns}(a,b|x,y))$.

Conversely, suppose that $\phi: (\mathcal{V}_{ns}, D_{qc}, e_{ns}) \to \mathbb{C}$ is a state. Then $\phi$ is also a state on the operator system $(\mathcal{V}_{ns}, \mathcal{D}_{qc}, e_{ns})$. Since this operator system is quantum commuting, it follows from Theorem~\ref{thm: NS and QC operator systems} that $p(a,b|x,y) := \phi(Q_{ns}(a,b|x,y))$ defines a quantum commuting correlation.
\end{proof}

\begin{remark}
We conclude with some remarks on generalizations of the above ideas that immediately follow from our work. First, our notions of nonsignalling vector spaces and nonsignalling operator systems are readily generalized to the multipartite situation, where one considers correlations of the form $p(a_1,a_2,\dots,a_n|x_1, x_2, \dots, x_n)$. These correlations describe the scenario where $n$ spacially distinct parties each perform a measurement on their respective quantum system. One may then consider quantum commuting correlations arising from $n$ mutually commuting C*-algebras in a common Hilbert space. To describe these correlations using our ideas, we redefine nonsignalling operator systems to be generated by operators $\{Q(a_1, \dots, a_n| x_1, \dots, x_n)\}$ satisfying
\[ \sum_{a_1, \dots, a_n} Q(a_1, \dots, a_n| x_1, \dots, x_n) = e \] and having well-defined marginal operators \[ E_i(a_i|x_i) = \sum_{a_j, j \neq i} Q(a_1, \dots, a_n|x_1, \dots, x_n). \]
Requiring each generator to be an abstract projection yields a quantum commuting operator system. The constructions of universal nonsignalling and quantum commuting operator systems proceeds in the same manner as the bipartite case. Our work also readily generalizes to the setting of matricial correlation sets, as described in \cite{ozawa2013connes}. Indeed it is straightforward to see that the matrix affine dual of the matricial nonsignalling and quantum commuting correlations are precisely the operator systems $(\mathcal V_{ns}, \mathcal{D}_{ns}, e_{ns})$ and $(\mathcal V_{ns}, \mathcal{D}_{qc}, e_{ns})$, respectively, using Webster-Winkler duality \cite{webster1999krein}.
\end{remark} 

\bibliographystyle{plain}
\bibliography{References}

\end{document}